\newtheorem{theorem}{Theorem}
\newtheorem{proposition}{Proposition}
\newtheorem{lemma}{Lemma}
\newtheorem{corollary}{Corollary}
\newtheorem{assumption}{Assumption}
\newcommand{\bZ}{\mathbf{Z}} 
\newcommand{\bz}{\mathbf{z}} 
\newcommand{\bX}{\mathbf{X}}
\newcommand{\bx}{\mathbf{x}} 
\newcommand{\bB}{\mathbf{B}} 
\newcommand{\bb}{\mathbf{b}} 
\newcommand{\bw}{\mathbf{w}} 
\newcommand{\bg}{\mathbf{g}} 
\newcommand{\bH}{\mathbf{H}} 
\newcommand{\bq}{\mathbf{q}}
\newcommand{\bgamma}{\boldsymbol{\gamma}} 
\newcommand{\blambda}{\boldsymbol{\lambda}} 
\newcommand{\btheta}{\boldsymbol{\theta}} 
\newcommand{\balpha}{\boldsymbol{\alpha}} 
\newcommand{\EE}{\mathbb{E}} 
\newcommand{\PP}{\mathbb{P}} 
\newcommand{\cL}{\mathcal{L}} 
\newcommand{\cZ}{\mathcal{Z}} 
\newcommand{\bKK}{[K]_{K'}} 
\newcommand{\cK}{\mathcal{K}} 
\newcommand{\cN}{\mathcal{N}} 
\renewcommand{\star}{{\usefont{OML}{cmm}{m}{it}\symbol{63}}}
\def\ind{\begin{picture}(9,8)
		\put(0,0){\line(1,0){9}}
		\put(3,0){\line(0,1){8}}
		\put(6,0){\line(0,1){8}}
	\end{picture}
}
\begin{document}

	\def\spacingset#1{\renewcommand{\baselinestretch}%
		{#1}\small\normalsize} \spacingset{1}


		\title{\bf Balancing Weights for Causal Inference in Observational Factorial Studies}
		\author{Ruoqi Yu\\
			Department of Statistics, University of Illinois Urbana-Champaign\\
			and \\
			Peng Ding\\
			Department of Statistics, University of California, Berkeley}
		\maketitle

	\bigskip
	\begin{abstract}
		Many scientific questions in biomedical, environmental, and psychological research involve understanding the effects of multiple factors on outcomes. While factorial experiments are ideal for this purpose, 
		randomized controlled treatment assignment is generally infeasible in many empirical studies. Therefore, investigators must rely on observational data, where drawing reliable causal inferences for multiple factors remains challenging. As the number of treatment combinations grows exponentially with the number of factors, some treatment combinations can be rare or missing by chance in observed data, further complicating factorial effects estimation.
		To address these challenges, we propose a novel weighting method tailored to observational studies with multiple factors. Our approach uses weighted observational data to emulate a randomized factorial experiment, enabling simultaneous estimation of the effects of multiple factors and their interactions. Our investigations reveal a crucial nuance: achieving balance among covariates, as in single-factor scenarios, is necessary but insufficient for unbiasedly estimating factorial effects; balancing the factors is also essential in multi-factor settings. Moreover, we extend our weighting method to handle missing treatment combinations in observed data. Finally, we study the asymptotic behavior of the new weighting estimators and propose a consistent variance estimator, providing reliable inferences on factorial effects in observational studies.
		
	\end{abstract}
	
	\noindent%
	{\it Keywords:}  Covariate balance, factorial design, multiple factors, observational study, weighting.
	\vfill
	
	\newpage

\spacingset{1.5} 

	\section{Introduction}
	Assessing the effects of multiple factors is crucial in various scientific fields, such as biomedical research, environmental sciences, and psychology, as it helps shape decision-making and policy development. For instance, \cite{berenson1998association} used the Bogalusa Heart Study to investigate the effect of multiple cardiovascular risk factors on atherosclerosis in children and young adults, and \cite{rillig2019role} studied how ten global change factors affect soil properties, soil processes, and microbial biodiversity. 
	A simple and common approach to evaluating the effects of multiple factors is to consider one factor at a time. However, this method requires considerable time and resources to estimate all effects of interest with the same precision, as investigators need to repeat the procedure multiple times. Moreover, evaluating one factor at a time limits our understanding of how the effect of one factor depends on other factors, 
	which is critical in complex systems. For example, effective treatments of diseases such as sepsis, dementia, and stroke require combining treatments to target multiple components in cellular pathways \citep{berry2015platform}. 
	
	To understand both the main effects and interactions of multiple factors, a powerful tool is the randomized factorial experiment involving random assignments of all possible treatment combinations to units 
	\citep{wu2011experiments}. 
	Extensive research has been conducted on randomized factorial experiments 
	\citep{box1978statistics,dasgupta2015causal,dong2015using,branson2016improving,lu2016covariate,mukerjee2018using,zhao2018randomization,egami2018causal,kuhn2019simulation,li2020rerandomization,zhao2022regression,pashley2022causal,zhao2023covariate}.
	Although randomized experiments are considered the gold standard for estimating treatment effects, they are not always feasible due to ethical or practical considerations, forcing investigators to rely on observational data. However, the non-randomized treatment assignment mechanism in observational studies poses challenges for drawing reliable causal inferences, necessitating the removal of confounding due to observed covariates. 
	Therefore, it is essential to explore how we can design observational studies (without examining any outcomes of interest) to emulate a randomized factorial experiment to estimate the effects of multiple factors and their interactions simultaneously, which is key to ensuring the objectivity of causal conclusions \citep{rubin2008objective}.
	
	Weighting is a popular approach for designing observational studies to emulate randomized experiments by removing confounding. 
	Weighting aims to assign new weights to the individuals so that the weighted treatment groups are comparable in observed covariate distributions. 
	Much of the literature on weighting has focused on one binary treatment. Widely used methods include inverse-probability weighting 
	\citep{robins2000marginal,hirano2001estimation,hirano2003efficient} and balancing weighting 
	\citep{hainmueller2012entropy,zubizarreta2015stable,chan2016globally,li2018balancing,wong2018kernel,zhao2019covariate,wang2020minimal,bruns2022outcome,cohn2023balancing,fan2023optimal,yu2020treatment}.
	A straightforward approach to generalize the weighting methods for one factor to multiple factors is to view the treatment combination as a single treatment with multiple levels, where researchers have made progress in recent years 
	\citep{mccaffrey2013tutorial,fong2014covariate,li2019propensity,de2020direct,shu2023robust}. 
	For a review of the relevant literature, see \cite{lopez2017estimation}. By creating similar treatment combination groups, we can estimate all factorial effects of these factors, including all high-order interactions. 
	This idea works well when the number of factors is relatively small and all treatment combinations are not rare. However, challenges arise with many factors or some rare treatment combinations. Firstly, this approach can be computationally challenging due to the exponential growth of the number of treatment combinations as the number of factors increases.
	Moreover, when the number of factors is large, some treatment combinations are likely to be rare or empty, making this approach infeasible. 
	When estimating all factorial effects is impossible, can we still draw inferences for the relatively important ones?
	
	To answer this question, it is crucial to first determine which factorial effects are essential to estimate. Generally, low-order effects are considered more important than high-order effects, because the high-order effects are often believed to be smaller.  
	This is known as the \textit{effect hierarchy principle}. 
	Additionally, factorial designs typically assume the \textit{effect sparsity principle}, meaning that only a small number of relatively important effects are present \citep{box1986analysis}. Thus, it is generally reasonable for investigators to assume that high-order interactions are negligible (i.e., essentially zero) and focus on estimating main effects and low-order interactions. 
	For more discussions on these principles, see \cite{wu2011experiments}. 
	Although these effect principles originated from agricultural and industrial experiments, they are applicable to many other research fields. Still, investigators should validate their relevance using domain-specific knowledge when applying them to observational factorial studies. 
	With these guiding principles, the next question is: how can we design an observational study to estimate the main effects and low-order interactions?

	Towards this goal, we introduce a novel weighting approach designed explicitly for observational studies with multiple factors, as presented in \S\ref{sec:method}. By employing a weighted sample to emulate a randomized factorial experiment, we can estimate the effects of multiple factors and their interactions simultaneously using the same set of weights without the need to design observational studies repeatedly for different factorial effects. Similar to the traditional balance weighting methods for a single binary treatment, balancing the observed covariates is necessary to obtain unbiased estimates of factorial effects. However, our investigations suggest that researchers must also balance the factors in the presence of multiple factors. We also study the asymptotic behavior of the new weighting estimators and propose a consistent variance estimator, which enables downstream statistical inference.
	The proposed weighting method easily generalizes to design incomplete observational factorial studies with incompletely observed treatment combinations, as discussed in \S\ref{sec:extension}. This extension is important when some treatment combination groups are empty, making it possible to deal with a large number of factors. Finally, we evaluate the performance of the proposed method via simulation in \S\ref{sec:sim} and apply it 
	to estimate the factorial effects of four volatile organic compounds on tachycardia using an observational study in \S\ref{sec:data}.
	

	\section{Problem Setup and Weighting Framework} \label{sec:basic}
	\subsection{Problem Setting and Notation}\label{sec:notation}
	Suppose we have $K\geq2$ binary factors denoted by $z_k\in\{-1,+1\}$ for $k=1,...,K$. Then there are $Q=2^K$ possible treatment combinations $\mathbf{z}=(z_1,\dots,z_K)$. 
	Let $\mathcal{Z}=\{-1,+1\}^K$ denote the set of possible treatment combinations. 
	Our goal is to estimate the effects of multiple factors, including their interactions, under a full factorial design. 
	To define the causal estimands, we follow the potential outcome framework for factorial designs proposed by \cite{dasgupta2015causal}. 
	We denote the potential outcome if assigned to treatment combination $\mathbf{z}\in\mathcal{Z}$ as $Y(\mathbf{z})$. 
	Each individual has $Q$ potential outcomes; we denote the column vector of potential outcomes as $\mathbf{Y}=\left(Y(\mathbf{z})\right)_{\mathbf{z}\in\mathcal{Z}}$. Also, let $\EE[\mathbf{Y}]=\left(\EE[Y(\mathbf{z})]\right)_{\mathbf{z}\in\mathcal{Z}}$ denote the corresponding column vector of expectations.

	We define the causal estimands as in a randomized factorial experiment with a balanced design, where the treatment assignment is not related to any individual characteristics and each unit has an equal probability of receiving any treatment combination. Specifically, the main effect of each treatment $z_k$, $\tau_k$, is defined as the comparison between the average potential outcomes of receiving treatment $z_k$  and the average potential outcomes of not receiving it, uniformly averaged over all treatment combinations of other factors. More precisely, we can write $\tau_k$ as a contrast of the expected potential outcomes $\EE[\mathbf{Y}]$:
	$$\tau_k=\frac{1}{2^{K-1}}\mathbf{g}_k^\textup{T}\EE[\mathbf{Y}],$$
	where the contrast vector $\mathbf{g}_k=(g_{k\mathbf{z}})_{\mathbf{z}\in\mathcal{Z}}$ is a $2^K$-dimensional vector with half $+1$s and half $-1$s, indicating whether the treatment combination $\mathbf{z}$ has $z_k=+1$ or $z_k=-1$. 
	Additionally, the interaction effect of two factors $z_k$ and $z_{k'}$, denoted as $\tau_{k,k'}$, 
	measures the difference between the average effect of receiving $z_k$ 
	versus not receiving it 
	when $z_{k'}$ is received ($z_{k'}=+1$)
	or not ($z_{k'}=-1$), uniformly averaged over all treatment combinations of the other factors. 
	Following \cite{dasgupta2015causal},
	we define the corresponding contrast vector as $\mathbf{g}_{k,k'}=\mathbf{g}_k\circ \mathbf{g}_{k'},$ 
	where $\circ$ denotes the component-wise product. Then, 
	this interaction effect can be written as $$\tau_{k,k'}=\frac{1}{2^{K-1}}\mathbf{g}_{k,k'}^\textup{T}\EE[\mathbf{Y}].$$ 
	We can define higher-order interactions in a similar way, for which the contrast vector equals to the component-wise product of the corresponding contrast vectors for the main effects.
	In general, with $[K] = \{1,\dots,K\}$ denoting the set of indices for all factors, let $\tau_{\cK}$ denote the factorial effect for treatments in ${\cK}\subseteq[K]$ and let $\mathbf{g}_{\cK}=(g_{{\cK}\mathbf{z}})_{\mathbf{z}\in\mathcal{Z}}$ denote the corresponding contrast vector.
	
	To estimate the factorial effects, suppose there are $N$ subjects in the observational study, with $N_\mathbf{z}$ individuals with treatment combination $\mathbf{z}$. 
	We assume $(\mathbf{Z}_i, \mathbf{X}_i, \mathbf{Y}_i)$, $i=1,\dots,N$, are independent and identically distributed, where $\mathbf{Z}_i\in\mathcal{Z}$ denotes the $K$-dimensional received factors, $\mathbf{X}_i=(X_{i1},\dots,X_{iD})$ denotes the $D$-dimensional covariate with density $f(\mathbf{X})$, and $\mathbf{Y}_i$ denotes the $Q$-dimensional column vector of potential outcomes for unit $i$. 
	Additionally, we assume the \textit{ignorability} of the treatment assignment conditional on the observed covariates $\mathbf{X}$: $0<\PP( \mathbf{Z}_i =\bz\mid\bX_i)<1$ and $Y_i(\mathbf{z})\ \ind \ \bZ_i \mid\mathbf{X}_i$ for all $\mathbf{z}\in\mathcal{Z}$ \citep{rosenbaum1983central}.
	Since the covariate distribution may differ across treatment groups in an observational study, let $f_\mathbf{z}(\mathbf{X})=f(\mathbf{X}\mid \mathbf{Z}=\mathbf{z})$ denote the covariate distribution in group $\mathbf{z}$. 
	We can write the observed outcome for unit $i$ as $Y_i^\textup{obs}=Y_i(\bZ_i)=\sum_{\mathbf{z}\in\mathcal{Z}}I(\mathbf{Z}_i=\mathbf{z})Y_i(\mathbf{z})$, 
	where $I(\mathbf{Z}_i=\mathbf{z})$ is an indicator for whether unit $i$ received treatment combination $\mathbf{z}$. 
	
	Although each individual has $Q$ potential outcomes, we cannot observe the potential outcome $Y_i(\mathbf{z})$ if $\mathbf{Z}_i\neq\mathbf{z}$. Moreover, the treatment assignment is not randomized, and may depend on observed covariates $\mathbf{X}_i$. With these challenges, how can we estimate all factorial effects of interest? 
	
	\subsection{Weighting for Observational Factorial Studies}

	To estimate the treatment effects of multiple factors and their interactions under a factorial structure, we use the weighting framework, which is a popular tool to deal with one binary factor ($K=1$ in our setting). As discussed in Section~\ref{sec:notation}, each factorial effect $\tau_{\cK}$ can be written as a contrast of all expected potential outcomes, with the corresponding contrast vector $\mathbf{g}_{\cK}=(g_{{\cK}\mathbf{z}})_{\mathbf{z}\in\mathcal{Z}}$. 
	Following \cite{han2021contrast}, we write $g_{{\cK}\mathbf{z}}$, the contrast coefficient of the expected potential outcome under treatment combination $\mathbf{z}$, as $g_{{\cK}\mathbf{z}}=g_{{\cK}\mathbf{z}}^+-g_{{\cK}\mathbf{z}}^-$, where $g_{{\cK}\mathbf{z}}^+=\max(g_{{\cK}\mathbf{z}},0)$ and $g_{{\cK}\mathbf{z}}^-=\max(-g_{{\cK}\mathbf{z}},0)$. 
	We decompose the factorial effect $\tau_{\cK}$ into two parts $\tau_{\cK}=\tau^+_{\cK}-\tau^-_{\cK}$, where
	$$\tau^+_{\cK}=\frac{1}{2^{K-1}}\sum_{\mathbf{z}\in\mathcal{Z}}g_{{\cK}\mathbf{z}}^+\mathbb{E}[Y(\mathbf{z})]\quad \textrm{and} \quad \tau^-_{\cK}=\frac{1}{2^{K-1}}\sum_{\mathbf{z}\in\mathcal{Z}}g_{{\cK}\mathbf{z}}^-\mathbb{E}[Y(\mathbf{z})]$$
	correspond to the average expected potential outcomes of the positive and negative parts, respectively.
	Estimating $\tau^+_{\cK}$ and $\tau^-_{\cK}$ follows the same logic, so by symmetry, we mainly focus on estimating $\tau^+_{\cK}$ in the following discussion.
	
	Under ignorability, we have 
	\begin{align*}
		\tau^+_{\cK}=\frac{1}{2^{K-1}}\sum_{\mathbf{z}\in\mathcal{Z}}g_{{\cK}\mathbf{z}}^+\int\EE[Y(\mathbf{z})\mid\mathbf{X}]f(\mathbf{X})\textup{d}\mathbf{X}
		=\frac{1}{2^{K-1}}\sum_{\mathbf{z}\in\mathcal{Z}}g_{{\cK}\mathbf{z}}^+\int\EE[Y\mid\bZ=\bz,\mathbf{X}]f(\mathbf{X})\textup{d}\mathbf{X}.
	\end{align*}
	Estimating the target parameter $\tau^+_{\cK}$	is challenging because we do not observe $Y(\mathbf{z})$ for individuals from the population covariate distribution $f(\mathbf{X})$. With $N_\mathbf{z}$ individuals with outcome $Y(\mathbf{z})$ and covariate distribution $f_\mathbf{z}(\mathbf{X})=f(\mathbf{X}\mid\mathbf{Z}=\mathbf{z})$,  we introduce an additional term -- the weighting function $w_\mathbf{z}(\mathbf{X})=f(\bX)/f_{\bz}(\bX)$ -- so that 
	$$\tau^+_{\cK}
	=\frac{1}{2^{K-1}}\sum_{\mathbf{z}\in\mathcal{Z}}g_{{\cK}\mathbf{z}}^+\int\EE[Y\mid\mathbf{Z}=\mathbf{z},\mathbf{X}]w_\mathbf{z}(\mathbf{X})f_\mathbf{z}(\mathbf{X})\textup{d}\mathbf{X}.$$
	Consider the weighting estimator 
	\begin{equation}\label{eq:w}
		\widehat\tau^+_{\cK}=\frac{1}{2^{K-1}}\sum_{\mathbf{z}\in\mathcal{Z}}g_{{\cK}\mathbf{z}}^+\left[\frac{1}{N_\mathbf{z}}\sum_{i:\mathbf{Z}_i=\mathbf{z}}Y_i^\textup{obs}w_\mathbf{z}(\mathbf{X}_i)\right].
	\end{equation}
	Given the weights, the weighted average $\widehat\tau^+_{\cK}$ is the sample analogue of $\tau^+_{\cK}$.
	
	From equation~\eqref{eq:w}, we need to assign a weight of $w_{\mathbf{Z}_i}(\mathbf{X}_i)$ to the observed outcome of individual $i$, determined by the specific treatment combination they received. 
	Let $w_i=Nw_{\mathbf{Z}_i}(\mathbf{X}_i)/(2^{K-1}N_{\mathbf{Z}_i})$ denote the normalized weights, and let $A_{i{\cK}}^+=\sum_{\mathbf{z}\in\mathcal{Z}}g_{{\cK}\mathbf{z}}^+I(\mathbf{Z}_i=\mathbf{z})$ and $A_{i{\cK}}^-=\sum_{\mathbf{z}\in\mathcal{Z}}g_{{\cK}\mathbf{z}}^-I(\mathbf{Z}_i=\mathbf{z})=1-A_{i{\cK}}^+$ denote whether individual $i$ belongs to the positive or negative part of contrast $\mathbf{g}_{\cK}$, respectively.  We estimate the factorial effect $\tau_{\cK}$ with the weighting estimator 
	\begin{equation}\label{eq:weighting}
		\widehat\tau_{\cK}=\widehat\tau^+_{\cK}-\widehat\tau^-_{\cK}, \quad \textrm{ where }\widehat\tau_{\cK}^\Omega=\frac{1}{N}\sum_{i=1}^N w_i A_{i{\cK}}^\Omega Y_i^\textup{obs}, \textrm{ for } \Omega=+,-.
	\end{equation}
	How can we estimate the weights $w_i$ so that the same weights can be used to estimate multiple factorial effects simultaneously?
	
	\subsection{A Naive Weighting Method for Estimating All Factorial Effects}
	One straightforward approach to estimating the weights is to transform the problem with multiple factors into a problem with one multi-valued treatment by directly focusing on the treatment combinations. In doing so, the goal is to estimate $\EE[Y_i(\bz)]$ for all $\bz\in\cZ$. 
	
	A common choice of weights is based on the generalized propensity score 
	$\pi_{\bz}(\bX)=\PP(\bZ=\bz\mid\bX)$, where $\bz\in\cZ$ and $\sum_{\bz\in\cZ} \pi_{\bz}(\bX) =1$ \citep{imbens2000role,imai2004causal}. Specifically, the weights are chosen as the inverse probability of receiving the treatment combination, i.e., $w_i\propto 1/\pi_{\bZ_i}(\bX_i)$. 
	As the generalized propensity score is typically unknown, it needs to be estimated in practice, for example, using multinomial logistic regression. 
	However, this estimation process can lead to significant biases in factorial effects estimation  due to the misspecification of the generalized propensity score model \citep{kang2007demystifying}.
	In addition, a large number of factors and extreme estimated probabilities can result in unstable estimates of the treatment effects. 
	
	In recent years, balancing weighting methods have been proposed to improve the robustness of weighting methods \citep{fong2014covariate,li2019propensity,de2020direct,ben2023multilevel,shu2023robust}. Rather than estimating the generalized propensity score $\pi_{\bz}(\bX)$, these methods estimate weights 
	$w_i$ directly by comparing 
	some basis functions of the covariates \citep{fong2014covariate}, $h_s(\bX), s=1,\dots,S$, which is usually implemented as  $$\left|{1\over N}\sum_{i=1}^N w_iI(\bZ_i=\bz)h_s(\bX_i)-{1\over N}\sum_{i=1}^N h_s(\bX_i)\right|\le \Delta_s, \textrm{ for }\bz\in\cZ \textrm{ and } s=1,\ldots,S,$$
	where the $\Delta_s$'s are the tolerances of covariate imbalance.
	Ideally, to remove all biases in factorial effects estimation, we would set the tolerance levels $\Delta_s$ to zero, requiring exact balance among treatment combination groups. However, it may result in an infeasible optimization problem when the sample size is not large enough. Specifically,
	treatment combination sizes must satisfy the condition $N_{\bz}\geq S\textrm{ for all } \bz\in\cZ$ to ensure the existence of such weights.
	As the number of factors $K$ increases, it becomes harder to satisfy this condition, necessitating a larger sample size $N$ to ensure that all treatment combination groups have enough samples. 
	In cases where exact balance is unachievable due to the limited sample size, an alternative is to allow approximated balance by setting $\Delta_s>0$. 
	In such scenarios, choosing the tolerances $\Delta_s$ optimally is crucial in reducing the biases in factorial effects estimation while challenging in making the optimization problem feasible. 
	
	This idea of viewing the treatment combinations as a multi-valued treatment implicitly assumes we want to estimate all factorial effects. 
	When unbiased estimation of all factorial effects is impossible, we take a different perspective and consider the following relaxed research question: Can we still draw reliable inferences for the relatively important ones?
	
	In practice, researchers are often primarily interested in the main effects and low-order interactions up to order $K'$ and assume high-order interactions are negligible. For instance, a canonical choice is $K' = 2$. 
	Let $[K]_{k}=\{J\subseteq[K]: 0<|J|\leq k\}$ denote the set of all $k$th-order and lower-order treatment interaction index subsets. Therefore, we assume that the outcome depends on treatment combination $\mathbf{z}$ only through treatment interactions in $\bKK$.
	
	\begin{assumption}\label{as1}
		The interaction effects with orders greater than $K'$ are negligible, where $K'\leq K$.
	\end{assumption}
	
	Given the challenges discussed previously, how can we estimate the factorial effects of interest $\tau_{\cK}=\mathbf{g}_{\cK}^\textup{T}\EE[\mathbf{Y}]/2^{K-1}$ for all $\cK\in\bKK$?

	\section{A New Weighting Approach for Estimating Factorial Effects}\label{sec:method}
	In this section, we introduce a new weighting approach to estimating the treatment effects of multiple factors and their non-negligible interactions under a factorial structure. 
	As described in Section~\ref{sec:notation}, each factorial effect can be expressed as a contrast of all expected potential outcomes. To illustrate the main idea of our new weighting method, we first consider in Section~\ref{sec:single} the estimation of one factorial effect $\tau_{\cK}$ for treatments in ${\cK}\subseteq[K]$, with the corresponding contrast vector $\mathbf{g}_{\cK}=(g_{{\cK}\mathbf{z}})_{\mathbf{z}\in\mathcal{Z}}$. We discuss the simultaneous estimation of multiple factorial effects in Section~\ref{sec:full}. Besides factorial effect estimation, we also introduce a variance estimator for the effect estimation in Section~\ref{subsec:var}, so that we can conduct inferences for these factorial effects.
	
	\subsection{Weighting for Estimating a Single Factorial Effect}\label{sec:single}

	\subsubsection{General Additive Outcome Model}\label{sec:model:additive}

	To describe the new weighting method, we start with a general additive model for the expected potential outcomes, which is formally described in Assumption~\ref{asm1}. Specifically, we assume the expected potential outcomes are additive in the confounders $\mathbf{X}$ and treatment interactions up to order $K'\leq K$. 
	
	\begin{assumption}\label{asm1}
		The conditional expectation of potential outcomes follows a general additive outcome model: 
		$\mathbb{E}[Y_i(\mathbf{z})\mid\mathbf{X}_i]=\mu(\mathbf{X}_i)+\nu(\mathbf{z}),$ where $h_s$'s are prespecified basis functions, $\mu(\mathbf{X}_i)=\sum_{s=1}^S\alpha_s h_s(\mathbf{X}_i)$ for some unknown $\alpha_s$'s and $\nu(\mathbf{z})=\sum_{J\in\bKK}\beta_{J}\prod_{j\in J}z_j$ for some unknown $\beta_J$'s.
	\end{assumption}
	
	With this general additive outcome model in Assumption~\ref{asm1}, we can decompose the conditional bias of estimating $\tau_{\cK}$ into two components, 
	$\mathbb{E}[\widehat\tau^\Omega_{\cK}-\tau^\Omega_{\cK}\mid\{\mathbf{X}_i\}_{i=1}^N,\{\mathbf{Z}_i\}_{i=1}^N]$ for $\Omega=+,-$, where
	\begin{equation}\label{eq:add}
		\begin{split}
			\mathbb{E} \left[\widehat\tau^\Omega_{\cK}-\tau^\Omega_{\cK}\mid\{\mathbf{X}_i\}_{i=1}^N,\{\mathbf{Z}_i\}_{i=1}^N\right]
			&=\sum_{s=1}^S\alpha_s\left(\frac{1}{N}\sum_{i=1}^N w_iA_{i{\cK}}^\Omega h_s(\mathbf{X}_i)-\mathbb{E}[h_s(\mathbf{X}_i)]\right)\\
			&+\sum_{J\in\bKK}\beta_J\left(\frac{1}{N}\sum_{i=1}^N w_iA_{i{\cK}}^\Omega\prod_{j\in J}Z_{ij}-\frac{1}{2^{K-1}} \sum_{\mathbf{z}\in\mathcal{Z}}g_{{\cK}\mathbf{z}}^\Omega\prod_{j\in J}z_{j}\right).
		\end{split}
	\end{equation}
	For detailed derivations of \eqref{eq:add}, see \S S3.1 in the Supplementary Materials.
	The above analysis of bias decomposition offers two key insights. First, to control the bias, we need to balance the basis functions of the covariates in the positive and negative parts of a contrast. 
	Although we may not know $\mathbb{E}[h_s(\mathbf{X}_i)]$ in practice, it is natural to replace it with its sample analogue $\sum_{i=1}^N h_s(\mathbf{X}_i)/N$. This leads to the covariate balance constraints 
	$$\sum_{i=1}^N w_iA_{i{\cK}}^\Omega h_s(\mathbf{X}_i)=\sum_{i=1}^N h_s(\mathbf{X}_i), \quad \textrm{ for } \Omega=+,-  \textrm{ and } s=1,\dots,S.$$  This requirement is similar to that for traditional balancing weighting methods.
	However, it is not enough to obtain an unbiased estimator of $\tau_{\cK}$ with multiple factors. In addition to the  different covariate distributions, the treatment distributions may also differ for each contrast. Since the factors can affect the outcome through other channels besides the factorial effect under consideration, we also need to control for the confounding from other factors. This intuition aligns with 
	our bias decomposition above and suggests that we need to include the factors themselves and the non-negligible interactions as additional ``covariates'' for balancing such that 
	$$\frac{1}{N}\sum_{i=1}^N w_iA_{i{\cK}}^\Omega\prod_{j\in J}Z_{ij}=\frac{1}{2^{K-1}}\sum_{\mathbf{z}\in\mathcal{Z}} g_{{\cK}\mathbf{z}}^\Omega\prod_{j\in J}z_j, \quad \textrm{ for } \Omega=+,-  \textrm{ and }  J\in\bKK,$$ 
	to emulate the treatment assignment mechanism in a balanced factorial design. These additional balance constraints can be simplified as $$\sum_{i=1}^N w_iA_{i{\mathcal{K}}}^\Omega\prod_{j\in J}Z_{ij}=0 \textrm{, for }\Omega=+,- \textrm{ and } J\in[K]_{K'}, J\neq\mathcal{K}.$$
	A similar observation was mentioned in \cite{chattopadhyay2022implied} in the context of regression with multi-valued treatments. Notably, when there is only one factor (i.e., $K=1$), this set of balancing constraints on factors is equivalent to the balancing constraints for a constant covariate, and, therefore, does not need to be considered separately. This observation is consistent with the traditional balancing weighting methods for one binary factor. However, with $K\geq 2$, we have more balancing constraints to consider. In addition, the number of balancing constraints is an increasing function of the highest order of non-negligible interactions $K'$. The smaller $K'$ is, the easier the problem becomes.

	\subsubsection{Outcome Model with Treatment Effect Heterogeneity}\label{sec:model:interaction}
	The general additive outcome model in Assumption~\ref{asm1} 
	covers a broad class of functions by allowing a flexible form of basis functions. However, this model separates the effects of treatments and covariates and does not account for treatment effect heterogeneity, which is common in many applications. 
	To address this limitation, we consider a more general class of outcome models with treatment effect heterogeneity in Assumption~\ref{asm2}, by allowing the coefficient of the basis functions to depend on the treatment combination $\mathbf{z}$. This is a subtle but critical relaxation that can yield more robust and reliable causal conclusions. 
	\begin{assumption}\label{asm2}
		The conditional expectation of potential outcomes follows an outcome model with treatment effect heterogeneity: $\mathbb{E}[Y_i(\mathbf{z})\mid\mathbf{X}_i]=\sum_{J\in\bKK}\mu(\mathbf{X}_i; \mathbf{z}_J),$ 
		where $\mu(\mathbf{X}_i; \mathbf{z}_J)$ belongs to the span of prespecified basis functions $h_s(\mathbf{X}_i)$, $s=1,\dots,S$, i.e., $\mu(\mathbf{X}_i; \mathbf{z}_J)=\sum_{s=1}^S \alpha_{sJ}\left(\prod_{j\in J}z_j\right) h_{s}(\mathbf{X}_i)$.
		This characterization leads to 
		$\mathbb{E}[Y_i(\mathbf{z})\mid\mathbf{X}_i]=\sum_{s=1}^S \alpha_{s\mathbf{z}}h_{s}(\mathbf{X}_i)$, 
		where $\alpha_{s\mathbf{z}} =\sum_{J\in\bKK}\alpha_{sJ}\prod_{j\in J}z_j$ are unknown coefficients that depend on $\bz$.
	\end{assumption}
	Equivalently, we can write the heterogeneous treatment effect outcome model in Assumption~\ref{asm2} using new basis functions $q_{sJ}(\bX_i,\bz)=h_s(\mathbf{X}_i)\prod_{j\in J}z_{j}$ such that 
	\begin{equation}
		\mathbb{E}[Y_i(\mathbf{z})\mid\mathbf{X}_i]=\sum_{s=1}^S\sum_{J\in\bKK} \alpha_{sJ}q_{sJ}(\bX_i,\bz)=\balpha^\textup{T}\bq(\bX_i,\bz),
	\end{equation}
	where $\balpha=\{\alpha_{sJ}\}_{s=1,\dots,S,\ J\in\bKK}$ and $\bq(\bX_i,\bz)=\{q_{sJ}(\bX_i,\bz)\}_{s=1,\dots,S,\ J\in\bKK}$. Similar to working with the general additive outcome model in Assumption~\ref{asm1}, we analyze the bias decomposition to obtain the required balance constraints. 
	The decompositions
	\begin{align*}
		\mathbb{E}[\widehat\tau^\Omega_{\cK}-\tau^\Omega_{\cK}\mid\{\mathbf{X}_i\}_{i=1}^N,\{\mathbf{Z}_i\}_{i=1}^N]
		=&\balpha^\textup{T}\left[\frac{1}{N}\sum_{i=1}^N w_iA_{i{\cK}}^\Omega\bq(\bX_i,\bZ_i)
		-\frac{1}{2^{K-1}}\sum_{\mathbf{z}\in\mathcal{Z}}g_{{\cK}\mathbf{z}}^\Omega\mathbb{E}[\bq(\bX_i,\bz)]\right],  \textrm{ for } \Omega=+,-,
	\end{align*}
	suggest that to achieve unbiased estimates of the factorial effect $\tau_{\cK}$ under the heterogeneous treatment effect outcome model, we need to balance the new basis functions $q_{sJ}(\bX_i,\bZ_i)=h_s(\mathbf{X}_i)\prod_{j\in J}Z_{ij}$ for all $s=1,\dots, S,\ J\in\bKK$. 
	By replacing $\mathbb{E}[\bq(\bX_i,\bz)]$ with its sample analogue, we require 
	$$\sum_{i=1}^N w_iA_{i{\cK}}^\Omega\bq(\bX_i,\bZ_i)=\frac{1}{2^{K-1}}\sum_{\mathbf{z}\in\mathcal{Z}}g_{{\cK}\mathbf{z}}^\Omega\sum_{i=1}^N \bq(\bX_i,\bz), \quad \textrm{ for } \Omega=+,-.$$
	As in \S\ref{sec:model:additive}, the number of constraints $2S\left[{K\choose1}+{K\choose2}\dots+{K\choose K'}\right]$ grows as the highest interaction order $K'$ increases. The key difference due to the flexible form of the outcome model is that we must now balance the interactions between the treatment combinations and covariates, which results in more balance constraints. To highlight the difference between the general additive outcome model (Assumption~\ref{asm1}) and the heterogeneous treatment effect outcome model (Assumption~\ref{asm2}), we examine a simple scenario in which only the main effects are non-zero (i.e., $K'=1$) in \S S3.2 of the Supplementary Materials.
	
	\subsection{Weighting for Estimating Multiple Factorial Effects Simultaneously}\label{sec:full}
	Suppose the investigator believes interactions with order greater than $K'\leq K$ are negligible (Assumption~\ref{as1}). The goal is to estimate $Q_+={K\choose1}+{K\choose2}+\cdots+{K\choose K'}$ non-negligible factorial effects simultaneously. Specifically, we seek to estimate the factorial effect $\tau_{\cK}$ with contrast vector $\mathbf{g}_{\cK}$ for all ${\cK}\in\bKK=\{J\subseteq[K]: 0<|J|\leq K'\}$. 
	Applying the derivations from \S\ref{sec:model:interaction} to each of the $Q_+$ factorial effects of interest leads to the following balance constraints that are necessary to obtain unbiased factorial effect estimates under the outcome model with treatment effect heterogeneity:
	for all ${\cK}\in\bKK$ and $\Omega=+,-$, 
	$$\sum_{i=1}^N w_iA_{i{\cK}}^\Omega\bq(\bX_i,\bZ_i)=\frac{1}{2^{K-1}}\sum_{\mathbf{z}\in\mathcal{Z}}g_{{\cK}\mathbf{z}}^\Omega\sum_{i=1}^N \bq(\bX_i,\bz),$$ 
	where $\bq(\bX_i,\bZ_i)=\{q_{sJ}(\bX_i,\bZ_i)\}_{s=1,\dots,S,\ J\in\bKK}$ and $q_{sJ}(\bX_i,\bZ_i)=h_s(\mathbf{X}_i)\prod_{j\in J}Z_{ij}$. 
	
	Any weights that satisfy the balance constraints described above can provide unbiased estimates of all non-negligible factorial effects. Unlike most existing balancing weighting methods, which often allow both positive and negative weights, 
	we focus on non-negative weights since the ideal weighting function $w_\mathbf{z}(\mathbf{X})$ is the ratio of two density functions. Non-negative weights also avoid extrapolation and therefore make the results more interpretable. If there are several choices of non-negative weights $w_i\geq0$ that satisfy the above balance constraints, which weight should we use? One standard solution is to minimize a convex measure of the weights, $m(w_i)$, to select the optimal set of weights. Investigators can choose $m(\cdot)$ based on their needs and preferences. For instance, entropy balancing weights proposed by \cite{hainmueller2012entropy} use $m(x)=x\log x$; calibration weights in \cite{chan2016globally} use $m(x)=R(x,1)$ to avoid extreme weights, where $R(x,x_0)$ is a continuously differentiable, non-negative, and strictly convex function for any fixed $x_0\in\mathbb{R}$; \cite{de2020direct} use $m(x)=x^2$ to minimize the variance of the treatment effects estimation.

	In summary, to obtain the optimal factorial weights for estimating multiple factorial effects $\tau_\cK, \cK\in\bKK$, we solve the following optimization problem:
	\begin{equation}\label{prob: prime}
		\begin{aligned}
			\min_{w_i\geq0: i=1,\dots,N} \quad &\sum_{i=1}^N m(w_i)\\ 
			\text{subject to} \quad &\sum_{i=1}^N w_iA_{i{\cK}}^\Omega\bq(\bX_i,\bZ_i)=\frac{1}{2^{K-1}}\sum_{\mathbf{z}\in\mathcal{Z}}g_{{\cK}\mathbf{z}}^\Omega\sum_{i=1}^N \bq(\bX_i,\bz), \textrm{ for } \Omega=+,- \textrm{ and } \cK\in\bKK.
		\end{aligned}
	\end{equation}
	In general, problem~(\ref{prob: prime}) is feasible with high probability. See Proposition~S1 in the Supplementary Materials \S S4.1 for a formal statement and proof. 
	
	To efficiently solve the constrained convex optimization problem \eqref{prob: prime}, we consider its dual form as an unconstrained concave maximization problem. We first write the linear constraints of problem~(\ref{prob: prime}) as $\mathbf{Bw}=\bb$, where the matrix $\bB$ has $i$th column $\bB_{i}$ representing all balance criteria evaluated at individual $i$ and $\bb=\sum_{i=1}^N \bb_i$. 
	This leads to the dual problem 
	\begin{equation}\label{prob:dual}
		\max_{\boldsymbol{\gamma}\geq0,\boldsymbol{\lambda}} \cL^*(\boldsymbol{\lambda},\boldsymbol{\gamma}), 
	\end{equation}
	where 
	$\cL^*(\boldsymbol{\lambda},\boldsymbol{\gamma})=\sum_{i=1}^N\{\rho(\gamma_i-\boldsymbol{\lambda}^\textup{T}\mathbf{B}_{i})-\boldsymbol{\lambda}^\textup{T}\mathbf{b}_i\}$, with $\rho(v)=m((m')^{-1}(v))-v(m')^{-1}(v)$. 
	For the derivations of the form of the dual problem, see \S S3.3 in the Supplementary Materials. Since the dual problem~(\ref{prob:dual}) has no other constraints except for the non-negativity constraints of the Lagrangian multipliers $\boldsymbol{\gamma}$, it is easier to solve compared with the original problem~(\ref{prob: prime}). 
	
	Suppose the dual problem \eqref{prob:dual} has optimal solution $(\widehat{\blambda},\widehat{\bgamma})$. Then we can obtain the optimal factorial weights using the closed-form expression 
	$$\widehat w_i=(m')^{-1}(\widehat\gamma_i-\widehat{\blambda}^\textup{T}\mathbf{B}_{i}).$$ 
	Details of this derivation are provided in \S S3.3 in the Supplementary Materials.
	
	To control the variance of the factorial effects estimates, we follow \cite{zubizarreta2015stable} and  consider a special choice $m(w_i)=w_i^2$ in the objective function of the original problem~(\ref{prob: prime}), which gives the optimal factorial weights 
	$\widehat w_i=(\widehat\gamma_i-\widehat{\blambda}^\textup{T}\mathbf{B}_{i})/2.$ 
	When $m(x)=x^2$, the dual problem becomes
	\begin{equation}\label{prob:dual2}
		\max_{\boldsymbol{\gamma}\geq0,\boldsymbol{\lambda}} \cL^*(\boldsymbol{\lambda},\boldsymbol{\gamma})=\sum_{i=1}^N\left\{-\frac{1}{4}(\gamma_i-\boldsymbol{\lambda}^\textup{T}\mathbf{B}_{i})^2-\boldsymbol{\lambda}^\textup{T}\mathbf{b}_i\right\}.
	\end{equation}
	Finding the optimal solution $(\widehat{\blambda},\widehat{\bgamma})$ of the dual problem can be computationally expensive, since the number of decision variables in problem~(\ref{prob:dual2}) increases linearly with the sample size $N$. To improve the computational efficiency, we can simplify the dual problem~(\ref{prob:dual2}) further by exploiting the non-negativity conditions of $\widehat{\bgamma}$. Specifically, the Karush-Kuhn-Tucker conditions 
	allow us to express $\widehat{\bgamma}$ explicitly as $\widehat\gamma_i=\widehat{\blambda}^\textup{T}\mathbf{B}_{i}I\left(\widehat{\blambda}^\textup{T}\mathbf{B}_{i}\geq0\right).$
	As a result, we can reformulate the dual problem~(\ref{prob:dual2}) as the following non-constrained optimization problem:
	\begin{equation}\label{prob:dual2simple}
		\max_{\boldsymbol{\lambda}} \cL^{**}(\boldsymbol{\lambda})
		=\sum_{i=1}^N\left\{-\frac{1}{4}(\boldsymbol{\lambda}^\textup{T}\mathbf{B}_{i})^2I(\boldsymbol{\lambda}^\textup{T}\mathbf{B}_{i}<0)-\boldsymbol{\lambda}^\textup{T}\mathbf{b}_i\right\}.
	\end{equation}
	The simplified problem in \eqref{prob:dual2simple} can be efficiently solved using numerical optimization methods such as Newton's method and gradient methods. 
	
	Weights obtained by solving original problem~\eqref{prob: prime} or its dual problem~ \eqref{prob:dual2simple} yield $\sqrt{N}$-consistent estimates for the factorial effects under mild regularity conditions. 
	The formal results are presented in Theorem~\ref{thm:rootn} below; see \S S4.3 in the Supplementary Materials for the proof.
	
	\begin{theorem}\label{thm:rootn}
		Under Assumptions~\ref{as1}, \ref{asm2} and S1, the weighting estimator $\widehat\tau_{\cK}$ with weights $\widehat{\mathbf{w}}$ solving problem~(\ref{prob: prime}) with $m(x)=x^2$ is asymptotically normal for any $\cK\in\bKK$, i.e., $$\sqrt{N}\left(\widehat\tau_{\cK}-\tau_{\cK}\right)\rightarrow \cN\left(0,{\sigma}^{*2}_{\cK}\right),\  {\rm with }\  {\sigma}^{*2}_{\cK}=\sigma^2_{\cK}+\sigma^2_\epsilon,$$
		where
		$$
		\sigma^2_{\cK}=\frac{1}{2^{K}}{\rm Var}\left(\sum_{\mathbf{z}\in\mathcal{Z}}g_{{\cK}\mathbf{z}}\mathbb{E}[Y_i(\mathbf{z})\mid\mathbf{X}_i]\right)\qquad {\rm and}\qquad \sigma^2_{\epsilon }=\frac{\bar\sigma^2}{4}\blambda^{* \textup{T}}\EE\left[\bB_i\bB_i^\textup{T} I(\blambda^{*\textup{T}}\bB_i<0)\right]\blambda^*
		$$
		with
		$
		\blambda^*={\rm argmax}_{\blambda}\EE[-\frac{1}{4}(\blambda^\textup{T}\mathbf{B}_{i})^2I(\blambda^\textup{T}\mathbf{B}_{i}<0)-\boldsymbol{\lambda}^\textup{T}\mathbf{b}_i]
		$ and $\bar\sigma^2$ defined in Assumption~S1.
	\end{theorem}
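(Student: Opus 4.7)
The plan is to exploit exact balance to reduce $\widehat\tau_\cK - \tau_\cK$ to the sum of a weighted noise term and an i.i.d.\ covariate-driven term, then linearize in the dual parameter around its population analogue and apply a Lyapunov CLT.

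First, on the event that problem~\eqref{prob: prime} is feasible, which holds with probability at least $1-1/N$ by Proposition~\ref{thm:feasible}, the balance constraints hold with equality. Subtracting the $+$ and $-$ constraints and using $A_{i\cK}^+ - A_{i\cK}^- = g_{\cK,\bZ_i}$ yields the pivotal identity $\sum_{i=1}^N \widehat w_i g_{\cK,\bZ_i}\bq(\bX_i,\bZ_i) = 2^{-(K-1)}\sum_{i=1}^N\sum_{\bz\in\cZ} g_{\cK\bz}\bq(\bX_i,\bz)$. Substituting $Y_i^{\textup{obs}} = \balpha^\textsc{t}\bq(\bX_i,\bZ_i)+\epsilon_i$ (from Assumptions~\ref{as1} and \ref{asm2}) into $\widehat\tau_\cK$, the $\balpha$-dependent part telescopes, and
\[\widehat\tau_\cK - \tau_\cK = \frac{1}{N}\sum_{i=1}^N \widehat w_i g_{\cK,\bZ_i}\epsilon_i + \frac{1}{N}\sum_{i=1}^N U_i, \quad U_i := \frac{1}{2^{K-1}}\sum_{\bz\in\cZ}g_{\cK\bz}\bigl\{\EE[Y_i(\bz)\mid\bX_i]-\EE[Y_i(\bz)]\bigr\},\]
where $\{U_i\}$ are i.i.d.\ mean zero with variance $\sigma^2_\cK$.

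Second, I would treat $\widehat\blambda$ as an M-estimator of $\blambda^*$. The criterion $\cL^{**}(\blambda)$ is concave with Hessian $-\tfrac{1}{2}\sum_i\bB_i\bB_i^\textsc{t} I(\blambda^\textsc{t}\bB_i<0)\preceq 0$; the moment bound in Assumption~\ref{asreg}.1 gives uniform convergence of the objective on bounded sets, and Assumption~\ref{asreg}.2 makes the population Hessian at $\blambda^*$ invertible, so standard Z-estimator arguments yield $\widehat\blambda\xrightarrow{p}\blambda^*$ and $\sqrt N(\widehat\blambda-\blambda^*) = O_p(1)$. Writing $w_i^* := -\blambda^{*\textsc{t}}\bB_i I(\blambda^{*\textsc{t}}\bB_i<0)/2$, I would then split $\frac{1}{\sqrt N}\sum_i\widehat w_i g_{\cK,\bZ_i}\epsilon_i = \frac{1}{\sqrt N}\sum_i w_i^* g_{\cK,\bZ_i}\epsilon_i + R_N$ and argue $R_N = o_p(1)$: on the set where $\mathrm{sgn}(\widehat\blambda^\textsc{t}\bB_i)=\mathrm{sgn}(\blambda^{*\textsc{t}}\bB_i)$ the difference $\widehat w_i-w_i^*$ linearizes as $-(\widehat\blambda-\blambda^*)^\textsc{t}\bB_i/2$ and its contribution is $O_p(N^{-1/2})$ by a conditional-variance bound, while the boundary stratum has $O_p(\sqrt N)$ elements with $|\widehat w_i-w_i^*| = O_p(N^{-1/2})$, contributing another $o_p(1)$.

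Finally, combining, $\sqrt N(\widehat\tau_\cK-\tau_\cK) = N^{-1/2}\sum_i\bigl\{w_i^*g_{\cK,\bZ_i}\epsilon_i+U_i\bigr\}+o_p(1)$, an i.i.d.\ mean-zero triangular array. The two additive components are uncorrelated because Assumption~\ref{asreg}.3 makes $\epsilon_i$ independent of $(\bX_i,\bZ_i)$, so each summand has variance $\sigma^2_\cK+\sigma^2_{\epsilon,i}$ with Cesaro limit $\sigma^2_\cK+\sigma^2_\epsilon$ by Assumption~\ref{asreg}.4; the Lyapunov moment condition in Assumption~\ref{asreg}.5 then delivers the CLT with limit $\cN(0,\sigma^{*2}_\cK)$. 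The hard part will be controlling $R_N$: the indicator $I(\blambda^\textsc{t}\bB_i<0)$ in $\widehat w_i$ is non-smooth in $\blambda$, so a naive Taylor expansion is unavailable, and one must use an empirical-process-style equicontinuity argument to bound the contribution of the shrinking strip around the hyperplane $\{\bB:\blambda^{*\textsc{t}}\bB=0\}$.
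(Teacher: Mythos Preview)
Your overall architecture matches the paper's proof: exploit exact balance to reduce $\widehat\tau_\cK-\tau_\cK$ to a weighted-noise term plus the i.i.d.\ covariate term $N^{-1}\sum_i U_i$, replace $\widehat w_i$ by its population analogue $w_i^*$, and apply a Lyapunov CLT to the resulting sum. The paper carries this out via Lemmas~\ref{lem:lambda}, \ref{lem:weightdiff}, and~\ref{lem:error}, which correspond exactly to your three steps.

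The one place you diverge is in controlling $R_N$, and here you are making the problem harder than it is. You propose splitting on $\mathrm{sgn}(\widehat\blambda^\textsc{t}\bB_i)$ versus $\mathrm{sgn}(\blambda^{*\textsc{t}}\bB_i)$, linearizing on the matching-sign set, and then invoking an empirical-process equicontinuity argument for the shrinking strip around $\{\blambda^{*\textsc{t}}\bB=0\}$. That route would work, but your claim that the boundary stratum has $O_p(\sqrt N)$ elements tacitly assumes a bounded density for $\blambda^{*\textsc{t}}\bB_i$ near zero, which is not part of Assumption~\ref{asreg}. The paper sidesteps all of this with a single observation you overlooked: although the indicator $I(\blambda^\textsc{t}\bB_i<0)$ is discontinuous in $\blambda$, the \emph{product} $\phi(x):=-xI(x<0)/2=\max(-x,0)/2$ is globally Lipschitz with constant $1/2$. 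Hence $|\widehat w_i-w_i^*|\le \tfrac12\|\widehat\blambda-\blambda^*\|_2\|\bB_i\|_2$ uniformly in $i$, with no case distinction. Since $\widehat\blambda$ is a function of $(\bX,\bZ)$ alone and therefore independent of $\{\epsilon_i\}$ under Assumption~\ref{asreg}.3, a direct conditional-variance bound (or the paper's CLT for $\sum_i\|\bB_i\|_2\epsilon_i$) then gives $R_N=o_p(1)$ immediately. So your ``hard part'' dissolves once you recognize that the weight map $\blambda\mapsto\widehat w_i(\blambda)$ is already Lipschitz; no Taylor expansion, sign-splitting, or Donsker-class machinery is needed.
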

	Based on Theorem~\ref{thm:rootn}, our weighting estimator $\widehat{\tau}_{\cK}$ is $\sqrt{N}$-consistent for the factorial effect $\tau_{\cK}$. Moreover, the variance $\sigma^{*2}_{\cK}$ depends on two components -- the variance $\sigma^2_{\cK}$ from the projection of potential outcomes to the covariate space and the variance $\sigma^2_{\epsilon}$ from the residuals of the projection. Similar variance decompositions occur in semiparametric efficiency theory \citep{robins1994estimation,hahn1998role,chan2016globally}. Notably, the same residual variance $\sigma^2_{\epsilon}$ is relevant for estimating all factorial effects, while the projection variance $\sigma^2_{\cK}$ varies according to the contrast vector $\bg_\cK$. 
	
	In addition, the weighting estimators for all non-negligible factorial effects follow a multivariate normal distribution asymptotically. This result, presented in Corollary~\ref{cor}, supports simultaneous inference of multiple factorial effects. The proof is straightforward and is hence omitted.
	\begin{corollary}\label{cor}
		Under Assumptions~\ref{as1}, \ref{asm2} and S1, the weighting estimators $\{\widehat\tau_{\cK}\}_{\cK\in\bKK}$ with weights $\widehat{\mathbf{w}}$ solving problem~(\ref{prob: prime}) with $m(x)=x^2$ satisfy $$\sqrt{N}\left(\{\widehat\tau_{\cK}\}_{\cK\in\bKK}-\{\tau_{\cK}\}_{\cK\in\bKK}\right)\rightarrow \cN\left(\mathbf{0},\Sigma^*\right),\  {\rm with }\  \Sigma^*=\Sigma+\Sigma_\epsilon,$$
		where the $(s,t)$th entry of \ $\Sigma$ equals
		$
		\Sigma_{st}=\frac{1}{2^{K}}{\rm Cov}\left(\sum_{\mathbf{z}\in\mathcal{Z}}g_{{\cK}_s\mathbf{z}}\mathbb{E}[Y_i(\mathbf{z})\mid\mathbf{X}_i],\sum_{\mathbf{z}\in\mathcal{Z}}g_{{\cK}_t\mathbf{z}}\mathbb{E}[Y_i(\mathbf{z})\mid\mathbf{X}_i]\right)$ and all entries of \ $\Sigma_{\epsilon }$ 
		equals $\sigma^2_{\epsilon }$.
	\end{corollary}
	
	While the theoretical results presented here employ a specific choice of $m(x)=x^2$, similar properties remain valid for other selections of $m(\cdot)$, provided that $m(\cdot)$ is strictly convex and continuously differentiable. For analogous prerequisites concerning the objective function, see \cite{chan2016globally}. Regardless of the chosen $m(\cdot)$ function that meets these criteria, we can establish asymptotic normality under certain regularity conditions. The primary distinction arises in the asymptotic variance, where the residual variance $\sigma^2_\epsilon$ adapts according to the choice of $m(\cdot)$.
	
	A common technique in the weighting literature to further improve the performance of a weighting method is to combine it with outcome regression adjustments \citep{robins1994estimation,rosenbaum2002covariance}. This type of augmented estimator is also applicable to our balancing weighting method in a factorial design. For more discussions, see \S S3.4 in the Supplementary Materials.
	
	Throughout this paper, we focus on the standard factorial effects and the idea applies to general contrasts, as defined in \cite{zhao2022regression}. 
	
	\subsection{Variance estimation}\label{subsec:var}
	To estimate the asymptotic variance of $\sqrt{N}(\widehat{\tau}_{\cK}-\tau_{\cK})$, we adopt a similar approach to that in \cite{chan2016globally}. Define a combined parameter  $\btheta=(\blambda^\textup{T},t)^\textup{T}$ and 
	$\eta_i(\btheta)=(\psi_i'(\blambda), \   w_i(A_{i{\cK}}^+-A_{i{\cK}}^-)Y_i^\textup{obs}-t)^\textup{T},$ 
	where $\psi_i'(\blambda)=-\frac{1}{2}\mathbf{B}_{i}\mathbf{B}_{i}^\textup{T}\blambda I(\boldsymbol{\lambda}^\textup{T}\mathbf{B}_{i}<0)-\mathbf{b}_i$ is the first derivative of $\psi_i(\blambda)=-\frac{1}{4}(\boldsymbol{\lambda}^\textup{T}\mathbf{B}_{i})^2I(\boldsymbol{\lambda}^\textup{T}\mathbf{B}_{i}<0)-\boldsymbol{\lambda}^\textup{T}\mathbf{b}_i$ with respect to $\blambda$
	and the weights $w_i=(\gamma_i-\lambda^\textup{T}\bB_i)/2=-\blambda^\textup{T}\bB_i I(\blambda^\textup{T}\bB_i<0)/2$ is also a function of $\blambda$.
	Recall that $\widehat{\blambda}={\rm argmax}_{\blambda} \frac{1}{N}\sum_{i=1}^N\psi_i(\blambda) \textrm{ and }  \widehat{\tau}_{\cK}=\frac{1}{N}\sum_{i=1}^N\widehat w_i A_{i{\cK}}^+Y_i^\textup{obs}-\frac{1}{N}\sum_{i=1}^N\widehat w_i A_{i{\cK}}^-Y_i^\textup{obs}.$
	Then  $\widehat{\btheta}=(\widehat{\blambda}^\textup{T},\widehat\tau_{\cK})^\textup{T}$ is the $Z$-estimator that satisfies the following estimating equation 
	$N^{-1}\sum_{i=1}^N \eta_i(\widehat{\btheta})=0.$
	In addition,  ${\btheta}^*=({\blambda}^{*\textup{T}},\tau_{\cK})^\textup{T}$ satisfies $ \EE[\eta_i(\btheta^*)]=0,$
	where ${\blambda}^*={\rm argmax}_{\blambda} \EE[\psi_i(\blambda)], \  w_i^*=-\blambda^{*\textup{T}}\bB_i I(\blambda^{*\textup{T}}\bB_i<0)/2, \textrm{ and } {\tau}_{\cK}=\EE[w_i^* (A_{i{\cK}}^+-A_{i{\cK}}^-)Y_i^\textup{obs}]=\frac{1}{2^{K-1}}\sum_{\mathbf{z}\in\mathcal{Z}}g_{{\cK}\mathbf{z}}\EE[Y_i(\mathbf{z})].$
	Therefore, the asymptotic variance of $\sqrt{N}(\widehat{\btheta}-\btheta^*)$ can be written as  $$\bH^{*-1}\EE\left[\eta_i({\btheta}^*)\eta_i({\btheta}^{*})^\textup{T}\right]\bH^{*-1}, \ {\rm with }\ \bH^*=\EE\left[\eta_i'({\btheta}^*)\right],$$
	where $\eta_i'({\btheta}^*)$ is the first derivative of $\eta_i({\btheta})$ evaluated at $\btheta=\btheta^*$. For details, see the proof of Theorem~\ref{thm:var} in \S S4.4 in the Supplementary Materials.
	
	Since we are only concerned with the asymptotic variance of $\sqrt{N}(\widehat{\tau}_{\cK}-\tau_{\cK})$, we focus on estimating the last element of $\bH^{*-1}\EE[\eta_i({\btheta}^*)\eta_i({\btheta}^{*})^\textup{T}]\bH^{*-1}$. Therefore, we only need to consider the last row of $\bH^{*-1}$. 
	Then, we have an estimator for  $\sigma^{*2}_{\cK}$: $$\widehat\sigma^{*2}_{\cK}=L^\textup{T}\left[\frac{1}{N}\sum_{i=1}^N\eta_i(\widehat{\btheta})\eta_i(\widehat{\btheta})^\textup{T}\right]L=\frac{1}{N}\sum_{i=1}^N\left(\eta_i(\widehat{\theta})^\textup{T}L\right)^2,$$
	where $$L=\left[\left(-\frac{1}{N}\sum_{i=1}^N\bB_i^\textup{T}(A_{i{\cK}}^+-A_{i{\cK}}^-)Y_i^\textup{obs}I\left(\widehat{\blambda}^\textup{T}\bB_i<0\right)/2\right)\left(-\frac{1}{N}\sum_{i=1}^N\bB_i\bB_i^\textup{T} I\left(\widehat{\blambda}^\textup{T}\bB_i<0\right)/2\right)^{-1}, -1\right]^\textup{T}$$ is a column vector with the same dimension as $\eta_i(\theta)$.
	\begin{theorem}\label{thm:var}
		Under Assumptions~\ref{as1}, \ref{asm2} and S1, we have 
		$\widehat\sigma^{*2}_{\cK}\overset{\PP}{\rightarrow}\sigma^{*2}_{\cK}.$
	\end{theorem} 
	Theorem~\ref{thm:var} suggests that $\widehat\sigma^{*2}_{\cK}$ is a consistent estimator of the asymptotic variance of $\sqrt{N}(\widehat{\tau}_{\cK}-\tau_{\cK})$. See \S S4.4 in the Supplementary Materials for the proof. Notably, the same variance estimation technique can be applied to provide a consistent estimator for the covariance matrix of multiple factorial effects. This can be achieved by considering a revised combined parameter  $\btheta=(\blambda^\textup{T},(t_\cK)_{\bKK}^\textup{T})^\textup{T}$ and working with the $Z$-estimator for $\eta_i(\btheta)=(
	\psi_i'(\blambda)^\textup{T}, (w_i(A_{i{\cK}}^+-A_{i{\cK}}^-)Y_i^\textup{obs}-t_\cK)_{\bKK}^\textup{T})^\textup{T}.$
	The detailed discussion is omitted due to the limited space.

	\section{Observational Factorial Studies with Incomplete Treatment Combinations}\label{sec:extension}
	In the previous discussions, we defined factorial effects as contrasts of all expected potential outcomes based on the ideal scenario of a full factorial design, where all possible combinations of factors are randomized. We then estimated the factorial effects by emulating a randomized full factorial experiment.
	However, in some practical scenarios, it may not be possible to emulate a full factorial design, especially when the number of factors $K$ is large, or some treatment combinations are rare or missing in the observed data. As the number of treatment combinations grows exponentially with the number of factors, it is likely that some treatment combinations are not observed due to limitations in units or cost. In such cases, how can we estimate the factorial effects of interest? 
	
	In this section, we extend the proposed balancing weighting method to emulate an incomplete factorial design \citep{byar1993incomplete,pashley2022causal}, which can refer to any subset of a full factorial design. 
	Suppose there are $Q_\textup{u}$ unobserved treatment combinations. Let $\mathcal{Z}_{\textup{u}}$ denote this set of $Q_\textup{u}$ unobserved treatment combinations, and let $\mathcal{Z}_\textup{o}$ denote the set of $Q_\textup{o}=Q-Q_\textup{u}$ observed treatment combinations. When the interaction effects with order greater than $K'$ are negligible (Assumption~\ref{as1}), we only need to estimate $Q_+={K\choose 1}+{K\choose 2}+\cdots+{K\choose K'}$ factorial effects; all the other $Q_-=Q-Q_+-1$ factorial effects are zero.
	How can we use the observed $Q_\textup{o}$ treatment combinations in $\mathcal{Z}_\textup{o}$ to infer the $Q_+$ factorial effects?
	
	To answer this question, we begin by exploring a fundamental question: Is it possible to identify the $Q_+$ factorial effects using only $Q_\textup{o}$ observed treatment combinations? Let $\mathbf{G}=[\mathbf{g}_0,\mathbf{g}_1,\dots,\mathbf{g}_{K}, \mathbf{g}_{1,2},\dots,\mathbf{g}_{K-1,K},\dots,\mathbf{g}_{1,...,K}]$ denote the design matrix for 
	the average outcome across treatments and all $Q-1$ factorial effects, where the first column $\mathbf{g}_0=(+1,\dots,+1)^\textup{T}$. 
	The corresponding vector of average outcome and factorial effects equals $\boldsymbol{\tau} = (2\tau_0, \tau_1, ..., \tau_K, \tau_{1,2}, ....,  \tau_{1,...,K})^\textup{T}=\frac{1}{2^{K-1}}\mathbf{G}^\textup{T}\mathbb{E[\mathbf{Y}]}.$ 
	Under Assumption~\ref{as1}, the high-order interactions are negligible, so the last $Q_-$ entries of $\boldsymbol{\tau}$ are zero.
	For convenience, we rearrange the entries of $\mathbb{E}(\mathbf{Y})$ and $\mathbf{G}$ such that the unobserved treatment combinations occur after the observed ones and the negligible contrasts occur after the non-negligible ones. Using tilde to denote a rearranged vector and matrix, we partition them as $$\mathbb{E}(\mathbf{\tilde{Y}})=\begin{bmatrix}
		(\mathbb{E}[Y(\mathbf{z})])_{\mathbf{z}\in\mathcal{Z}_\textup{o}}\\
		(\mathbb{E}[Y(\mathbf{z})])_{\mathbf{z}\in\mathcal{Z}_\textup{u}}
	\end{bmatrix}
	\quad \textrm{ and } \quad 
	\mathbf{\tilde{G}}=\begin{bmatrix}
		\mathbf{G}_{\textup{o}+} & \mathbf{G}_{\textup{o}-}\\
		\mathbf{G}_{\textup{u}+} & \mathbf{G}_{\textup{u}-}
	\end{bmatrix},$$ 
	where $\mathbf{G}_{\textup{o}+}$ is the $Q_\textup{o}\times (Q_++1)$ submatrix for the observed treatment combinations and  non-negligible contrasts, $\mathbf{G}_{\textup{u}+}$ is the $Q_\textup{u}\times (Q_++1)$ submatrix for the unobserved treatment combinations and non-negligible contrasts, $\mathbf{G}_{\textup{o}-}$ is the $Q_\textup{o}\times Q_-$ submatrix for the observed treatment combinations and negligible contrasts, $\mathbf{G}_{\textup{u}-}$ is the $Q_\textup{u}\times Q_-$ submatrix for the unobserved treatment combinations and negligible contrasts.
	Then  $\boldsymbol{\tau}=\frac{1}{2^{K-1}}\mathbf{\widetilde{G}}^\textup{T}\mathbb{E}(\mathbf{\widetilde{Y}}).$ 
	Due to the orthogonality of $\mathbf{\widetilde{G}}$ that $\mathbf{\widetilde{G}}^\textup{T}\mathbf{\widetilde{G}} = 2^K \mathbf{I}_{2^K}$, 
	we can express the expected potential outcomes in terms of the factorial effects using the relationship $\mathbb{E}[\mathbf{\widetilde{Y}}] = 2^{-1}\mathbf{\widetilde{G}}\boldsymbol{\tau}.$ 
	This one-to-one correspondence between the factorial effects $\boldsymbol{\tau}$ and the expected potential outcomes $\mathbb{E}[\mathbf{\widetilde{Y}}]$ allows us to identify all factorial effects in a full factorial design when all treatment combinations are observed. However, when some treatment combinations are unobserved, it is impossible to estimate the corresponding expected potential outcome without further assumptions. Consequently, we cannot estimate the factorial effects using the above one-to-one correspondence directly. Still, this relationship between $\boldsymbol{\tau}$ and $\mathbb{E}[\mathbf{\widetilde{Y}}]$ is useful, as we can use it to infer the expected potential outcomes for the unobserved treatment combinations $(\mathbb{E}[Y(\mathbf{z})])_{\mathbf{z}\in\mathcal{Z}_\textup{u}}$ by utilizing the negligibility of the $Q_-$ highest-order interactions, given by
	$$\mathbf{0}=\mathbf{G}_{\textup{o}-}^\textup{T}(\mathbb{E}[Y(\mathbf{z})])_{\mathbf{z}\in\mathcal{Z}_\textup{o}}+\mathbf{G}_{\textup{u}-}^\textup{T}(\mathbb{E}[Y(\mathbf{z})])_{\mathbf{z}\in\mathcal{Z}_\textup{u}}.$$ Specifically, if $\mathbf{G}_{\textup{u}-}$ has full row rank, we can infer the expected potential outcomes for unobserved treatment combinations $(\mathbb{E}[Y(\mathbf{z})])_{\mathbf{z}\in\mathcal{Z}_\textup{u}}$ using the relationship 
	$$(\mathbb{E}[Y(\mathbf{z})])_{\mathbf{z}\in\mathcal{Z}_\textup{u}}=-\mathbf{G}_{\textup{u}-}(\mathbf{G}_{\textup{u}-}^\textup{T}\mathbf{G}_{\textup{u}-})^{-1}\mathbf{G}_{\textup{o}-}^\textup{T}(\mathbb{E}[Y(\mathbf{z})])_{\mathbf{z}\in\mathcal{Z}_\textup{o}}.$$
	It is worth noting that having a full row rank $\mathbf{G}_{\textup{u}-}$ is a necessary and sufficient condition for identifying the non-negligible factorial effects. This condition is only possible to hold if 
	the number of non-negligible factorial effects $Q_+$ is no more than the number of observed treatment combinations $Q_\textup{o}$.

	Let $\boldsymbol{\tau}_+$ denote the vector of average outcome and non-negligible factorial effects. With the identified expected potential outcomes for the unobserved treatment combinations, we can identify $\boldsymbol{\tau}_+$ under an incomplete factorial design as in Proposition~\ref{prop:rel} below. 
	Proposition~\ref{prop:rel} suggests that the factorial effects, originally defined based on the mean of all potential outcomes (the treatment combination groups contribute equally to each contrast), are equivalent to linear combinations of the expected potential outcomes for the observed treatment combinations.
	Therefore, different treatment combination groups can contribute to the factorial effects with different weights.
	To illustrate the idea, we consider a toy example with $K=3$ binary treatments 
	in \S S3.5 in the Supplementary Materials. Notably, we can also relax the ignorability assumption in the sense that (i) the probabilistic assumption only needs to hold for the observed treatment combinations, and (ii) the unconfounded assumption only requires $\mathbf{Z}_i \ind (Y_i(\mathbf{z}))_{\mathbf{z}\in\mathcal{Z}_\textup{o}}\mid\mathbf{X}_i$.
	\begin{proposition}\label{prop:rel}
		If the matrix $\mathbf{G}_{\textup{u}-}$ has full row rank, we have
		$$\boldsymbol{\tau}_+=\frac{1}{2^{K-1}}\mathbf{G}_\textup{o}^\textup{T}(\mathbb{E}[Y(\mathbf{z})])_{\mathbf{z}\in\mathcal{Z}_\textup{o}}, \quad	\textrm{ where } \mathbf{G}_\textup{o}^\textup{T}	=\mathbf{G}_{\textup{o}+}^\textup{T}-\mathbf{G}_{\textup{u}+}^\textup{T}\mathbf{G}_{\textup{u}-}(\mathbf{G}_{\textup{u}-}^\textup{T}\mathbf{G}_{\textup{u}-})^{-1}\mathbf{G}_{\textup{o}-}^\textup{T}.$$ 
	\end{proposition}

	With the formula for the non-negligible factorial effects in Proposition~\ref{prop:rel}, we can formulate our optimization problem in a similar way to Section~\ref{sec:full} to find the optimal weights that mimic an incomplete factorial design and hence can be used to estimate the non-negligible factorial effects. Although the problem formulation is similar as before, it is worth noting that the indicators of whether unit $i$ contribute to the positive or negative parts of factorial effect ${\tau}_\cK$ in a full factorial design, $A_{i\cK}^+$ and $A_{i\cK}^-$, have new meanings in an incomplete factorial design, representing the weights that unit $i$ contributes to the positive or negative parts of factorial effect ${\tau}_\cK$. Importantly, the theoretical properties and the form of the variance estimator remain unchanged, and we omit the proofs since they are the same as before.
	
	\section{Simulation}\label{sec:sim}
	\subsection{Performance of the Weighting Estimator}
	
	In the first set of simulations, we evaluate the performance of the proposed weighting estimator for estimating factorial effects. Suppose we observe three factors $\bZ_i=(Z_{i1},Z_{i2},Z_{i3})$ and five covariates $\bX_i=(X_{i1},\dots,X_{i5})$ for each individual $i$. The covariates $\bX_i$ follows a multivariate normal distribution $\cN(\mathbf{\mu},\Sigma)$, where $\mathbf{\mu}=(0,0,0,0,0)^\textup{T}$ and $\Sigma$ has diagonal elements 1 and off-diagonal elements $\rho$. The treatment assignment mechanism for $Z_{ik}$ is independent across $k$'s and  satisfies a logistic regression that $\PP(Z_{ik}=1)=1/(1+\exp(-\boldsymbol{\beta}_k^\textup{T}\bX_i))$, where $\boldsymbol{\beta}_1=(1/4,2/4,0,3/4,1), \boldsymbol{\beta}_2=(3/4,1/4,1,0,2/4), \boldsymbol{\beta}_3=(1,0,3/4,2/4,1/4)$. Here, we assume the conditional independence of factors given covariates for convenience, but it is not needed in the theory throughout the paper. This treatment assignment mechanism ensures all $2^3=8$ treatment combination groups are non-empty and are observed so that the proposed weighting estimators in Section~\ref{sec:method} are applicable. Suppose only the main effects of the three treatments are non-negligible. We consider three outcome models: an additive outcome $Y_{i1}=2\sum_{j=1}^5X_{ij}+\sum_{k=1}^3Z_{ik}+\epsilon_{i1}$, a heterogeneous treatment effect outcome $Y_{i2}=2\sum_{j=1}^5X_{ij}+\sum_{k=1}^3Z_{ik}+\sum_{j=1}^5X_{ij}\sum_{k=1}^3Z_{ik}+\epsilon_{i2}$, and a misspecified outcome $Y_{i3}=4\sin(X_{i1})+\exp(0.4X_{i2}^2)+\left(\min(1,X_{i1})+X_{i2}\right)Z_{i1}+X_{i1}Z_{i2}+\sum_{j=1}^5X_{ij}Z_{i3}+\epsilon_{i3}$, where $\epsilon_{ij}$'s are independent errors following a standard normal distribution. The true main effects for $Y_1$ are $\tau_1=2,\tau_2=2,\tau_3=2$. The true main effects for $Y_2$ are $\tau_1=2,\tau_2=2,\tau_3=2$. The true main effects for $Y_3$ are $\tau_1=2\mathbb{E}[\min(X_1,1)],\tau_2=0,\tau_3=0$. We consider four  estimators for each main effect $\tau_k, k=1,2,3$, under each outcome model: 
	(i) the additive regression estimator, which is twice the coefficient of $Z_{k}$ when regressing $Y$ on the intercept, centralized $X_{j}$ (i.e., $X_j-\bar X_j$) and $Z_{k}$ for $j=1,\dots,5, k=1,2,3$, (ii) the interaction regression estimator, which is twice the coefficient of $Z_{k}$ when regressing $Y$ on the intercept, $X_j-\bar X_j$, $Z_k$ and $(X_j-\bar X_j)Z_k$ for $j=1,\dots,5, k=1,2,3$ \citep{zhao2023covariate}, (iii) the proposed weighting estimator using balance constraints under the general additive model assumption (additive balance constraints) and covariate basis functions $h_s(\bX)=X_{s}, s=1,\dots,5$, and (iv) the proposed weighting estimator using balance constraints under the outcome model assumption with treatment effect heterogeneity (interaction balance constraints) and the same set of basis functions as (iii). We vary the sample size as $N=500,1000,2000$ and vary the covariance as $\rho=0.2,0.4,0.6$ for each scenario. In each simulation setting, we compare the absolute bias and root mean squared error (RMSE) using 1000 repetitions. We summarize the results for $N=1000$ and $\rho=0.4$ in Figure~\ref{fig:est}. We present the results for other choices of $N$ and $\rho$ and evaluate the effects of sample size $N$ and covariance $\rho$ in the Supplementary Materials  \S S1.
	
	\begin{figure}[h!]
		\centering
		\includegraphics[scale=0.4]{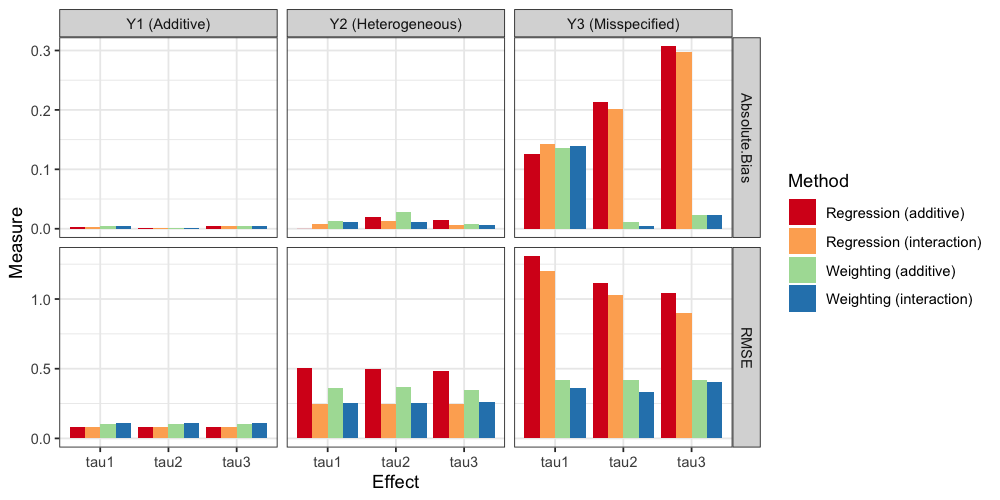}
		\caption{\small Absolute bias and root mean squared error (RMSE) for estimating three main effects over 1000 repetitions using four estimators when $N=1000$ and $\rho=0.4$.}
		\label{fig:est}
	\end{figure}
	
	Figure~\ref{fig:est}  suggests that when the outcome is additive, the four estimators have similar performances and the regression estimators achieve slightly smaller RMSEs than the two weighting estimators. When the treatment effect is heterogeneous, the two methods allowing treatment effect heterogeneity (interaction regression estimator and weighting with interaction balance constraints) have smaller biases and RMSEs than the other two  methods. When the outcome is misspecified, the weighting estimator with interaction balance constraints outperforms the other methods, achieving the smallest RMSEs. This observation confirms the robustness of the proposed weighting method. 
	
	In the Supplementary Materials \S S1, we conduct additional simulations with a larger number of factors and under heteroskedasticity. We also consider another set of treatment assignment mechanism where not all treatment combination groups are observed, hence the proposed method in \S\ref{sec:method} is no longer applicable and the performance of the generalized method in \S\ref{sec:extension} is evaluated.
	
	\subsection{Performance of the Variance Estimator}
	In the next set of simulations, we evaluate the performance of the proposed variance estimator in Theorem~\ref{thm:var}. We consider the same setting as the first set of simulations, with a sample size of $N=500,1000,2000$ and $\rho=0.4$. We estimate $\sqrt{N}(\widehat\tau_k-\tau_k)$ in two ways: we obtain the benchmark estimator by adjusting the variance of estimated factorial effects by the sample size and also consider the average proposed consistent variance estimator across 1000 repetitions. From the results in Figure~\ref{fig:var}, we can see that  the simulated variance estimator and the consistent variance estimator are similar, with a ratio getting closer to 1 as the sample size increases. Additionally, we calculate the empirical coverage probability of 95\% confidence intervals obtained using normal distribution (Theorem~\ref{thm:rootn}) and the consistent variance estimator (Theorem~\ref{thm:var}). Results in Figure~\ref{fig:var} suggest that the confidence intervals have coverage probabilities close to 0.95. 
	
	\begin{figure}[h!]
		\centering
		\includegraphics[scale=0.35]{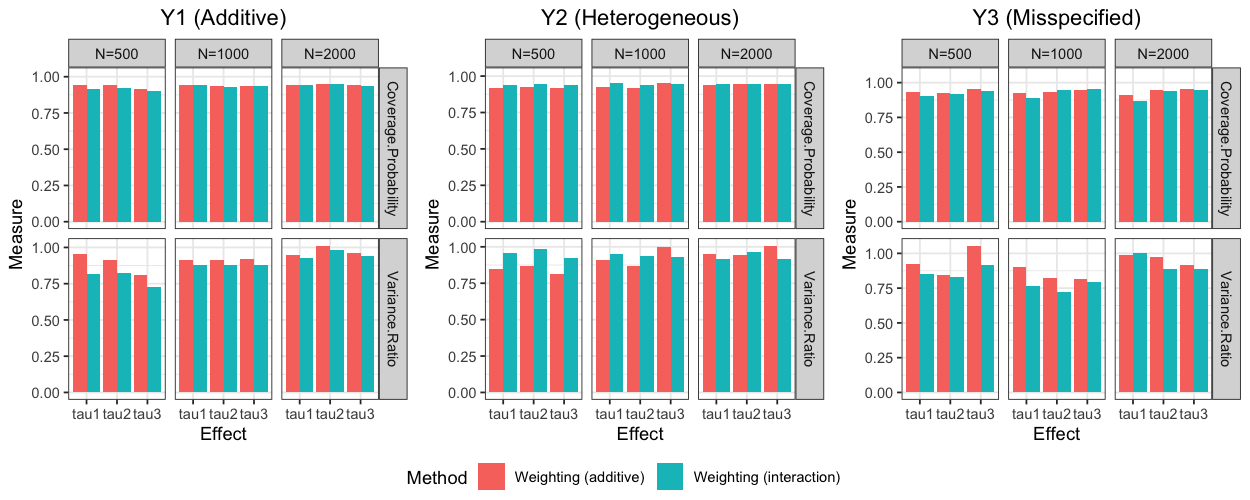}
		\caption{\small Simulated variance estimator and consistent variance estimator for three main effects ($\sqrt{N}(\widehat\tau_k-\tau_k), k=1,2,3$). Variance ratio is the ratio of the consistent variance estimator to the simulated variance estimator. Empirical coverage probabilities of 95\% confidence intervals are calculated using the normal distribution and the proposed consistent variance estimator.}
		\label{fig:var}
	\end{figure}

	\vspace{-1cm}
	
	\section{Application: VOCs Exposure and Tachycardia}\label{sec:data}
	Exposure to volatile organic compounds (VOCs) is pervasive in our daily life due to common indoor and outdoor sources such as building materials, home cleaning products, gasoline, and industrial emissions \citep{batterman2014personal}. These exposures can increase the risk of adverse health consequences, including lung and pulmonary function issues, liver and kidney dysfunction, birth defects (e,g., neurologic impairment), asthma, and respiratory complications \citep{arif2007association,batterman2014personal,johnson1999review,montero2018volatile,woodruff2011environmental}. While many studies have focused on the effects of individual environmental pollutants, people are often exposed to multi-pollutants and pollution mixtures in real life \citep{park2014environmental}. To address this gap, recent studies have begun to investigate the effects of multi-pollutants and pollution mixtures \citep{batterman2014personal,park2014environmental,patel2013systematic}. 
	
	We analyze the biomonitoring data from the  National Health and Nutritional Examination Survey (NHANES) 2003-2012 to evaluate the effects of individual VOC exposures and their two-way interactions on tachycardia (i.e., rapid heart rate) - a risk factor to many cardiovascular diseases - among women of reproductive age. Specifically, we consider four common VOC exposures -- Benzene, 1,4-Dichlorobenzene, Ethylbenzene, and methyl tert-butyl ether (MTBE) -- for a nationally representative sample of 2003 women aged 15-49 years old in the U.S. population. 
	For more discussions about sources of these VOC exposures, see \S S2 in the Supplementary Materials.
	We are interested in estimating the effects of four binary treatments on whether the VOCs' blood concentration exceeds their detection limits.
	Our outcome of interest is heart rate. We also have access to people's demographic information (age, race, family income to poverty ratio), hypertension, alcohol consumption, and smoking status. As shown in Table~S3 in the Supplementary Materials, the sample sizes and the covariate distributions vary across treatment combination groups.

	To remove confounding due to the observed covariates and draw reliable causal conclusions, we employ the proposed weighting estimator with interaction balance constraints and the covariates themselves as basis functions to emulate a $2^4$ full factorial design. As shown in Figure~S6 in the Supplementary Materials \S S2.3, our weighting method greatly improved covariate balances (in terms of absolute standardized mean differences) in all contrasts for main effects and two-way interactions. Successfully controlling for confounding due to age, race, family income to poverty ratio, hypertension, alcohol consumption, and smoking status allows us to estimate the factorial effects of VOC exposures on tachycardia more reliably.
	
	Next, we estimate the factorial effects using our weighting estimator, estimate the variance using the proposed consistent variance estimator, and construct the corresponding 95\% confidence intervals, assuming there are no other unmeasured confounders. The results are summarized in Table~\ref{tb:eff}. Our analysis suggests that exposure to 1,4-Dichlorobenzene and MTBE can significantly increase heart rates, but there is no significant evidence for the effects of Benzene and Ethylbenzene. Furthermore, we found no significant evidence of the effects of one VOC on heart rate depending on the other. Our findings highlight the importance of being cautious while selecting home improvement products and choosing living areas with safe air and water quality. Additionally, our results strengthen the reliability of previous studies that have focused on individual chemicals since the interaction effects are insignificant \citep{arif2007association,montero2018volatile,weichenthal2011traffic}. 
	In comparison, using regression with treatment-covariate interactions to adjust for the confounding effects (as in the simulation studies) would fail to detect the significant factorial effects for the four VOC exposures we examined, whereas the weighting method yields more robust and accurate estimates of these effects.
	\begin{table}[h!]
		\centering
		\caption{Factorial effects (four main effects and six second-order interaction effects) for the NHANES study: Estimated factorial effects, consistent estimated variance, corresponding 95\% confidence interval (CI). $^*$ indicates significant effects at level $\alpha=0.05$.}
		\resizebox{\textwidth}{!}{
			\begin{tabular}{llrrrrrrrr}  
				\hline \hline
				&Treatment& \multicolumn{3}{c}{Weighting}&& \multicolumn{3}{c}{Regression} \\ \cline{3-5}\cline{7-9}
				&& Estimate & Variance &95\%  CI&& Estimate & Variance &95\%  CI \\   \hline
				\multirow{4}{*}{Main Effect}&Benzene & -0.288 & 0.643 & (-1.860, 1.284) &&-0.485  & 0.917  & (-2.362, 1.392)\\   
				&1,4-Dichlorobenzene & 1.715 & 0.655 & (0.129, 3.300)$^*$ && 0.320 &  0.665  &  (-1.278, 1.918)\\   
				&Ethylbenzene & -0.292 & 0.645 & (-1.866, 1.282) && 0.016  & 0.979  &  (-1.923, 1.955)\\   
				&MTBE & 2.056 & 0.658 & (0.466, 3.646)$^*$ && 1.261 &  0.776  &  (-0.466, 2.989)\\   \hline
				\multirow{6}{*}{Two-way Interaction}&Benzene: 1,4-Dichlorobenzene & 0.335 & 0.644 & (-1.238, 1.908) && 0.510  & 0.864  &  (-1.311,      2.331)\\   
				&Benzene: Ethylbenzene & 0.992 & 0.649 & (-0.586, 2.571) && 1.322 &  0.898  &  (-0.535, 3.179)\\   
				&Benzene: MTBE& -0.657 & 0.647 & (-2.234, 0.919) && -0.699  & 0.848  &  (-2.504, 1.107)\\   
				&1,4-Dichlorobenzene: Ethylbenzene & -1.561 & 0.647 & (-3.138, 0.015) && -1.102  & 0.701   & (-2.742, 0.539)\\   
				&1,4-Dichlorobenzene: MTBE & -0.183 & 0.642 & (-1.754, 1.387) && -1.127 &  0.354   & (-2.293, 0.040)\\   
				&Ethylbenzene: MTBE & -1.556 & 0.647 & (-3.132, 0.020) && -1.029   &0.734   & (-2.708, 0.650)\\  \hline \hline
		\end{tabular}}
		\label{tb:eff}
	\end{table}

	\section{Discussion}
	Although this paper focuses on the factorial effects of binary treatments, the proposed methods can be readily extended to multi-level treatments, albeit with additional notational complexity (see Appendix A of \cite{zhao2022regression}). Specifically, for $l$-level treatments, we redefine $A_i^\Omega$ over the extended range $\Omega=1,\dots,l$ instead of the binary choice $\Omega=+,-$. The balance constraints derived in this paper can then be applied to yield unbiased estimates of factorial effects $\tau_\cK$. The procedure for estimating weights and conducting inferences for the factorial effects remains the same as discussed in the main text, so we omit the details for brevity.
	
	The weighting framework introduced in this paper relies on the unconfoundedness assumption, which, while commonly adopted in observational studies, may not always hold in practice. A valuable direction for future research would be to investigate how unmeasured confounders might affect the estimation of factorial effects. One possible approach is to conduct sensitivity analyses. While some sensitivity analysis methods have been proposed for the weighting methods \citep{zhao2019sensitivity,soriano2023interpretable}, extending these techniques to accommodate multiple treatments in factorial studies would be a beneficial advance.

	\section*{Supplementary Material}
	The supplement file contains additional simulation experiments, further details of the NHANES study, extended discussions and illustrative examples of key concepts from the main paper, and the technical details and proofs of the theoretical results. Code for implementing the proposed method is available at the GitHub repository \url{https://github.com/ruoqiyu/FactorialOS}.

	\section*{Data Availability Statement}

	Data are publicly available at  \url{https://wwwn.cdc.gov/nchs/nhanes/Default.aspx}.
	
	\section*{Acknowledgements}
	The authors would like to thank Avi Feller, Luke Keele, Qingyuan Zhao, and José Zubizarreta for insightful discussions.

	\clearpage
	\begin{center}
		{
			\Large Supplementary Materials
		}
	\end{center}

		\setcounter{section}{0}
	\setcounter{equation}{0}
	\setcounter{figure}{0}
	\setcounter{table}{0}
	\setcounter{proposition}{0}
	\setcounter{assumption}{0}
	\def\theequation{S\arabic{section}.\arabic{equation}}
	\def\thesection{S\arabic{section}}
	\def\thefigure{S\arabic{figure}}
	\def\thetable{S\arabic{table}}
	\def\theproposition{S\arabic{proposition}}
	\def\theassumption{S\arabic{assumption}}


	\section{Additional Simulation}\label{app:sim}
	\subsection{Additional Simulation With Varying Sample Sizes} \label{sup:samplesize}
	In this set of simulations, we evaluate the performance of the proposed weighting estimator for estimating factorial effects with various choices of sample sizes $N=500,1000,2000$. We consider the same setting as in \S 5 of the main paper with three factors $\bZ_i=(Z_{i1},Z_{i2},Z_{i3})$ and five covariates $\bX_i=(X_{i1},\dots,X_{i5})$ for each individual $i$. Recall that the covariates $\bX_i$ follows a multivariate normal distribution $\cN(\mathbf{\mu},\Sigma)$, where $\mathbf{\mu}=(0,0,0,0,0)^\textup{T}$ and 
	$\Sigma$ has diagonal elements 1 and off-diagonal elements $\rho=0.4$. The treatment assignment mechanism for $Z_{ik}$ is independent across $k$'s and  satisfies a logistic regression that $\PP(Z_{ik}=1)=1/(1+\exp(-\boldsymbol{\beta}_k^\textup{T}\bX_i))$, where $\boldsymbol{\beta}_1=(1/4,2/4,0,3/4,1), \boldsymbol{\beta}_2=(3/4,1/4,1,0,2/4), \boldsymbol{\beta}_3=(1,0,3/4,2/4,1/4)$. Suppose only the main effects of the three treatments are non-negligible. We consider two outcome models: a well-specified outcome $Y_{i1}=2\sum_{j=1}^5X_{ij}+\sum_{k=1}^3Z_{ik}+\sum_{j=1}^5X_{ij}\sum_{k=1}^3Z_{ik}+\epsilon_{i1}$, and a misspecified outcome $Y_{i2}=4\sin(X_{i1})+\exp(0.4X_{i2}^2)+\left(\min(1,X_{i1})+X_{i2}\right)Z_{i1}+X_{i1}Z_{i2}+\sum_{j=1}^5X_{ij}Z_{i3}+\epsilon_{i2}$, where $\epsilon_{ij}$'s are independent errors following a standard normal distribution. The true main effects for $Y_1$ are $\tau_1=2,\tau_2=2,\tau_3=2$. The true main effects for $Y_2$ are $\tau_1=2\mathbb{E}[\min(X_1,1)],\tau_2=0,\tau_3=0$.  We consider four  estimators for each main effect $\tau_k, k=1,2,3$, under each outcome model: 
	(i) the additive regression estimator, which is twice the coefficient of $Z_{k}$ when regressing $Y$ on the intercept, centralized $X_{j}$ (i.e., $X_j-\bar X_j$) and $Z_{k}$ for $j=1,\dots,5, k=1,2,3$, (ii) the interaction regression estimator, which is twice the coefficient of $Z_{k}$ when regressing $Y$ on the intercept, $X_j-\bar X_j$, $Z_k$ and $(X_j-\bar X_j)Z_k$ for $j=1,\dots,5, k=1,2,3$ \citep{zhao2023covariate}, (iii) the proposed weighting estimator using balance constraints under the general additive model assumption (additive balance constraints) and covariate basis functions $h_s(\bX)=X_{s}, s=1,\dots,5$, and (iv) the proposed weighting estimator using balance constraints under the outcome model assumption with treatment effect heterogeneity (interaction balance constraints) and the same set of basis functions as (iii). In each simulation setting, we compare the absolute bias and root mean squared error (RMSE) using 1000 repetitions. We summarize the results in Figure~\ref{fig:estn}. The performances of all four estimators get better as the sample size increases. The proposed weighting estimator with interaction balance constraints remains the best estimator as we vary the sample size. 
	
	\begin{figure}[h!]
		\centering
		\includegraphics[scale=0.45]{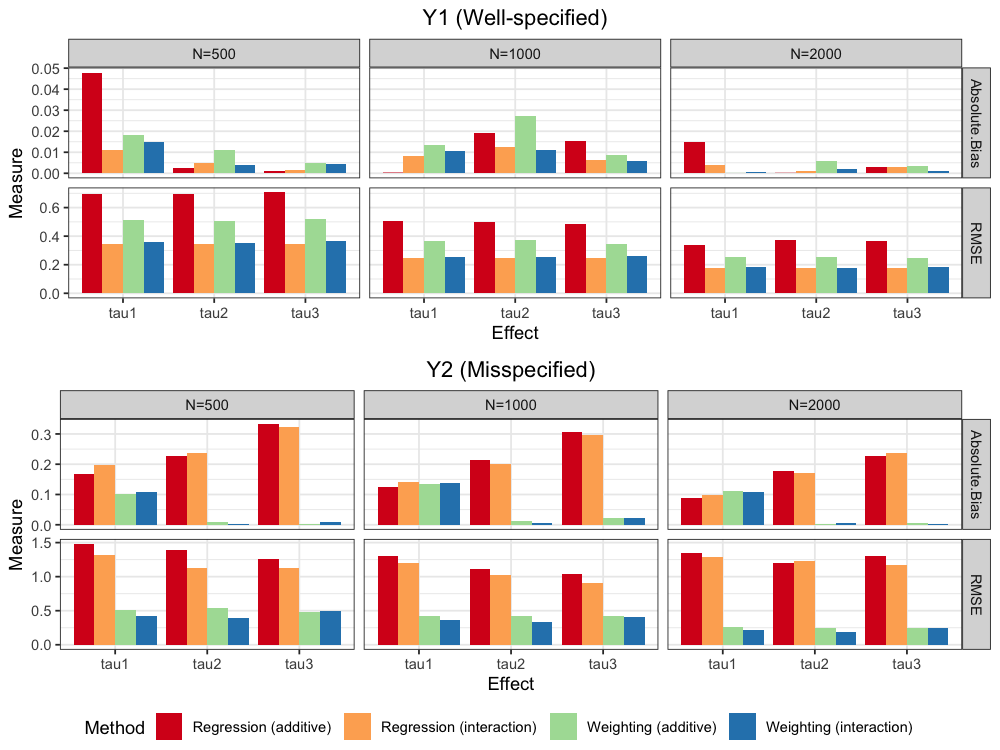}
		\caption{Bias and root mean squared error (RMSE) for estimating three main effects over 1000 repetitions using four estimators when $N=500,1000,2000$ and $\rho=0.4$.}
		\label{fig:estn}
	\end{figure}
	
	\subsection{Additional Simulation With Varying Covariance Matrices}
	In the second set of simulations, we consider the same setting as in \S 5 of the main paper with different choices of covariance for each pair of covariates $\rho=0.2,0.4,0.6$. For the detailed description of the setup, see \S\ref{sup:samplesize} of the Supplementary Materials. In each simulation setting, we calculate the absolute bias and root mean squared error (RMSE) of the four estimators using 1000 repetitions. We summarize the results in Figure~\ref{fig:estr}. We can observe that the proposed weighting estimator maintains small bias and RMSE as $\rho$ varies, suggesting its robustness to dependence among covariates.
	
	\begin{figure}[h!]
		\centering
		\includegraphics[scale=0.45]{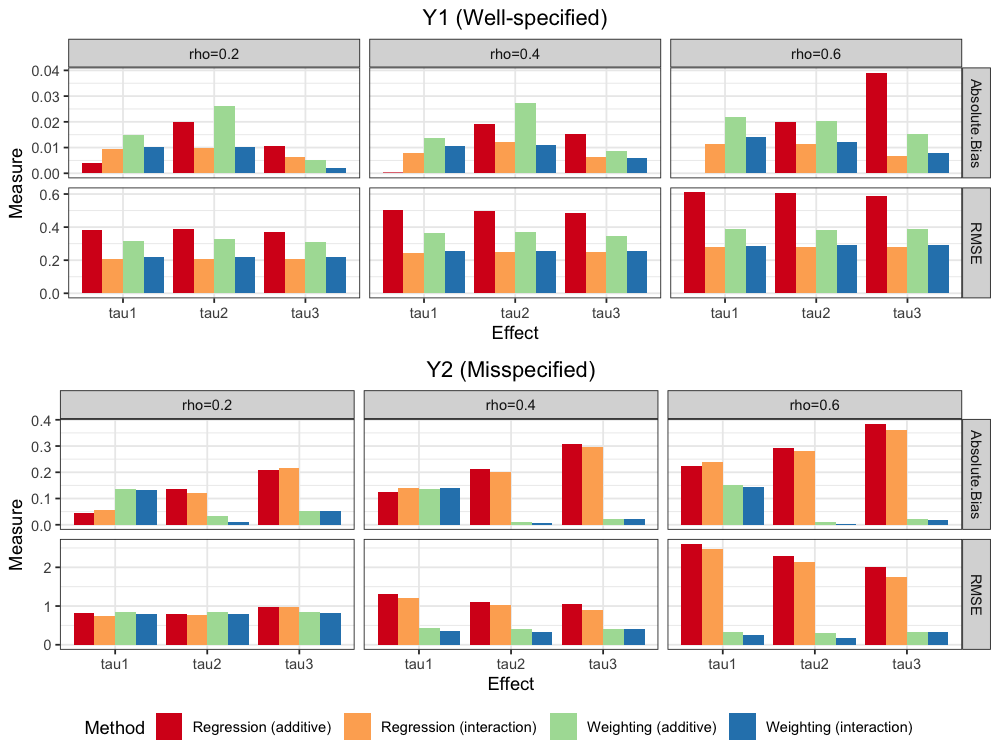}
		\caption{Bias and root mean squared error (RMSE) for estimating three main effects over 1000 repetitions using four estimators when $N=1000$ and $\rho=0.2,0.4,0.6$.}
		\label{fig:estr}
	\end{figure}
	
	\subsection{Performance Under Heteroskedasticity}
	
	In the following set of simulations, we evaluate the performance of the proposed weighting estimator with interaction balance constraints under heteroskedasticity. We consider the setting in the first set of simulation in \S 5 of the main paper with a sample size of $N=1000$ and covariance $\rho=0.4$, but allow the independent errors $\epsilon_{ij}$ to follow a normal distribution $\cN(0, v_{ij})$, where the variance $v_{ij}$ is randomly drawn from a uniform distribution on the range $[0,C]$. Here, the upper bound $C$ controls the scale of heteroskedasticity, and we consider $C=1$ and $C=10$. We calculate the bias, RMSE, 
	and 95\% confidence interval coverage probabilities over 1000 repetitions. Results in Figure~\ref{fig:hetero} indicate that the proposed weighting estimator performs well in both cases, achieving small biases, RMSEs, and reasonable confidence interval coverages.
	
	\begin{figure}[h!]
		\centering
		\includegraphics[scale=0.45]{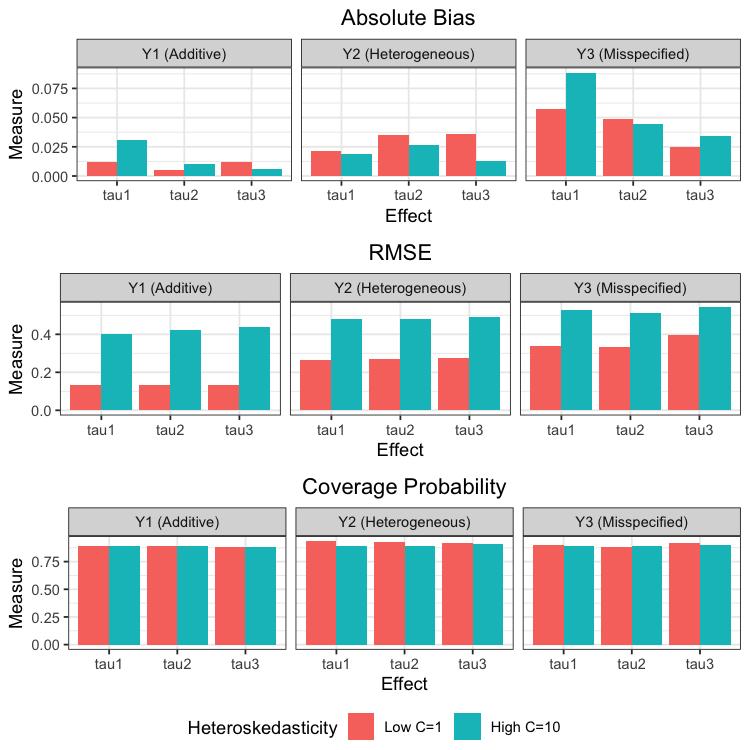}
		\caption{Performance under heteroskedasticity: Bias, root mean squared error (RMSE), and coverage probability of 95\% confidence intervals for estimating three main effects using the proposed weighting estimator under an outcome model assumption with treatment effect heterogeneity and covariate basis functions $h_s(\bX)=X_{s}, s=1,\dots,5$, over 1000 repetitions. }
		\label{fig:hetero}
	\end{figure}

	\subsection{Simulation with Unobserved Treatment Combination Groups}
	In this set of simulations, we evaluate the performance of the proposed weighting estimators when some treatment combination groups are not observed. We consider the setting in the first set of simulations in \S 5 of the main paper with a sample size of $N=1000$ and covariance $\rho=0.4$ but a different treatment assignment mechanism. Specifically, let $\boldsymbol{\beta}_1=(1/4,2/4,0,3/4,1), \boldsymbol{\beta}_2=(3/4,1/4,1,0,2/4), \boldsymbol{\beta}_3=(1,0,3/4,2/4,1/4)$.  We assume $\PP(Z_{ik}=1)=1/(1+\exp(-\boldsymbol{\beta}_k^\textup{T}\bX_i))$ for $k=1,2$ and $\PP(Z_{i3}=1)=1$ if $Z_{i1}=Z_{i2}=1$, $\PP(Z_{i3}=-1)=1$ if $Z_{i1}=Z_{i2}=-1$, and $\PP(Z_{i3}=1)=1/(1+\exp(-\boldsymbol{\beta}_3^\textup{T}\bX_i))$ otherwise. In doing so, two of the $2^3=8$ treatment combination groups are not observed, so the weighting estimators in \S4 in the main paper are applicable. We first compare the performance of the two regression estimators and the two weighting estimators using biases and RMSEs over 1000 repetitions. Based on the results in Figure \ref{fig:estu}, we can observe that the weighting estimator with interaction balance constraints can consistently achieve small biases and RMSEs in all cases. In addition, we evaluate the performance of our variance estimators by calculating the 95\% confidence interval coverage probabilities over 1000 repetitions. Results in Table~\ref{tb:unobs} indicate that the proposed variance estimator is close to the simulated variance, and the confidence interval coverage probabilities are close to 0.95.
	
	\begin{figure}[h!]
		\centering
		\includegraphics[scale=0.45]{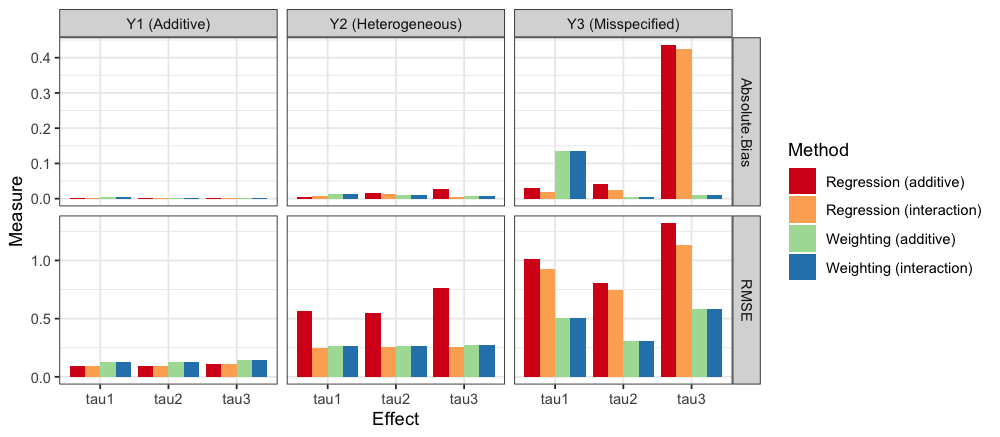}
		\caption{Performance with unobserved treatment combinations: Bias and root mean squared error (RMSE) for estimating three main effects over 1000 repetitions using four estimators when $N=1000$ and $\rho=0$.}
		\label{fig:estu}
	\end{figure}

	\begin{table}[ht]
		\centering
		\caption{Performance with unobserved treatment combinations: Bias, root mean squared error (RMSE), simulated variance estimator, consistent variance estimator, ratio of the consistent variance estimator to the simulated variance estimator, and coverage probability of 95\% confidence intervals for estimating three main effects using the proposed weighting estimator under an outcome model assumption with treatment effect heterogeneity and covariate basis functions $h_s(\bX)=X_{s}, s=1,\dots,5$, over 1000 repetitions. }
		\resizebox*{0.8\textwidth}{!}{
			\begin{tabular}{llrrrrrr}
				\hline	\hline
				Outcome & Effect & Bias & RMSE & Simulated Var & Consistent Var & Variance Ratio & CI Coverage \\   \hline
				\multirow{3}{*}{$Y_1$ (Additive)}&$\tau_1$ & -0.004 & 0.127 & 16.123 & 14.427 & 0.895 & 0.933 \\ 
				&$\tau_2$ & 0.001 & 0.123 & 15.232 & 13.783 & 0.905 & 0.932 \\ 
				&$\tau_3$ & -0.001 & 0.143 & 20.499 & 18.818 & 0.918 & 0.943 \\ \hline
				\multirow{3}{*}{$Y_2$ (Heterogeneous)}&$\tau_1$ & 0.012 & 0.264 & 69.521 & 66.146 & 0.951 & 0.952 \\
				&$\tau_2$ &0.011 & 0.265 & 69.913 & 65.563 & 0.938 & 0.936 \\ 
				&$\tau_3$ &0.006 & 0.274 & 74.914 & 70.772 & 0.945 & 0.940 \\  \hline
				\multirow{3}{*}{$Y_3$ (Misspecified)}&$\tau_1$ & 0.135 & 0.503 & 235.185 & 157.349 & 0.669 & 0.913\\
				&$\tau_2$ &-0.005 & 0.309 & 95.803 & 79.909 & 0.834 & 0.954 \\
				&$\tau_3$ & 0.010 & 0.586 & 343.184 & 242.920 & 0.708 & 0.940 \\ 
				\hline
		\end{tabular}}
		\label{tb:unobs}
	\end{table}

	\subsection{Additional Simulation With More Factors}
	In this additional set of simulations, we consider a more complicated scenario with five factors $\bZ_i=(Z_{i1},\dots,Z_{i5})$ and five covariates $\bX_i=(X_{i1},\dots,X_{i5})$. As in the simulation in the main paper, the covariates follow a multivariate normal distribution $\cN(\mathbf{\mu},\Sigma)$, where $\mathbf{\mu}=(0,0,0,0,0)^\textup{T}$ and $\Sigma_{ij}=1$ for $i=j$
	and $\Sigma_{ij}=\rho=0.4$ otherwise. The treatment assignment mechanism for $Z_{ik}$  is independent across $k$'s and satisfies a logistic regression that $\PP(Z_{ik}=1)=1/(1+\exp(-\boldsymbol{\beta}_k^\textup{T}\bX_i))$, where $\boldsymbol{\beta}_1=(1/4,2/4,0,3/4,1), \boldsymbol{\beta}_2=(3/4,1/4,1,0,2/4), \boldsymbol{\beta}_3=(1,0,3/4,2/4,1/4), \boldsymbol{\beta}_4=(1/4,-1/4,1,3/4,2/4), 
	\boldsymbol{\beta}_5=(0,3/4,-2/4,2/4,1/4)$. Suppose all main effects and second-order interaction effects are non-negligible. We consider a sample size of $N=2000$ and two outcome models: a well-specified outcome $Y_{i1}=3\sum_{j=1}^5X_{ij}+2\sum_{k=1}^3Z_{ik}+2\sum_{j=1}^5X_{ij}\sum_{k=1}^2Z_{ik}+\sum_{k_1, k_2=1,...,5: k_1\neq k_2}Z_{k_1}Z_{k_2}+\sum_{j=1}^5X_{ij}(Z_{i1}Z_{i2}+Z_{i1}Z_{i3})+\epsilon_{i1}$, and a misspecified outcome $Y_{i2}=\exp(0.4X_{i5}^2)+2\sum_{k=1}^5Z_{ik}+\max(X_{i5},1)Z_{i1}+2\sum_{j=1}^5X_{ij}Z_{i2}+\sum_{k_1, k_2=1,...,5: k_1\neq k_2}Z_{k_1}Z_{k_2}+\sin(X_{i5})Z_{i1}Z_{i2}+\sum_{j=1}^5X_{ij}Z_{i1}Z_{i3}+\epsilon_{i2}$, where $\epsilon_{ij}$'s are independent errors following a standard normal distribution. The true main effects for $Y_1$ are $\tau_1=\cdots=\tau_5=4$ and $\tau_{12}=\cdots=\tau_{45}=2$. The true main effects for $Y_2$ are $\tau_1=4+2\mathbb{E}[\max(X_5,1)]$, $\tau_2=\cdots=\tau_5=4$ and $\tau_{12}=\cdots=\tau_{45}=2$. We compare four estimators: (i) the additive regression estimator by regressing $Y$ on the intercept, $X_{1}-\bar X_{1},\dots,X_{5}-\bar X_{5}$, $Z_{1},\dots,Z_{5}$ and $Z_{k_1}Z_{k_2}$ for $k_1=1,\dots,4, k_2=k_1+1,\dots,5$, (ii) the interaction regression estimator by regressing $Y$ on the intercept, $X_{j}-\bar X_{j},Z_{k},Z_{k_1}Z_{k_2}, (X_j-\bar X_{j})Z_k, (X_j-\bar X_{j})Z_{k_1}Z_{k_2}$ for $j=1,...,5; k=1,...,5; k_1=1,\dots,4, k_2=k_1+1,\dots,5$, (iii) the proposed weighting estimator using additive balance constraints and covariate basis functions $h_s(\bX)=X_{s}, s=1,\dots,5$, and (iv) the proposed weighting estimator using interaction balance constraints and covariate basis functions $h_s(\bX)=X_{s}, s=1,\dots,5$,. We compare the bias and RMSE with 1000 repetitions and summarize the results in Figure~\ref{fig:est5}. We observe similar patterns as before: (i) when the outcome is correctly specified, the regression estimator and the weighting estimator have similar performance; (ii) when the regression model is misspecified, the weighting estimators with interaction balance constraints can achieve smaller RMSEs. 
	
	\begin{figure}[h!]
		\centering
		\includegraphics[scale=0.45]{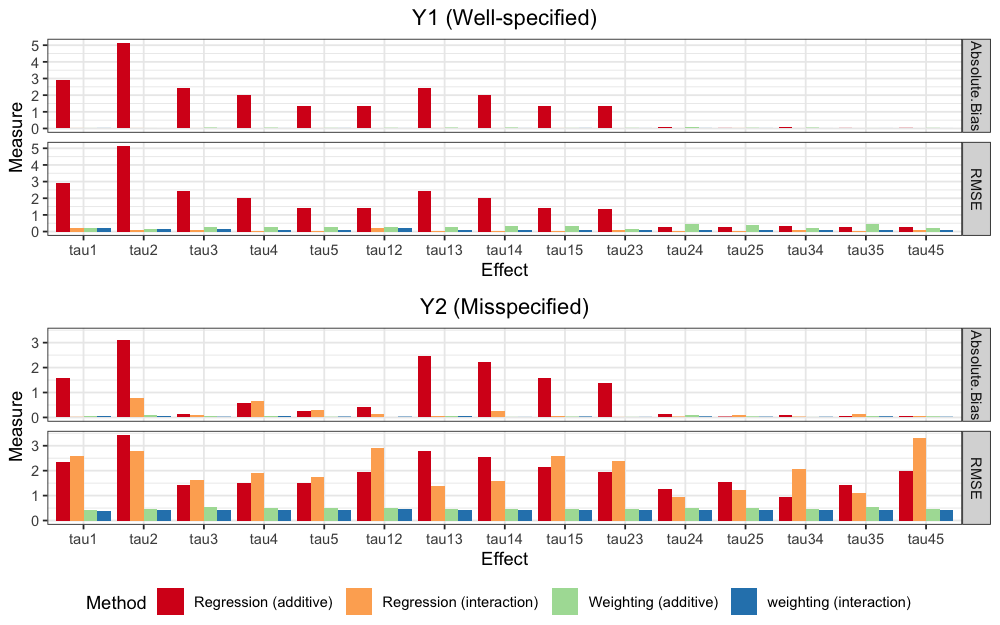}
		\caption{Bias and root mean squared error (RMSE) for estimating five main effects and ten two-way interactions over 1000 repetitions using four estimators when $N=1000$ and $\rho=0.4$.}
		\label{fig:est5}
	\end{figure}
	
	\subsection{Factorial Observational Studies with Rare Treatment Combination Groups}
	In this set of simulations, consider a setting where all treatment combination groups are observed but some of them are rare. We consider three factors $\bZ_i=(Z_{i1},Z_{i2},Z_{i3})$ and five covariates $\bX_i=(X_{i1},\dots,X_{i5})$ for each individual $i$. The covariates $\bX_i$ follows a multivariate normal distribution $\cN(\mathbf{\mu},\Sigma)$, where $\mathbf{\mu}=(0,0,0,0,0)^\textup{T}$ and $\Sigma$ has diagonal elements 1 and off-diagonal elements $\rho=0.4$. The treatment assignment mechanism for $Z_{ik}$ is independent across $k$'s and  satisfies a logistic regression that $\PP(Z_{ik}=1)=1/(1+\exp(-\alpha_k-\boldsymbol{\beta}_k^\textup{T}\bX_i))$, where $\alpha_1=-2,\alpha_2=0,\alpha_3=2$ and $\boldsymbol{\beta}_1=(1/4,2/4,0,3/4,1), \boldsymbol{\beta}_2=(3/4,1/4,1,0,2/4), \boldsymbol{\beta}_3=(1,0,3/4,2/4,1/4)$. Suppose only the main effects of the three treatments are non-negligible. We consider three outcome models: an additive outcome $Y_{i1}=2\sum_{j=1}^5X_{ij}+\sum_{k=1}^3Z_{ik}+\epsilon_{i1}$, a heterogeneous treatment effect outcome $Y_{i2}=2\sum_{j=1}^5X_{ij}+\sum_{k=1}^3Z_{ik}+\sum_{j=1}^5X_{ij}\sum_{k=1}^3Z_{ik}+\epsilon_{i2}$, and a misspecified outcome $Y_{i3}=4\sin(X_{i1})+\exp(0.4X_{i2}^2)+\left(\min(1,X_{i1})+X_{i2}\right)Z_{i1}+X_{i1}Z_{i2}+\sum_{j=1}^5X_{ij}Z_{i3}+\epsilon_{i3}$, where $\epsilon_{ij}$'s are independent errors following a standard normal distribution. We consider two candidate ways of using the proposed weighting methods (with interaction balance constraints) to estimate main effects $\tau_k, k=1,2,3$, under each outcome model: 
	(i) use the entire observed data to emulate a full factorial design; (ii) discard the two rare treatment combination groups ($\mathbf{z}=(1,-1,-1)$ and $(1,1,-1)$) and use the remaining data to emulate an incomplete factorial design. In each simulation setting, we compare the bias and root mean squared error (RMSE) using 1000 repetitions. We summarize the results in Table~\ref{tb:rare}. We can see that it is generally safer to use the whole observed data (option 1), which can give us smaller bias and RMSE. 
	
	\begin{table}[ht]
		\centering
		\caption{Performance with rare treatment combination groups: Bias and root mean squared error (RMSE)}
		\resizebox*{0.7\textwidth}{!}{
			\begin{tabular}{llrrrrr}
				\hline	\hline
				Outcome & Effect & \multicolumn{2}{c}{Bias} && \multicolumn{2}{c}{RMSE}  \\  \cline{3-4}\cline{6-7}
				&&Full &Incomplete && Full & Incomplete\\ \hline
				\multirow{3}{*}{$Y_1$ (Additive)}&$\tau_1$ & 0.062 & -0.370 && 0.317 & 3.917 \\ 
				&$\tau_2$ &-0.008 & -0.250 && 0.309 & 3.484 \\ 
				&$\tau_3$ & 0.101 & -0.200 && 0.325 & 4.855 \\  \hline
				\multirow{3}{*}{$Y_2$ (Heterogeneous)}&$\tau_1$ & 0.082 & -0.352 && 0.453 & 4.825 \\ 
				&$\tau_2$ &-0.028 & -0.257 && 0.430 & 3.406 \\ 
				&$\tau_3$ &-0.005 & -0.221 && 0.412 & 3.842 \\  \hline
				\multirow{3}{*}{$Y_3$ (Misspecified)}&$\tau_1$ & -0.047 & -0.117 && 0.878 & 3.593 \\
				&$\tau_2$ &-0.126 & -0.014 && 0.891 & 3.004 \\ 
				&$\tau_3$ &0.060 & 0.071 && 0.902 & 3.919 \\ 
				\hline
		\end{tabular}}
		\label{tb:rare}
	\end{table}
	
	\section{NHANES Study: VOCs Exposure and Tachycardia}\label{app:nhanes}
	\subsection{Sources of VOCs Exposure}
	According to the United States Environmental Protection Agency, Benzene and Ethylbenzene exposures can come from both outdoor sources (tobacco smoke, gas stations, motor vehicle exhaust, and industrial emissions) and indoor sources (glues, paints, furniture wax, and detergents). The primary source of 1,4-Dichlorobenzene is from breathing vapors from 1,4-Dichlorobenzene products used in the home, such as mothballs and toilet deodorizer blocks. People can be exposed to MTBE by breathing contaminated air and drinking contaminated water from gasoline and fuel facilities. 
	
	\subsection{Descriptive Statistics for the NHANES Study}
	In the NHANSE Study, the four factors give us $2^4=16$ treatment combination groups. Table~\ref{tb:descriptive} describes the sample size and covariate means in each group.
	We can observe that the sample sizes and the covariate distributions vary across treatment combination groups. 
	\begin{table}[h!]
		\centering
		\caption{Descriptive statistics for $2^4=16$ treatment combination groups in the NHANES study, with their sample size and covariate means.}
		\resizebox{0.9\textwidth}{!}{
			\begin{tabular}{rrrrrrrrrrr}  \hline\hline
				\multicolumn{4}{c}{Treatment Combination}&Sample&\multicolumn{6}{c}{Covariate Mean}\\\cline{1-4}\cline{6-11}
				Benzene & 1,4-Dichlorobenzene & Ethylbenzene & MTBE & Size & Age & Black & Povertyr & Hypertension & Alcohol & Smoking \\   \hline
				-1 & -1 & -1 & -1 & 322 & 34.42 & 0.11 & 3.08 & 0.19 & 0.71 & 0.02 \\   
				-1& -1 & -1 & +1& 139 & 33.13 & 0.08 & 2.98 & 0.20 & 0.64 & 0.02 \\   
				-1 & -1 & +1 & -1 &  86 & 35.76 & 0.08 & 3.06 & 0.27 & 0.74 & 0.02 \\   
				-1 & -1 & +1 & +1 &  66 & 34.48 & 0.15 & 3.13 & 0.18 & 0.65 & 0.02 \\   
				-1 & +1 & -1 & -1 & 386 & 33.72 & 0.29 & 2.43 & 0.20 & 0.57 & 0.02 \\   
				-1 & +1 & -1 & +1 & 126 & 32.98 & 0.41 & 2.30 & 0.25 & 0.52 & 0.01 \\   
				-1 & +1 & +1 & -1 &  74 & 29.69 & 0.18 & 2.09 & 0.09 & 0.59 & 0.04 \\   
				-1 & +1 & +1 & +1 &  75 & 34.20 & 0.35 & 2.22 & 0.13 & 0.60 & 0.05 \\   
				+1 & -1 & -1 & -1 &  21 & 32.33 & 0.10 & 2.96 & 0.24 & 0.81 & 0.24 \\   
				+1 & -1 & -1 & +1 &  27 & 32.70 & 0.11 & 2.25 & 0.22 & 0.63 & 0.11 \\   
				+1& -1& +1 & -1 & 158 & 34.74 & 0.13 & 2.08 & 0.28 & 0.78 & 0.75 \\   
				+1 & -1 & +1 & +1& 147 & 33.80 & 0.11 & 2.74 & 0.20 & 0.76 & 0.45 \\   
				+1 & +1 & -1 & -1 &  34 & 31.44 & 0.21 & 2.17 & 0.35 & 0.50 & 0.21 \\   
				+1 & +1& -1 & +1 &  15 & 32.73 & 0.13 & 2.33 & 0.13 & 0.60 & 0.07 \\   
				+1 & +1& +1 & -1 & 203 & 34.94 & 0.32 & 1.82 & 0.26 & 0.79 & 0.72 \\   
				+1 & +1 & +1 & +1 & 124 & 34.05 & 0.33 & 1.98 & 0.27 & 0.68 & 0.49 \\    
				\hline\hline
		\end{tabular}}
		\label{tb:descriptive}
	\end{table}
	
	\subsection{Covariate Balance Before and After Weighting}\label{app:covbal}
	To remove confounding due to the observed covariates and draw reliable causal conclusions, the proposed weighting estimator (with interaction balance constraints and first-order basis functions) is applied to reduce the biases from the observed covariates. We evaluate the covariate balance in terms of absolute standardized mean differences and summarize the results in Figure~\ref{fig:balance}. We can observe that our weighting method greatly improved covariate balances in all contrasts for main effects and two-way interactions. 
	\begin{figure}[h!]
		\centering
		\includegraphics[scale=0.5]{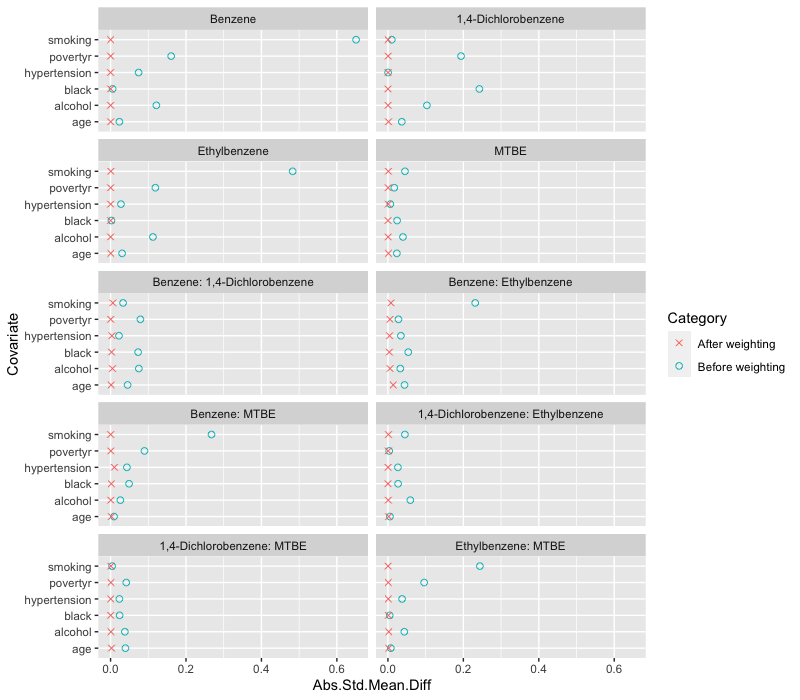}
		\caption{Absolute standardized mean differences before and after weighting.}
		\label{fig:balance}
	\end{figure}

	\section{Discussions and Examples}\label{app:discussion}
	
	\subsection{Bias Decomposition Under the General Additive Outcome Model}\label{app:bias}
	In this subsection, we give more details for equation (3) in \S 3.1.1. 
	With the general additive outcome model in Assumption~2, we can decompose the 
	bias of estimating $\tau_{\cK}$ into two components:
	
	\begin{align*}
		\mathbb{E}\left[\widehat\tau^+_{\cK}-\tau^+_{\cK}\mid\{\mathbf{X}\}_{i=1}^N,\{\mathbf{Z}\}_{i=1}^N\right]
		=&\mathbb{E} \left[\frac{1}{N}\sum_{i=1}^N w_iA_{i{\cK}}^+Y_i^\textup{obs} -\frac{1}{2^{K-1}}\sum_{\mathbf{z}\in\mathcal{Z}}g_{{\cK}\mathbf{z}}^+\mathbb{E}[Y(\mathbf{z})]\ \middle|\ \{\mathbf{X}\}_{i=1}^N,\{\mathbf{Z}\}_{i=1}^N\right]\\
		=&\frac{1}{N}\sum_{i=1}^N w_iA_{i{\cK}}^+\mathbb{E}[Y_i(\mathbf{Z}_i)\mid\mathbf{X}_i] -\frac{1}{2^{K-1}} \sum_{\mathbf{z}\in\mathcal{Z}}g_{{\cK}\mathbf{z}}^+\mathbb{E}[Y(\mathbf{z})]\\
		=&\frac{1}{N}\sum_{i=1}^N w_iA_{i{\cK}}^+\left(\sum_{s=1}^S\alpha_s h_s(\mathbf{X}_i)+\sum_{J\in\bKK}\beta_{J}\prod_{j\in J}Z_{ij}\right)\\
		&-\frac{1}{2^{K-1}} \sum_{\mathbf{z}\in\mathcal{Z}}g_{{\cK}\mathbf{z}}^+\left(\sum_{s=1}^S\alpha_s \mathbb{E}[h_s(\mathbf{X}_i)]+\sum_{J\in\bKK}\beta_{J}\prod_{j\in J}z_{j}\right)\\
		=&\sum_{s=1}^S\alpha_s\left(\frac{1}{N}\sum_{i=1}^N w_iA_{i{\cK}}^+h_s(\mathbf{X}_i)-\mathbb{E}[h_s(\mathbf{X}_i)]\right)\\
		&+\sum_{J\in\bKK}\beta_J\left(\frac{1}{N}\sum_{i=1}^N w_iA_{i{\cK}}^+\prod_{j\in J}Z_{ij}-\frac{1}{2^{K-1}} \sum_{\mathbf{z}\in\mathcal{Z}}g_{{\cK}\mathbf{z}}^+\prod_{j\in J}z_{j}\right),
	\end{align*}
	and similarly,
	\begin{align*}
		\mathbb{E} \left[\widehat\tau^-_{\cK}-\tau^-_{\cK}\mid \{\mathbf{X}\}_{i=1}^N,\{\mathbf{Z}\}_{i=1}^N\right]
		=&\sum_{s=1}^S\alpha_s\left(\frac{1}{N}\sum_{i=1}^N w_iA_{i{\cK}}^-h_s(\mathbf{X}_i)-\mathbb{E}[h_s(\mathbf{X}_i)]\right)\\
		&+\sum_{J\in\bKK}\beta_J\left(\frac{1}{N}\sum_{i=1}^N w_iA_{i{\cK}}^-\prod_{j\in J}Z_{ij}-\frac{1}{2^{K-1}} \sum_{\mathbf{z}\in\mathcal{Z}}g_{{\cK}\mathbf{z}}^-\prod_{j\in J}z_{j}\right).
	\end{align*}
	
	\subsection{A Toy Example of Estimating Main Effects ($K'=1$)}\label{app:diff}
	To highlight the difference between the general additive outcome model in 
	Assumption~2 and the heterogeneous treatment effect outcome model in 
	Assumption~3  in the main paper, we examine a simple scenario in which only the main effects are non-zero (i.e., $K'=1$). That is, we are interested in estimating the main effects $\tau_\cK, {\cK}\in{[K]}_1$ (i.e., ${\cK}=1,\dots,K$). 
	
	\begin{itemize}
		\item 
		The general additive outcome model becomes
		$$\mathbb{E}[Y_i(\mathbf{z})\mid\mathbf{X}_i]=\sum_{s=1}^S\alpha_s h_s(\mathbf{X}_i)+\sum_{k=1}^K\beta_{k}z_k.$$ The analysis in Section~3.1.1 suggests that we need the following balance constraints to achieve an unbiased estimate of the main effect $\tau_{\cK}$: $\textrm{for all } s=1,\dots, S$, $k=1,\dots,K$, 
		\begin{align*}
			\sum_{i=1}^N w_iA_{i{\cK}}^+h_s(\mathbf{X}_i)=\sum_{i=1}^N h_s(\mathbf{X}_i), &\quad \frac{1}{N}\sum_{i=1}^N w_iA_{i{\cK}}^+Z_{ik}=\frac{1}{2^{K-1}}\sum_{\mathbf{z}\in\mathcal{Z}} g_{{\cK}\mathbf{z}}^+z_{k},\\
			\sum_{i=1}^N w_iA_{i{\cK}}^-h_s(\mathbf{X}_i)= \sum_{i=1}^N h_s(\mathbf{X}_i), &\quad \frac{1}{N}\sum_{i=1}^N w_iA_{i{\cK}}^-Z_{ik}=\frac{1}{2^{K-1}}\sum_{\mathbf{z}\in\mathcal{Z}} g_{{\cK}\mathbf{z}}^-z_k.
		\end{align*}
		
		\item 
		The outcome model with treatment effect heterogeneity becomes $$\mathbb{E}[Y_i(\mathbf{z})\mid\mathbf{X}_i]=\sum_{s=1}^S h_{s}(\mathbf{X}_i) \sum_{k=1}^K\alpha_{sk}z_k.$$ To obtain unbiased estimates under this relaxed model, we require a different set of balance constraints as discussed in Section~3.1.2: $\textrm{ for all }s=1,\dots, S$, $k=1,\dots,K$, 
		\begin{align*}
			\sum_{i=1}^Nw_iA_{i{\cK}}^+h_s(\mathbf{X}_i)Z_{ik}&=\frac{\sum_{\mathbf{z}\in\mathcal{Z}}g_{{\cK}\mathbf{z}}^+z_k}{2^{K-1}}\sum_{i=1}^N h_s(\mathbf{X}_i), \\
			\sum_{i=1}^N w_iA_{i{\cK}}^-h_s(\mathbf{X}_i)Z_{ik}&= \frac{\sum_{\mathbf{z}\in\mathcal{Z}}g_{{\cK}\mathbf{z}}^-z_{k}}{2^{K-1}}\sum_{i=1}^N h_s(\mathbf{X}_i).
		\end{align*}
	\end{itemize}
	The comparison above indicates that we need to balance $2S+2K$ constraints under the general additive outcome model and $2SK$ constraints under the heterogeneous treatment effect outcome model, where $S$ denotes the number of basis functions for covariates and $K$ represents the number of treatments. With multiple factors ($K\geq 2$), the heterogeneous treatment effect outcome model requires more balance constraints when there are more than two basis functions ($S>2$), making the problem more challenging but also enhancing the robustness of the conclusion. On the other hand, as discussed in Section~3.1.1, when there is only one treatment (i.e., $K=1$), the balancing constraints for the treatments are equivalent to those for a constant covariate. Thus, we need to balance $2S$ balancing constraints under both models.
	
	\subsection{Derivation of the Dual Problem of Problem (5)}\label{app:dual}
	
	Recall that we write the linear constraints of problem~(5) as $\mathbf{Bw}=\bb$, where the matrix $\bB$ has $i$th column $\bB_{i}$ representing all balance criteria evaluated at individual $i$ and $\bb=\sum_{i=1}^N \bb_i$. Specifically, 
	\begin{align*}
		\bB_i&=\left\{A_{i{\cK}}^+q_{sJ}(\bX_i,\bZ_i), A_{i{\cK}}^-q_{sJ}(\bX_i,\bZ_i)\right\}_{{\cK}\in\bKK,\ s=1,\dots,S,\ J\in\bKK},\\ \bb_i&=\left\{\frac{1}{2^{K-1}}\sum_{\bz\in\mathcal{Z}} g_{{\cK}\bz}^+q_{sJ}(\bX_i,\bz),\frac{1}{2^{K-1}}\sum_{\bz\in\mathcal{Z}} g_{{\cK}\bz}^-q_{sJ}(\bX_i,\bz)\right\}_{{\cK}\in\bKK,\ s=1,\dots,S,\ J\in\bKK}.
	\end{align*}
	Since $(\bZ_i, \bX_i)$'s are independent and identically distributed, it follows that $(\bB_{i}, \bb_{i})$, $i=1,\dots,N$, are independent and identically distributed. We can rewrite the balance constraints $\mathbf{Bw}=\bb$ as $\sum_{i=1}^N \bB_{i} w_i=\sum_{i=1}^N \bb_i$. Then, the corresponding Lagrangian is given by
	\begin{equation}\label{prob: lagrangian}
		\cL(\mathbf{w};\blambda,\bgamma)=\sum_{i=1}^N m(w_i)+\blambda^\textup{T}(\mathbf{Bw}-\mathbf{b})-\bgamma^\textup{T}\mathbf{w},
	\end{equation}
	where $\blambda$ and $\bgamma\ge 0$ are the Lagrangian multipliers for the balance constraints and non-negativity requirements, respectively. The Lagrangian relaxation of problem~(5) minimizes the Lagrangian in equation~(\ref{prob: lagrangian}). For any set of Lagrangian multipliers $(\blambda, \bgamma)$, the corresponding optimal solution of weights can be obtained by setting 
	$
	\frac{\partial}{\partial w_i} \cL(\mathbf{w};\blambda,\bgamma) =m'(w_i)+\blambda^\textup{T}\bB_{i}-\gamma_i
	=0.
	$
	We can then obtain the closed-form optimal solution 
	$$\widehat w_i(\blambda,\bgamma)=(m')^{-1}(\gamma_i-\blambda^\textup{T}\bB_{i}),$$
	with optimal objective value 
	\begin{align*}
		\cL^*(\boldsymbol{\lambda},\boldsymbol{\gamma})
		&=\min_\mathbf{w} \cL(\mathbf{w};\boldsymbol{\lambda},\boldsymbol{\gamma})\\
		&=\sum_{i=1}^N\left\{m(\widehat w_i(\boldsymbol{\lambda},\boldsymbol{\gamma}))+\boldsymbol{\lambda}^\textup{T}\left(\mathbf{B}_{i}\widehat w_i(\boldsymbol{\lambda},\boldsymbol{\gamma})-\mathbf{b}_i\right)-\gamma_i \widehat w_i(\boldsymbol{\lambda},\boldsymbol{\gamma})\right\} \\
		&=\sum_{i=1}^N\left\{m\left((m')^{-1}\left(\gamma_i-\boldsymbol{\lambda}^\textup{T}\mathbf{B}_{i}\right)\right)+\boldsymbol{\lambda}^\textup{T}\left(\mathbf{B}_{i}(m')^{-1}(\gamma_i-\boldsymbol{\lambda}^\textup{T}\mathbf{B}_{i})-\mathbf{b}_i\right)-\gamma_i(m')^{-1}\left(\gamma_i-\boldsymbol{\lambda}^\textup{T}\mathbf{B}_{i}\right)\right\} \\
		&=\sum_{i=1}^N\{\rho(\gamma_i-\boldsymbol{\lambda}^\textup{T}\mathbf{B}_{i})-\boldsymbol{\lambda}^\textup{T}\mathbf{b}_i\},
	\end{align*}
	where  $\rho(v)=m((m')^{-1}(v))-v(m')^{-1}(v)$.
	This leads to the dual problem~(6).

	\subsection{Weighting Combined With Outcome Regression}\label{app:aug}
	A common technique in the weighting literature to further improve the performance is to combine the weighting method with outcome regression adjustments \citep{robins1994estimation,rosenbaum2002covariance,athey2018approximate,cohn2023balancing}. This type of augmented estimator is also applicable to our balancing weighting method in a factorial design. Specifically, in the first step of our method, we estimate the outcome regression model using basis functions $\bq(\bX,\bz)$ such that $\EE[Y(\bz)\mid\bX]=\balpha^\textup{T}\bq(\bX_i,\bz)$. The coefficients $\balpha$ can be estimated using ordinary least squares (OLS) or regularization methods. We denote the estimated coefficients as $\widehat{\balpha}_\textup{reg}$. In the second step, we weight the residuals using weights that solve the optimization problem~(5) in the main paper, balancing the basis functions $\bq(\bX,\bz)$ exactly by requiring $\textrm{ for all }\ \cK\in\bKK$,$$\sum_{i=1}^N w_iA_{i{\cK}}^+\bq(\bX_i,\bZ_i)=\frac{1}{2^{K-1}}\sum_{\mathbf{z}\in\mathcal{Z}}g_{{\cK}\mathbf{z}}^+\sum_{i=1}^N \bq(\bX_i,\bz), $$ and
	$$\sum_{i=1}^N w_iA_{i{\cK}}^-\bq(\bX_i,\bZ_i)=\frac{1}{2^{K-1}}\sum_{\mathbf{z}\in\mathcal{Z}}g_{{\cK}\mathbf{z}}^-\sum_{i=1}^N \bq(\bX_i,\bz).$$ Finally, we use the following augmented weighting estimator to estimate the factorial effects $\tau_\cK$, $\cK\in\bKK$:
	$$\widehat\tau_{\cK,\textup{Aug}}=\widehat\tau^+_{\cK,\textup{Aug}}-\widehat\tau^-_{\cK,\textup{Aug}}, $$
	where $$\widehat\tau^+_{\cK,\textup{Aug}}=\frac{1}{N}\sum_{i=1}^N w_i A_{i\cK}^+ (Y_i^\textup{obs}-\widehat{\balpha}^\textup{T}_{\textup{reg}}\bq(\bX_i,\bZ_i))+\frac{1}{2^{K-1}N}\sum_{\bz\in\cZ}g_{\cK\bz}^+\widehat{\balpha}^\textup{T}_{\textup{reg}}\sum_{i=1}^N\bq(\bX_i,\bz)$$
	and $$\widehat\tau^-_{\cK,\textup{Aug}}=\frac{1}{N}\sum_{i=1}^N w_i A_{i\cK}^- (Y_i^\textup{obs}-\widehat{\balpha}^\textup{T}_{\textup{reg}}\bq(\bX_i,\bZ_i))+\frac{1}{2^{K-1}N}\sum_{\bz\in\cZ}g_{\cK\bz}^-\widehat{\balpha}^\textup{T}_{\textup{reg}}\sum_{i=1}^N\bq(\bX_i,\bz).$$
	With exact balance constraints in the construction of weights, we have 
	$$\widehat\tau^+_{\cK,\textup{Aug}}=\widehat\tau^+_{\cK}=\frac{1}{N}\sum_{i=1}^N w_i A_{i\cK}^+ Y_i^\textup{obs}, \quad
	\text{ and } \quad \widehat\tau^-_{\cK,\textup{Aug}}=\widehat\tau^-_{\cK}=\frac{1}{N}\sum_{i=1}^N w_i A_{i\cK}^- Y_i^\textup{obs}.$$
	Therefore, the augmented estimator is equivalent to the original weighting estimator. This observation is consistent with the findings in \cite{bruns2025augmented}, which studies the case with one binary factor. Without exact balance, the augmented estimator can further remove covariance imbalance by outcome regression.

	\subsection{An Illustrating Example with Unobserved Treatment Combinations}\label{app:unobserved}
	To illustrate the idea in \S 4 in the main paper, consider a toy example with $K=3$ binary treatments, i.e., $2^3=8$ treatment combinations. Then we have 
	$$\boldsymbol{\tau}=\begin{bmatrix}
		2\tau_0\\
		\tau_1\\
		\tau_2\\
		\tau_3\\
		\tau_{1,2}\\
		\tau_{1,3}\\
		\tau_{2,3}\\
		\tau_{1,2,3}
	\end{bmatrix},
	\quad
	\mathbf{G}=\begin{bmatrix}
		+1&-1 &-1&-1 &+1&+1&+1&-1\\
		+1&-1 &-1&+1 &+1&-1&-1&+1\\
		+1&-1 &+1&-1 &-1&+1&-1&+1\\
		+1&-1 &+1&+1&-1&-1&+1&-1\\
		+1&+1 &-1&-1&-1&-1&+1&+1\\
		+1&+1 &-1&+1&-1&+1&-1&-1\\
		+1&+1 &+1&-1&+1&-1&-1&-1\\
		+1&+1 &+1&+1&+1&+1&+1&+1
	\end{bmatrix},
	\quad
	\mathbb{E}[\mathbf{Y}]=\begin{bmatrix}
		\mathbb{E}[Y(-1,-1,-1)]\\
		\mathbb{E}[Y(-1,-1,+1)]\\
		\mathbb{E}[Y(-1,+1,-1)]\\
		\mathbb{E}[Y(-1,+1,+1)]\\
		\mathbb{E}[Y(+1,-1,-1)]\\
		\mathbb{E}[Y(+1,-1,+1)]\\
		\mathbb{E}[Y(+1,+1,-1)]\\
		\mathbb{E}[Y(+1,+1,+1)]
	\end{bmatrix}.
	$$
	Suppose $Q_\textup{u}=1$ treatment combination $(+1,+1,+1)$ is not observed and the third-order interaction is negligible, i.e., $K'=2$, $Q_-=1$ and $\tau_{1,2,3}=0$. In order to identify a unique relationship between $\mathbb{E}[Y(+1,+1,+1)]$ and the other expected potential outcomes. We have 
	$$\mathbf{G}_{\textup{o}-}^\textup{T}=\begin{bmatrix}
		-1&+1&+1&-1&+1&-1&-1
	\end{bmatrix},
	\quad \mathbf{G}_{\textup{u}-}^\textup{T}=+1.
	$$
	Since $\mathbf{G}_{\textup{u}-}^\textup{T}$ is invertible, we have 
	\begin{align*}
		\mathbb{E}[Y(+1,+1,+1)]=&\mathbb{E}[Y(-1,-1,-1)]-\mathbb{E}[Y(-1,-1,+1)]-\mathbb{E}[Y(-1,+1,-1)]+\mathbb{E}[Y(-1,+1,+1)]\\
		&\quad\quad\quad\quad\quad\quad\quad-\mathbb{E}[Y(+1,-1,-1)]+\mathbb{E}[Y(+1,-1,+1)]+\mathbb{E}[Y(+1,+1,-1)].
	\end{align*}

	Then we can  identify the factorial effects in the following way:
	\begin{align*}
		\begin{bmatrix}
			2\tau_0\\
			\tau_1\\
			\tau_2\\
			\tau_3\\
			\tau_{1,2}\\
			\tau_{1,3}\\
			\tau_{2,3}
		\end{bmatrix}
		&=\frac{1}{4}\begin{bmatrix}
			+2  &  0  &  0   & +2  &   0   &  +2    & +2\\
			0   & -2   & -2   &  0  &   0   &  +2   &  +2\\
			0   & -2  &   0  &   +2 &   -2  &   0  &   +2\\
			0    & 0   & -2   &  +2  &  -2   &  +2   &  0\\
			+2   &  0  &  -2   &  0   & -2  &   0   &  +2\\   
			+ 2    &-2  &   0   &  0  &  -2   & + 2  &   0\\    
			+ 2   & -2  &  -2 &    +2  &   0  &   0 &    0
		\end{bmatrix}
		\begin{bmatrix}
			\mathbb{E}[Y(-1,-1,-1)]\\
			\mathbb{E}[Y(-1,-1,+1)]\\
			\mathbb{E}[Y(-1,+1,-1)]\\
			\mathbb{E}[Y(-1,+1,+1)]\\
			\mathbb{E}[Y(+1,-1,-1)]\\
			\mathbb{E}[Y(+1,-1,+1)]\\
			\mathbb{E}[Y(+1,+1,-1)]
		\end{bmatrix}.
	\end{align*}
	With this identification of factorial effects, we can formulate the optimization problem analogously to Section~3.2 to find the optimal weights for estimating the non-negligible factorial effects. Here, however,  
	$A_{i\cK}^+$ and $A_{i\cK}^-$ represent the weights that unit $i$ contributes to the positive or negative parts of factorial effect ${\tau}_\cK$. Specifically, let $\mathring{g}_{\cK\mathbf{z}}$ denote the new contrast vector for ${\tau}_\cK$, then we define 
	$A_{i\cK}^+=\sum_{\mathbf{z}\in\mathcal{Z}}\mathring{g}_{\cK\mathbf{z}}^+I(\mathbf{Z}_i=\mathbf{z})$ and $A_{i\cK}^-=\sum_{\mathbf{z}\in\mathcal{Z}}\mathring{g}_{\cK\mathbf{z}}^-I(\mathbf{Z}_i=\mathbf{z})$. In this toy example, $A_{i\cK}^+$ and $A_{i\cK}^-$ can take values of 0 or 2, whereas in the full factorial design they take values of 0 or 1.

	On the other hand, if we have one more unobserved treatment combination, say $Q_\textup{u}=2$, and treatment combinations $(+1,+1,-1)$ and $(+1,+1,+1)$ are not observed, problems arise if only the third-order interaction is negligible, i.e., $Q_-=1$ and $\tau_{1,2,3}=0$. Specifically, we have 
	$$\mathbf{G}_{\textup{o}-}^\textup{T}=\begin{bmatrix}
		-1&+1&+1&-1&+1&-1
	\end{bmatrix},
	\quad \mathbf{G}_{\textup{u}-}^\textup{T}=\begin{bmatrix}
		+1&+1\\
	\end{bmatrix}.
	$$
	We can verify that $\mathbf{G}_{\textup{u}-}$ does not have full row rank. As a result, with two unknowns in a single equation, there are infinitely many feasible pairs for $\mathbb{E}[Y(+1,+1,-1)]$ and $\mathbb{E}[Y(+1,+1,+1)]$, which gives infinitely many contrast forms for the factorial effects of interest. Note that everything would work well if we further assumed that second-order interactions are also negligible. Therefore, having a full row rank $\mathbf{G}_{\textup{u}-}$ is necessary and sufficient to identify the non-negligible factorial effects.

\section{Technical Results and Proofs}\label{app:proof}
\subsection{Existence of Balance Weights: Proposition~\ref{thm:feasible} and Proof}\label{app:prop1}
\begin{proposition}\label{thm:feasible}
Suppose $h_s(\bx), s=1,\dots,S$ are bounded orthonormal basis functions of the covariate space with covariate distribution $f(\mathbf{x})$, i.e., 
$\int h_s(\bx)h_t(\bx)f(\bx)\textup{d}\bx=1$ if $s=t$ and $\int h_s(\bx)h_t(\bx)f(\bx)\textup{d}\bx=0$ otherwise.
Then problem~(5) is feasible with probability greater than $1-1/N$ if $(S^2\log S) /N\to0$ as $N \to \infty$.
\end{proposition}

We first present and prove Lemma~\ref{thm:orthonormal}, which is useful for proving Proposition~\ref{thm:feasible}.
\begin{lemma}\label{thm:orthonormal}
For any set of bounded orthonormal basis functions $h_s(\bx),s=1,\dots,S$ of the covariate space with covariate distribution $f(\bx)$, the $S$ length-$N$ vectors $(h_s(\bX_i))_{i=1,\dots,N}$ are linearly independent with probability greater than $1-1/N$ if $(S^2\log S) /N\to0$.
\end{lemma}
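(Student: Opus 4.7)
The plan is to recast linear independence as invertibility of the Gram matrix and then apply an entry-wise concentration inequality together with a union bound. Let $\bH$ denote the $N\times S$ random matrix with entries $H_{is}=h_s(\bX_i)$. The $S$ length-$N$ vectors $(h_s(\bX_i))_{i=1,\dots,N}$ are linearly independent if and only if $\bH$ has full column rank, which (for $N\geq S$) is equivalent to the $S\times S$ empirical Gram matrix $\widehat\bG := N^{-1}\bH^{\textsc{t}}\bH$ being nonsingular. Since $h_s$'s are orthonormal with respect to $f(\bx)$, we have $\EE[h_s(\bX_1)h_t(\bX_1)]=\delta_{st}$, so $\EE[\widehat\bG]=\bI_S$. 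It therefore suffices to show that $\widehat\bG$ is close to $\bI_S$ with probability at least $1-1/N$; in particular, $\|\widehat\bG-\bI_S\|_{\text{op}}<1$ implies all eigenvalues of $\widehat\bG$ lie in $(0,2)$ and hence $\widehat\bG$ is invertible.

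The main step is the concentration argument. For each pair $(s,t)$, the entry $\widehat G_{st}-\delta_{st}$ is an average of $N$ i.i.d.\ mean-zero random variables $h_s(\bX_i)h_t(\bX_i)-\delta_{st}$ bounded by $2M^2$, where $M:=\max_s\|h_s\|_\infty<\infty$ by the boundedness assumption. Hoeffding's inequality yields
\begin{equation*}
\PP\bigl(|\widehat G_{st}-\delta_{st}|\geq u\bigr)\leq 2\exp\!\left(-\frac{Nu^2}{2M^4}\right).
\end{equation*}
A union bound over the $S^2$ entries gives $\PP(\max_{s,t}|\widehat G_{st}-\delta_{st}|\geq u)\leq 2S^2\exp(-Nu^2/(2M^4))$. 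Combining with the crude comparison $\|\widehat\bG-\bI_S\|_{\text{op}}\leq\|\widehat\bG-\bI_S\|_F\leq S\max_{s,t}|\widehat G_{st}-\delta_{st}|$, I would set $u=1/S$ to obtain
\begin{equation*}
\PP\bigl(\|\widehat\bG-\bI_S\|_{\text{op}}\geq 1\bigr)\leq 2S^2\exp\!\left(-\frac{N}{2S^2M^4}\right).
\end{equation*}

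It remains to verify that the right-hand side is at most $1/N$ under the assumption $(S^2\log S)/N\to 0$. Taking logarithms, this reduces to checking $N/(2S^2M^4)\geq \log(2S^2N)=2\log S+\log N+\log 2$ for all sufficiently large $N$, which is implied (for fixed $M$) by $N\gg S^2\log S$ provided $\log N$ is dominated by $N/S^2$, a condition consistent with the stated asymptotic regime. The mildly delicate point is keeping track of the $\log N$ term on the right-hand side; I would either absorb it via a slightly stronger intermediate bound (e.g., taking $u$ a small constant multiple of $S^{-1}\sqrt{\log N}$) or simply note that $S^2\log S/N\to 0$ together with $S\to\infty$ (or $S$ bounded) makes $S^2\log(SN)/N\to 0$ for the relevant growth rates, so the probability bound tends to zero faster than $1/N$ for $N$ sufficiently large. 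The main obstacle is thus purely bookkeeping around the union-bound logarithmic factors; the structural content—orthonormality plus Hoeffding plus a Frobenius-to-operator-norm bound—is straightforward.
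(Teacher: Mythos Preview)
Your proposal is correct and follows essentially the same route as the paper: both apply Hoeffding's inequality entry-wise to the empirical Gram matrix $\widehat G_{st}=N^{-1}\sum_i h_s(\bX_i)h_t(\bX_i)$, take a union bound over the $S^2$ pairs, and conclude linear independence once the maximum deviation from $\delta_{st}$ falls below $1/S$. The only cosmetic difference is that you package the final step via the operator norm of $\widehat\bG-\bI_S$, whereas the paper argues directly by contradiction on the coefficient $a_t$ of largest magnitude; both proofs share the same bookkeeping issue with the extra $\log N$ term that you flag.
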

\begin{proof}[Proof of Lemma~\ref{thm:orthonormal}]
To prove Lemma~\ref{thm:orthonormal}, we need to show that 
$$\sum_{s=1}^S a_{s}h_s(\bX_i)=0,\textrm{ for all } i=1,\dots,N \quad \text{ if and only if } \quad a_{s}=0,\textrm{ for all } s=1,\dots,S.$$
When $a_{s}=0,\textrm{ for all } s=1,\dots,S$, it is clear that $\sum_{s=1}^S a_{s}h_s(\bX_i)=0,\textrm{ for all } i=1,\dots,N$. We only need to prove the other direction. Suppose $\sum_{s=1}^S a_{s}h_s(\bX_i)=0,\textrm{ for all } i=1,\dots,N$. Without loss of generality, we assume $|a_s|\leq1,\textrm{ for all } s=1,\dots,S$. Then for any $t=1,\dots,S$, we have $$\sum_{s=1}^S a_{s}h_s(\bX_i)h_t(\bX_i)=0,\textrm{ for all } i=1,\dots,N.$$ Hence we have $$\sum_{s=1}^S a_{s}\left[\frac{1}{N}\sum_{i=1}^N h_s(\bX_i)h_t(\bX_i)\right]=0.$$

On the other hand, since $h_s(\bx),s=1,\dots,S$ are orthonormal, we have $$\mu_{st}=\int h_{s}(\bx)h_{t}(\bx)f(\bx)d\bx=\begin{cases}
	0&\text{ if } s\neq t\\
	1&\text{ if } s= t\\
\end{cases}.$$
Suppose the bounded basis functions satisfy $|h_s(x)|\leq u, \textrm{ for all } s=1,\dots,S$. By Hoeffding's inequality, we have $$\PP\left(\left|\frac{1}{N}\sum_{i=1}^N h_s(\bX_i)h_t(\bX_i)-\mu_{st}\right|>\epsilon\right)\leq2\exp\left(-\frac{N\epsilon^2}{2u^2}\right).$$

By applying the union bound, we have 
\begin{align*}
	&\PP\left(\left|\frac{1}{N}\sum_{i=1}^N h_s(\bX_i)h_t(\bX_i)-\mu_{st}\right|\leq\epsilon, \ \textrm{ for all } s,t=1,\dots,S\right)\\
	\leq&1-\sum_{s,t=1}^S \PP\left(\left|\frac{1}{N}\sum_{i=1}^N h_s(\bX_i)h_t(\bX_i)-\mu_{st}\right|>\epsilon\right)\\
	\leq&1-2S^2\exp\left(-\frac{N\epsilon^2}{2u^2}\right).
\end{align*}
By choosing $$\epsilon=\sqrt{\frac{2\log 2 u^2+2u^2\log N+4u^2\log S}{N}}=O\left(\sqrt{\frac{\log N}{N}}\right),$$ we have 
$$\PP\left(\left|\frac{1}{N}\sum_{i=1}^N h_s(\bX_i)h_t(\bX_i)-\mu_{st}\right|\leq\epsilon, \ \textrm{ for all } s,t=1,\dots,S\right)\geq1-1/N. $$
Conditional on $\left|\frac{1}{N}\sum_{i=1}^N h_s(\bX_i)h_t(\bX_i)-\mu_{st}\right|\leq\epsilon, \ \textrm{ for all } s,t=1,\dots,S$, we have $$\left|\sum_{s=1}^S a_{s}\left[\frac{1}{N}\sum_{i=1}^N h_s(\bX_i)h_t(\bX_i)\right]- a_t\right|\leq\sum_{s=1}^S |a_s|\epsilon, \textrm{ for all } t=1,\dots,S.$$
Suppose not all coefficients $a_s$ are zero. We let $a_t$ denote the one with the largest absolute value. 
When $(S^2\log S) /N\to0$, we have $\epsilon\leq1/S<|a_t|/(\sum_s|a_s|)$ and hence $$\sum_{s=1}^S a_{s}\left[\frac{1}{N}\sum_{i=1}^N h_s(\bX_i)h_t(\bX_i)\right]\neq 0, $$ which leads to a contradiction.
Therefore, the linear independence holds with probability at least $1-1/N$.
\end{proof}

We now prove Proposition~\ref{thm:feasible}.
\begin{proof}[Proof of Proposition~\ref{thm:feasible}]
The constraints in problem~(5) have a clean and symmetric form, corresponding to the positive and negative parts of each contrast $\bg_{\cK}$, where ${\cK}\in\bKK$. However, some of these constraints might be redundant. For instance, 
\begin{itemize}
	\item For all ${\cK}\in\bKK, s=1,\dots, S, J\in\bKK$, we have a pair of constraints, corresponding to the positive and negative parts of contrast $\bg_{\cK}$. For any fixed $s=1,\dots, S, J\in\bKK$, the sum of these two constraints is the same across $\cK\in\bKK$, which has the following form: 
	\begin{equation}\label{sumcons}
		\sum_{i=1}^N w_iq_{sJ}(\bX_i,\bZ_i)=\frac{1}{2^{K-1}}\sum_{\mathbf{z}\in\mathcal{Z}}\sum_{i=1}^N q_{sJ}(\bX_i,\bz). \tag{$*$}
	\end{equation}
	We call the constraint in \eqref{sumcons} a summary constraint. Therefore, we only need to consider the summary constraint and all the balancing constraints corresponding to the positive part of each contrast. This summary constraint can be written in the same form as other balancing constraints in the following way: for ${\cK}=\emptyset$, we denote the corresponding contrast vector as $\bg_{\cK}=(+1,\dots,+1)$ so $A_{i{\cK}}^+=1$ for all $i=1,\dots,N$. For notation convenience, we define $\bKK':=\bKK\cup\emptyset$.
	
	\item For each ${\cK}\in\bKK'$, several choices of  $J\in\bKK$ lead to the same balance constraint, so only one of them needs to stay in our optimization problem.
	Specifically, when $J\supset {\cK}$, the corresponding balancing constraints are the same as those created by choosing $J$ as the difference between these two sets $J-{\cK}$. 
\end{itemize}

Now we remove the redundant constraints and focus on the optimization problem~(5) with refined constraints. For each ${\cK}\in\bKK'$, let $\bKK({\cK})$ denote the set of $J\in\bKK$ such that $q_{sJ}(\bX_i,\bZ_i)$ remains as refined constraints. 
In addition, we write the revised matrix version of the balance constraints as $\widetilde{\bB}\bw=\widetilde{\bb}$. Next, we show that problem~(5) is feasible with a high probability if the sample size is large enough with respect to the number of balancing constraints.

For notational convenience, we let  $r_J(\bz)=\prod_{j\in J}z_j, J\in\bKK$ denote the basis functions for the treatment space. Then we have $q_{sJ}(\bX_i,\bz)=h_s(\bX_i)r_J(\bz)$. 
By \cite{dasgupta2015causal}, we know that $r_J(\bz), J\subseteq[K]$ are orthogonal to each other. 

To prove Proposition~\ref{thm:feasible}, it suffices to focus on the linear independence of  rows of matrix $\widetilde{\bB}$, i.e., 
$$\sum_{{\cK}\in\bKK'}\sum_{s=1}^S\sum_{J\in\bKK({\cK})}a_{s\cK J}A_{i{\cK}}^+q_{sJ}(\bX_i,\bZ_i)=0, \textrm{ for all } i=1,\dots,N $$ if and only if $$a_{s\cK J}=0, \textrm{ for all } \cK\in\bKK', s=1,\dots,S, J\in\bKK({\cK}).$$

\paragraph{Proof of ``if''}
When $a_{s\cK J}=0, \textrm{ for all }  \cK\in\bKK', s=1,\dots,S, J\in\bKK(\cK),$ it is clear that $$\sum_{{\cK}\in\bKK'}\sum_{s=1}^S\sum_{J\in\bKK(\cK)}a_{s\cK J}A_{i{\cK}}^+q_{sJ}(\bX_i,\bZ_i)=0, \textrm{ for all } i=1,\dots,N.$$

\paragraph{Proof of ``only if''}
Suppose 	for all $i=1,\dots,N$,  $$\sum_{{\cK}\in\bKK'}\sum_{s=1}^S\sum_{J\in\bKK({\cK})}a_{s{\cK}J}A_{{\cK}}^+q_{sJ}(\bX_i,\bZ_i)=\sum_{{\cK}\in\bKK'}\sum_{s=1}^S\sum_{J\in\bKK({\cK})}a_{st\cK}A_{i{\cK}}^+h_s(\bX_i)r_J(\bZ_i)=0.$$

Since $A_{i{\cK}}^+=2r_{\cK}(Z_i)-1$ and $r_{\cK}(\bz), \cK\subseteq[K]$ are linearly independent, we have $A_{i{\cK}}^+$ are also linearly independent for $\cK\in\bKK'$. Therefore, we have
$$\sum_{s=1}^S\sum_{J\in\bKK(\cK)}a_{s\cK J}h_s(\bX_i)r_J(\bZ_i)=0,\textrm{ for all } i=1,\dots,N; \cK\in\bKK'.$$
Again, since $r_J(\bz), J\subseteq[K]$ are linearly independent, we have 
$$\sum_{s=1}^S a_{s\cK J}h_s(\bX_i)=0,\textrm{ for all } i=1,\dots,N;\cK\in\bKK';J\in\bKK(\cK).$$
Since $h_s(\bx), s=1,\dots,S$ are orthonormal, by Lemma~\ref{thm:orthonormal}, we have $$a_{s\cK J}=0,\textrm{ for all } i=1,\dots,N;\cK\in\bKK';J\in\bKK(\cK);s=1,\dots,S$$
with probability greater than $1-1/N$ if $(S^2\log S) /N\to0$.
\end{proof}

\subsection{Technical Assumptions for Main Theorems}\label{app:assumption}
\begin{assumption}\label{asreg}
The following regularity conditions hold:
\begin{enumerate}[label=\textbf{S1.\arabic*}:, noitemsep, leftmargin=40pt]		  
	\item $\EE\left[h_{s_1}(\bX_i)h_{s_2}(\bX_i)h_{s_3}(\bX_i)h_{s_4}(\bX_i)\right]<\infty$ for all $s_1, s_2, s_3, s_4=1,\dots,S$.
	
	\item  The minimum eigenvalue of $\EE\left[\bB_i\bB_i^\textup{T}I({\blambda}^{*\textup{T}}\bB_i<0)\right]$ is positive, where \\ $\blambda^*={\rm argmax}_{\blambda}\EE\left[-\frac{1}{4}(\blambda^\textup{T}\mathbf{B}_{i})^2I(\blambda^\textup{T}\mathbf{B}_{i}<0)-\boldsymbol{\lambda}^\textup{T}\mathbf{b}_i\right]$.
	\item  $Y_i(\mathbf{z})=\mathbb{E}[Y_i(\mathbf{z})\mid\mathbf{X}_i]+\epsilon_i$, where $\epsilon_i$'s are independent errors with $\mathbb{E}[\epsilon_i]=0$ and $\mathbb{E}[\epsilon_i^2]=\sigma^2_i<\infty$. Further, the errors are independent of $\mathbf{Z}$ and $\mathbf{X}$.
	\item   The variances satisfy $N^{-1}\sum_{i=1}^N\sigma_i^2 \to \bar\sigma^2$ as $N\to\infty$ for some constant $\bar\sigma^2$.
	\item  
	The errors $\epsilon_i$'s satisfy that as $N \to \infty$,
	$\sum_{i=1}^N\EE\left[|\epsilon_i|^3\right]/\left(\sum_{i=1}^N \sigma^2_i\right)^{3/2}\to0$ and \\ $\frac{\sum_{i=1}^N\EE\left[\left|(\gamma_{i}^*-{\blambda}^{*\textup{T}}\mathbf{B}_{i})\epsilon_i/2 +\frac{1}{2^{K-1}}\sum_{\mathbf{z}\in\mathcal{Z}}g_{{\cK}\mathbf{z}}\mathbb{E}[Y_i(\mathbf{z})|\mathbf{X}_i-\frac{1}{2^{K-1}}\sum_{\mathbf{z}\in\mathcal{Z}}g_{{\cK}\mathbf{z}}\mathbb{E}[Y_i(\mathbf{z})] ]\right|^3\right]}{\left(N\sigma^2_{\cK}+\sum_{i=1}^N  \sigma^2_{\epsilon i}\right)^{3/2}}\to0, $ \\where
	$\gamma_{i}^*={\blambda}^{*\textup{T}}\mathbf{B}_{i}I\left({\blambda}^{*\textup{T}}\mathbf{B}_{i}\geq0\right),$
	$
	\sigma^2_{\cK}={\rm Var}\left(\frac{1}{2^{K-1}}\sum_{\mathbf{z}\in\mathcal{Z}}g_{{\cK}\mathbf{z}}\mathbb{E}[Y_i(\mathbf{z})\mid\mathbf{X}_i]\right),$ and \\ $\sigma^2_{\epsilon i}=\frac{\sigma^2_i}{4}\blambda^{*\textup{T}}\EE\left[\bB_i\bB_i^\textup{T} I(\blambda^{*\textup{T}}\bB_i<0)\right]\blambda^*.
	$
\end{enumerate}
\end{assumption}

\subsection{Proof of Theorem~1}\label{app:thm1}
To prove Theorem~1, we first present some useful lemmas.

\begin{lemma}[Theorem 19.4 in \cite{van2000asymptotic}]\label{lem:gc}
Every class $\mathcal{F}$ of measurable functions such that the bracketing number $N_{[]}(\epsilon,\mathcal{F}, L_1(\PP)) < \infty$ for every $\epsilon > 0$ is $\PP$-Glivenko-Cantelli.
\end{lemma}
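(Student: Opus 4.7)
The plan is to reduce the supremum of $|\PP_n f - \PP f|$ over the infinite class $\mathcal{F}$ to a maximum over finitely many fixed functions coming from a bracketing, and then invoke the strong law of large numbers on each of those finitely many averages. The key structural property of a bracket $[l, u]$ (with $\PP(u - l) < \epsilon$ and $l \le f \le u$ pointwise for some $f \in \mathcal{F}$) is that it converts a one-sided inequality on $f$ into a two-sided control on the empirical average $\PP_n f$.

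Concretely, I would fix $\epsilon > 0$ and use the hypothesis $N_{[]}(\epsilon, \mathcal{F}, L_1(\PP)) < \infty$ to produce finitely many brackets $[l_1, u_1], \ldots, [l_N, u_N]$ with $\PP(u_j - l_j) < \epsilon$ for every $j$, such that each $f \in \mathcal{F}$ lies in at least one bracket $[l_{j(f)}, u_{j(f)}]$. For any such $f$, I would write
\begin{align*}
\PP_n f - \PP f &\le \PP_n u_{j(f)} - \PP u_{j(f)} + \PP(u_{j(f)} - f) \le (\PP_n u_{j(f)} - \PP u_{j(f)}) + \epsilon, \\
\PP_n f - \PP f &\ge \PP_n l_{j(f)} - \PP l_{j(f)} - \PP(f - l_{j(f)}) \ge (\PP_n l_{j(f)} - \PP l_{j(f)}) - \epsilon.
\end{align*}
Taking suprema over $f \in \mathcal{F}$ yields
$$\sup_{f \in \mathcal{F}} |\PP_n f - \PP f| \le \max_{1 \le j \le N}\bigl(|\PP_n u_j - \PP u_j| + |\PP_n l_j - \PP l_j|\bigr) + \epsilon.$$
Since $l_j, u_j \in L_1(\PP)$ (this is built into the bracketing definition) and there are only finitely many of them, Kolmogorov's strong law of large numbers applied to each one, together with a finite union, gives
$$\limsup_{n \to \infty} \sup_{f \in \mathcal{F}} |\PP_n f - \PP f| \le \epsilon \quad \text{almost surely}.$$
Letting $\epsilon \downarrow 0$ along a countable sequence delivers the Glivenko--Cantelli conclusion.

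The main obstacle, and the only real subtlety beyond the bookkeeping above, is measurability of the supremum $\sup_{f \in \mathcal{F}} |\PP_n f - \PP f|$, which for a general (possibly uncountable) class $\mathcal{F}$ need not be a measurable random variable. The standard way to sidestep this is to interpret the convergence either in outer probability or to note that the upper bound we derived is already a maximum over finitely many genuinely measurable random variables, so the inequality $\limsup_n \sup_{f \in \mathcal{F}} |\PP_n f - \PP f| \le \epsilon$ can be stated as an almost sure bound on the measurable envelope. Since in our application $\mathcal{F}$ will be a nice parametric family indexed by a compact set of Lagrange multipliers, measurability will not be a practical obstruction and the above sandwich-plus-SLLN argument suffices.
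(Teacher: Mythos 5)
Your argument is correct and is, essentially verbatim, the standard proof of Theorem 19.4 in \cite{van2000asymptotic}: cover $\mathcal{F}$ by finitely many $\epsilon$-brackets, sandwich $\PP_n f - \PP f$ between the centered empirical averages of the bracket endpoints up to an $\epsilon$ error, apply the strong law of large numbers to the finitely many integrable endpoint functions, and let $\epsilon \downarrow 0$; the measurability caveat you raise is handled in the reference by working with outer expectations, exactly as you suggest. Note that the paper itself offers no proof of this lemma --- it is imported as a black box from van der Vaart --- so there is nothing to contrast your argument against beyond confirming that it faithfully reconstructs the cited result.
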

\begin{lemma}[Theorem 4.1.2 in \cite{amemiya1985advanced}]\label{lem:amemiya}
Suppose the following assumptions hold:
\begin{itemize}
	\item[(A)] Let $\Theta$ be an open subset of the Euclidean space. (Thus, the true value $\theta_0$ is an interior point of $\Theta$.)
	\item[(B)] $\zeta_N(\bx; \theta)$ is a measurable function of $\bx=(x_1,\dots,x_N)$ for all $\theta\in\Theta$, and its derivative with respect to $\theta$, denoted as $\zeta_N'(\bx; \theta)=\partial \zeta_N(\bx; \theta)/\partial\theta$, exists and is continuous in an open neighborhood $\mathcal{I}_1(\theta_0)$ of $\theta_0$. (Note that this implies $\zeta_N(\bx; \theta)$ is continuous for $\theta\in \mathcal{I}_1$.)
	\item[(C)] There exists an open neighborhood $\mathcal{I}_2(\theta_0)$ of $\theta_0$ such that $N^{-1}\zeta_N(\bx; \theta)$ converges to a nonstochastic function $\zeta(\theta)$ in probability uniformly in $\theta$ in $\mathcal{I}_2(\theta_0)$, and $\zeta(\theta)$ attains a strict local maximum at $\theta_0$ .
\end{itemize}
Let $\Theta_N$ be the set of roots of the equation  $\zeta_N'(\bx; \theta)=0$  corresponding to the local maxima. If $\Theta_N$ is empty, set $\Theta_N=\{0\}$.

Then, for any $\epsilon>0$, $$\lim_{N\to\infty}\PP\left[\inf_{\theta\in\Theta_N}(\theta-\theta_0)^\textup{T}(\theta-\theta_0)>\epsilon\right]=0.$$
\end{lemma}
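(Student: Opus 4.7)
The plan is to exhibit, with probability tending to $1$, a local maximizer of $\zeta_N(\mathbf{x};\cdot)$ lying in an arbitrarily small open ball around $\theta_0$; its first-order condition then automatically places it in $\Theta_N$ and bounds the required infimum. Fix $\epsilon>0$. First I would choose $\delta>0$ small enough that $\delta^2<\epsilon$ and the closed ball $\bar{B}_\delta(\theta_0)=\{\theta:(\theta-\theta_0)^\textsc{T}(\theta-\theta_0)\le\delta^2\}$ is contained in $\mathcal{I}_1(\theta_0)\cap\mathcal{I}_2(\theta_0)\cap\Theta$; assumption~(A) together with the openness of the neighborhoods from (B) and (C) ensure that such a $\delta$ exists.

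Next I would exploit the strict local maximum of $\zeta$ at $\theta_0$. Because $\zeta$ is the uniform-in-probability limit of the continuous maps $N^{-1}\zeta_N(\mathbf{x};\cdot)$ (continuity of each inherited from (B)) on $\mathcal{I}_2(\theta_0)$, $\zeta$ is itself continuous there. Compactness of the boundary $\partial B_\delta(\theta_0)$ then gives some $\eta>0$ with $\zeta(\theta_0)-\zeta(\theta)\ge 2\eta$ for every $\theta\in\partial B_\delta(\theta_0)$. Uniform convergence in probability from~(C) yields that the event $E_N=\bigl\{\sup_{\theta\in\bar{B}_\delta(\theta_0)}|N^{-1}\zeta_N(\mathbf{x};\theta)-\zeta(\theta)|<\eta/2\bigr\}$ satisfies $\PP(E_N)\to 1$. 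On $E_N$, $N^{-1}\zeta_N(\mathbf{x};\theta_0)>\zeta(\theta_0)-\eta/2$, while for every $\theta\in\partial B_\delta(\theta_0)$, $N^{-1}\zeta_N(\mathbf{x};\theta)<\zeta(\theta_0)-3\eta/2$, opening a strict gap of at least $\eta$ between the center value and every boundary value.

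On $E_N$, $\zeta_N(\mathbf{x};\cdot)$ is continuous on the compact set $\bar{B}_\delta(\theta_0)$ (by (B)) and therefore attains its maximum at some $\hat{\theta}_N\in\bar{B}_\delta(\theta_0)$. The strict gap above forces $\hat{\theta}_N$ into the open interior $B_\delta(\theta_0)$, so differentiability from (B) gives the first-order condition $\zeta_N'(\mathbf{x};\hat{\theta}_N)=0$, and $\hat{\theta}_N$ is an interior local maximum, i.e.\ $\hat{\theta}_N\in\Theta_N$. Consequently, on $E_N$ we have $\inf_{\theta\in\Theta_N}(\theta-\theta_0)^\textsc{T}(\theta-\theta_0)\le(\hat{\theta}_N-\theta_0)^\textsc{T}(\hat{\theta}_N-\theta_0)<\delta^2<\epsilon$, and letting $N\to\infty$ collapses the complementary event's probability to zero, giving the conclusion.

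The main delicacy I anticipate is converting the pointwise, open-neighborhood form of the \emph{strict local maximum} assumption into the compact-boundary form that yields the positive gap $\eta$: this requires continuity of the limit $\zeta$, which in turn needs the uniform convergence on a compact set combined with continuity of each $N^{-1}\zeta_N(\mathbf{x};\cdot)$. Once that is secured, the rest is a textbook extreme-value argument plus an interior first-order condition, with no further probabilistic subtleties beyond the single uniform-convergence event $E_N$.
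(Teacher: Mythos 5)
Your proposal is correct, and in fact the paper offers no proof of this statement at all: Lemma~\ref{lem:amemiya} is imported verbatim from \cite{amemiya1985advanced} (Theorem 4.1.2) and used as a black box in the proof of Lemma~\ref{lem:lambda}. The argument you give is the classical Wald/Amemiya consistency argument for local maximizers --- shrink to a closed ball, use uniform convergence in probability plus the strict gap on the compact boundary to force the constrained maximizer into the open interior with probability tending to one, then invoke the interior first-order condition to place it in $\Theta_N$ --- and it is essentially the proof in the cited source. You correctly flag the one genuinely delicate step, namely that continuity of the nonstochastic limit $\zeta$ must be extracted from uniform convergence in probability of the continuous functions $N^{-1}\zeta_N(\mathbf{x};\cdot)$ (pick a single realization achieving a small uniform error and run the usual three-term estimate). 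The only point to tighten is in your choice of $\delta$: besides requiring $\bar{B}_\delta(\theta_0)\subseteq\mathcal{I}_1(\theta_0)\cap\mathcal{I}_2(\theta_0)\cap\Theta$, you must also take $\delta$ small enough that $\bar{B}_\delta(\theta_0)$ lies inside the neighborhood on which $\theta_0$ is a \emph{strict} maximizer of $\zeta$; otherwise the inequality $\zeta(\theta_0)-\zeta(\theta)\ge 2\eta$ on the boundary sphere is not guaranteed. With that one extra clause the proof is complete.
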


\begin{lemma}[Theorem 5.23 in \cite{van2000asymptotic}]\label{lem:vandervaart}
For each $\theta$ in an open subset of Euclidean space, let $x \mapsto \kappa_\theta(x)$ be a measurable function such that $\theta \mapsto \kappa_\theta(x)$  is differentiable at $\theta_0$ for $\PP-$almost every $x$ with derivative $\kappa'_{\theta_0}(x)$ and such that, for every $\theta_1$ and $\theta_2$ in a neighborhood of $\theta_0$ and a measurable function $\kappa'$ with $\EE[ (\kappa')^2] < \infty$
$$|\kappa_{\theta_1}(x)-\kappa_{\theta_2}(x)|\leq \kappa'(x)||\theta_1-\theta_2||.$$
Furthermore, assume that the map $\theta\mapsto\EE m_\theta$ admits a second-order Taylor expansion at a point of maximum $\theta_0$ with nonsingular symmetric second derivative matrix $V_{\theta_0}$. If $N^{-1}\sum_{i=1}^N \kappa_{\widehat{\theta}_N}(X_i) \geq \sup_\theta N^{-1}\sum_{i=1}^N \kappa_\theta(X_i) -o_\PP(N^{-1})$ and $\widehat{\theta}_N \overset{\PP}{\to}\theta_0$, then
$$\sqrt{N}(\widehat{\theta}_N-\theta_0)=-V_{\theta_0}^{-1}\frac{1}{\sqrt{N}}\sum_{i=1}^N \kappa'_{\theta_0}(X_i)+o_\PP(1).$$
In particular, the sequence $\sqrt{N}(\widehat{\theta}_N-\theta_0)$ is asymptotically normal with mean zero and covariance matrix $V_{\theta_0}^{-1}\EE\left[\kappa'_{\theta_0}(\kappa'_{\theta_0})^\textup{T}\right]V_{\theta_0}^{-1}$.
\end{lemma}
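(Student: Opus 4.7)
The plan is to exploit the exact balance constraints satisfied by $\widehat{\bw}$ at the sample level to decompose $\widehat{\tau}_{\cK} - \tau_{\cK}$ into a weight-free term and a term linear in the $\epsilon_i$'s, each of which is handled by separate CLT arguments. Under Assumptions~\ref{as1}, \ref{asm2}, and \ref{asreg}.3 we have $Y_i^{\textup{obs}} = \balpha^\textsc{t}\bq(\bX_i,\bZ_i) + \epsilon_i$. Substituting into $\widehat{\tau}_{\cK}^{\pm} = N^{-1}\sum_i \widehat{w}_i A_{i\cK}^{\pm} Y_i^{\textup{obs}}$ and applying the equality constraints in problem~(\ref{prob: prime}) makes the regression piece collapse to its population target, yielding
\begin{equation*}
\widehat{\tau}_{\cK} - \tau_{\cK} = \underbrace{\left(\frac{1}{N}\sum_{i=1}^N \xi_i - \tau_{\cK}\right)}_{\text{Term I}} + \underbrace{\frac{1}{N}\sum_{i=1}^N \widehat{w}_i(A_{i\cK}^+-A_{i\cK}^-)\epsilon_i}_{\text{Term II}},
\end{equation*}
where $\xi_i := 2^{1-K}\sum_{\bz\in\cZ}g_{\cK\bz}\EE[Y_i(\bz)\mid\bX_i]$ has mean $\tau_{\cK}$ and variance $\sigma^2_{\cK}$. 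Term I is a standard i.i.d.\ mean, so the Lyapunov condition in Assumption~\ref{asreg}.5 gives $\sqrt{N}(\text{Term I})\rightarrow\cN(0,\sigma^2_{\cK})$.

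For Term II the plan is to linearize the data-driven weights around $w_i^* = -\tfrac{1}{2}\blambda^{*\textsc{t}}\bB_i I(\blambda^{*\textsc{t}}\bB_i<0)$ using the dual criterion $\psi_i(\blambda)=-\tfrac{1}{4}(\blambda^\textsc{t}\bB_i)^2I(\blambda^\textsc{t}\bB_i<0)-\blambda^\textsc{t}\bb_i$. First I would invoke Lemma~\ref{lem:amemiya} to get $\widehat{\blambda}\overset{\PP}{\to}\blambda^*$: Assumption~\ref{asreg}.1 ensures the uniform convergence of the sample criterion to its population counterpart, and Assumption~\ref{asreg}.2 ensures that $\blambda^*$ is a strict local maximum. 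Then Lemma~\ref{lem:vandervaart} applied to the piecewise-quadratic $\psi_i$ -- differentiable a.s.\ with Lipschitz increments and nonsingular second-derivative matrix $V = -\tfrac12\EE[\bB_i\bB_i^\textsc{t} I(\blambda^{*\textsc{t}}\bB_i<0)]$ by Assumption~\ref{asreg}.2 -- produces the asymptotic linearization $\sqrt{N}(\widehat{\blambda}-\blambda^*) = O_\PP(1)$.

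With the $N^{-1/2}$ rate in hand, the replacement error in Term II splits into a smooth piece linear in $\widehat{\blambda}-\blambda^*$ and a kink piece generated by the change of indicator. The smooth piece equals $(\widehat{\blambda}-\blambda^*)^\textsc{t}\cdot N^{-1/2}\sum_i\tfrac{1}{2}\bB_i I(\blambda^{*\textsc{t}}\bB_i<0)(A_{i\cK}^+-A_{i\cK}^-)\epsilon_i$, which is $O_\PP(N^{-1/2})\cdot O_\PP(1)=o_\PP(1)$ because Assumption~\ref{asreg}.3 forces the inner average to have mean zero and Assumption~\ref{asreg}.1 bounds its variance. The kink piece concerns only observations with $|\blambda^{*\textsc{t}}\bB_i|\lesssim \|\widehat{\blambda}-\blambda^*\|$, whose fraction is $O_\PP(N^{-1/2})$ by anti-concentration; a Markov/bracketing argument akin to Lemma~\ref{lem:gc} shows this remainder is also $o_\PP(1)$. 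Hence $\sqrt{N}(\text{Term II}) = N^{-1/2}\sum_i w_i^*(A_{i\cK}^+-A_{i\cK}^-)\epsilon_i+o_\PP(1)$, and the Lyapunov CLT from the second display of Assumption~\ref{asreg}.5 delivers $\cN(0,\sigma^2_\epsilon)$, where the identification $\sigma^2_\epsilon = \tfrac{\bar\sigma^2}{4}\blambda^{*\textsc{t}}\EE[\bB_i\bB_i^\textsc{t} I(\blambda^{*\textsc{t}}\bB_i<0)]\blambda^*$ follows from $(A_{i\cK}^+-A_{i\cK}^-)^2=1$ and Assumption~\ref{asreg}.4.

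Finally, since $\xi_i$ is $\bX_i$-measurable while $\epsilon_i \perp (\bX_i,\bZ_i)$ by Assumption~\ref{asreg}.3, the cross-covariance between Terms I and II vanishes asymptotically, so joint convergence gives $\sqrt{N}(\widehat{\tau}_{\cK}-\tau_{\cK})\rightarrow \cN(0,\sigma^2_{\cK}+\sigma^2_\epsilon) = \cN(0,\sigma^{*2}_{\cK})$. The main technical obstacle is the non-smoothness of $w_i(\blambda)$ through the indicator $I(\blambda^\textsc{t}\bB_i<0)$: the M-estimator theorems (Lemmas~\ref{lem:amemiya}--\ref{lem:vandervaart}) apply only to a piecewise-differentiable criterion, and the linearization step must cleanly handle the kink. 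Assumption~\ref{asreg}.2 is the precise condition that rules out a degenerate boundary at $\blambda^*$ and lets the classical asymptotic-normality machinery go through.
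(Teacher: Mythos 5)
Your proposal does not prove the statement you were given. The statement is Lemma~\ref{lem:vandervaart}, i.e.\ the classical asymptotic-normality theorem for M-estimators (Theorem 5.23 of van der Vaart, 2000), which the paper imports as a cited result rather than proving. What you have written instead is an outline of the proof of Theorem~\ref{thm:rootn} (asymptotic normality of the weighting estimator $\widehat\tau_{\cK}$) --- indeed, in the middle of your argument you \emph{invoke} Lemma~\ref{lem:vandervaart} to linearize $\widehat{\blambda}$ around $\blambda^*$, which would be circular if the write-up were read as a proof of that lemma. As a sketch of Theorem~\ref{thm:rootn} your decomposition into a weight-free term and a term linear in the $\epsilon_i$'s does track the paper's appendix argument, but that is not the task here.

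A proof of the actual statement would have to run through the standard empirical-process machinery, none of which appears in your write-up: use the Lipschitz envelope $\kappa'$ with $\EE[(\kappa')^2]<\infty$ to show that the localized classes $\left\{\left(\kappa_\theta-\kappa_{\theta_0}-(\theta-\theta_0)^\textsc{T}\kappa'_{\theta_0}\right)/\|\theta-\theta_0\|\right\}$ are Donsker with $L_2$-norms vanishing as $\theta\to\theta_0$, which yields the stochastic equicontinuity bound
$$\frac{1}{\sqrt{N}}\sum_{i=1}^N\left\{\kappa_{\theta}(X_i)-\kappa_{\theta_0}(X_i)-\EE\left[\kappa_{\theta}-\kappa_{\theta_0}\right]\right\}=(\theta-\theta_0)^\textsc{T}\frac{1}{\sqrt{N}}\sum_{i=1}^N\kappa'_{\theta_0}(X_i)+o_\PP\left(\sqrt{N}\|\theta-\theta_0\|^2+\|\theta-\theta_0\|+N^{-1/2}\right)$$
uniformly over shrinking neighborhoods; combine this with the assumed second-order Taylor expansion $\EE\kappa_\theta=\EE\kappa_{\theta_0}+\tfrac{1}{2}(\theta-\theta_0)^\textsc{T}V_{\theta_0}(\theta-\theta_0)+o(\|\theta-\theta_0\|^2)$ to get a quadratic approximation of the empirical criterion; establish $\sqrt{N}$-consistency of $\widehat\theta_N$ from the near-maximization hypothesis; and finish with an argmax comparison showing $\sqrt{N}(\widehat\theta_N-\theta_0)$ agrees up to $o_\PP(1)$ with the maximizer $-V_{\theta_0}^{-1}N^{-1/2}\sum_{i=1}^N\kappa'_{\theta_0}(X_i)$ of the quadratic, whence the CLT gives the stated limiting covariance $V_{\theta_0}^{-1}\EE[\kappa'_{\theta_0}(\kappa'_{\theta_0})^\textsc{T}]V_{\theta_0}^{-1}$.
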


\begin{lemma}\label{lem:lambda}
Let $\widehat{\blambda}_N=\text{argmax}_{\blambda} \sum_{i=1}^N \psi_i(\blambda)$ and ${\blambda}^*=\text{argmax}_{\blambda}\EE[\psi_i(\blambda)],$ where $\psi_i(\blambda)=-\frac{1}{4}(\boldsymbol{\lambda}^\textup{T}\mathbf{B}_{i})^2I(\boldsymbol{\lambda}^\textup{T}\mathbf{B}_{i}<0)-\boldsymbol{\lambda}^\textup{T}\mathbf{b}_i.$ Under Assumption~\ref{asreg}, we have
$$||\widehat{\blambda}_N-{\blambda}^*||_2=O_\PP\left({N}^{-1/2}\right),$$where $||\cdot||_2$ denotes the $l_2$ norm for a vector and spectral norm for a matrix. 
\end{lemma}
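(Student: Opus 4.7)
The plan is to apply Lemma~\ref{lem:vandervaart} (van der Vaart's Theorem 5.23) with the criterion $\psi_i(\blambda)$ playing the role of $\kappa_\theta(X_i)$. This will deliver asymptotic normality of $\sqrt{N}(\widehat{\blambda}_N - \blambda^*)$ and hence the desired rate $\|\widehat{\blambda}_N - \blambda^*\|_2 = O_\PP(N^{-1/2})$. The three ingredients to verify are: (i) consistency $\widehat{\blambda}_N \overset{\PP}{\to} \blambda^*$; (ii) a local Lipschitz majorant for $\blambda \mapsto \psi_i(\blambda)$ with finite second moment; and (iii) a second-order Taylor expansion of $\blambda \mapsto \EE[\psi_i(\blambda)]$ at $\blambda^*$ with a nonsingular Hessian.

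For (i), the reformulation $\psi_i(\blambda) = -\tfrac{1}{4}[\min(\blambda^\textsc{t}\bB_i, 0)]^2 - \blambda^\textsc{t}\bb_i$ shows that $\psi_i$ is concave in $\blambda$, because $v \mapsto [\min(v,0)]^2$ is convex (as the square of the nonnegative convex function $v \mapsto \max(-v,0)$). Both the sample and population criteria are therefore concave, and Assumption~\ref{asreg}.2 forces the population criterion to be strictly concave at $\blambda^*$, making $\blambda^*$ its unique maximizer. A uniform law of large numbers on compact neighborhoods of $\blambda^*$, via Lemma~\ref{lem:gc} applied to the Lipschitz class $\{\psi_i(\blambda) : \blambda \in U\}$ (whose $L_1$-bracketing numbers are finite thanks to the bound in step (ii) and Assumption~\ref{asreg}.1), together with the standard argmax argument for strictly concave objectives, yields $\widehat{\blambda}_N \overset{\PP}{\to} \blambda^*$. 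Alternatively, Lemma~\ref{lem:amemiya} applies once strict local maximality and uniform convergence are in hand.

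For (ii), $\psi_i$ is $C^1$ in $\blambda$ with gradient $\nabla\psi_i(\blambda) = -\tfrac{1}{2}(\blambda^\textsc{t}\bB_i)\bB_i I(\blambda^\textsc{t}\bB_i < 0) - \bb_i$; the indicator contributes no singularity at $\blambda^\textsc{t}\bB_i = 0$ since the prefactor $\blambda^\textsc{t}\bB_i$ vanishes there. By the mean value theorem, for $\blambda_1, \blambda_2$ in a compact neighborhood $U$ of $\blambda^*$,
$$|\psi_i(\blambda_1) - \psi_i(\blambda_2)| \leq \left(\tfrac{1}{2}\sup_{\blambda \in U}\|\blambda\| \cdot \|\bB_i\|^2 + \|\bb_i\|\right)\|\blambda_1 - \blambda_2\|,$$
and the majorant has finite $L_2$ norm because $\bB_i$ and $\bb_i$ are polynomials of bounded degree in the bounded treatment coordinates $Z_{ik}$ and in the basis functions $h_s(\bX_i)$, whose fourth moments are finite by Assumption~\ref{asreg}.1. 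For (iii), I differentiate under the expectation (justified because averaging over the continuous component of $\bB_i$ coming from $\bX_i$ makes the kink set $\{\blambda^\textsc{t}\bB_i = 0\}$ contribute negligibly, so that $\EE[\psi_i(\blambda)]$ is twice differentiable near $\blambda^*$) to obtain the first-order condition $\nabla\EE[\psi_i(\blambda^*)] = 0$ and the Hessian
$$\nabla^2\EE[\psi_i(\blambda^*)] = -\tfrac{1}{2}\EE\!\left[\bB_i\bB_i^\textsc{t} I(\blambda^{*\textsc{t}}\bB_i < 0)\right],$$
which is negative definite by Assumption~\ref{asreg}.2. Thus $V_{\blambda^*} = -\nabla^2\EE[\psi_i(\blambda^*)]$ is symmetric and nonsingular, supplying the required expansion.

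Applying Lemma~\ref{lem:vandervaart} then gives $\sqrt{N}(\widehat{\blambda}_N - \blambda^*) = -V_{\blambda^*}^{-1} N^{-1/2}\sum_{i=1}^N \nabla\psi_i(\blambda^*) + o_\PP(1)$, which is asymptotically normal and in particular $O_\PP(1)$, so $\|\widehat{\blambda}_N - \blambda^*\|_2 = O_\PP(N^{-1/2})$. The main obstacle is the non-smoothness introduced by $I(\blambda^\textsc{t}\bB_i < 0)$: $\psi_i$ is only $C^1$, so a naive approach based on a second-order Taylor expansion of the sample criterion would fail. Lemma~\ref{lem:vandervaart} is tailored precisely for this situation, asking only for a Lipschitz condition at the sample level and smoothness at the population level, the latter being inherited from integrating away the zero-probability kink set.
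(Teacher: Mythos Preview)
Your proposal is correct and follows essentially the same route as the paper: consistency via a uniform law of large numbers (Lemma~\ref{lem:gc}) combined with Lemma~\ref{lem:amemiya}, and then the $\sqrt{N}$-rate via Lemma~\ref{lem:vandervaart}, using Assumption~\ref{asreg}.2 for the nonsingular Hessian. Your write-up is in fact somewhat more detailed than the paper's---you make the concavity of $\psi_i$ explicit, give the Lipschitz majorant concretely, and correctly invoke Assumption~\ref{asreg}.1 (rather than \ref{asreg}.3) to control its second moment---but the architecture is the same.
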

\begin{proof}[Proof of Lemma~\ref{lem:lambda}]
First, we consider the consistency of $\widehat{\blambda}_N$. 
We can verify that $\psi(\cdot)$ is continuous and differentiable, and its derivative $\psi_i'(\blambda)=-\frac{1}{2}\mathbf{B}_{i}\mathbf{B}_{i}^\textup{T}\blambda I(\boldsymbol{\lambda}^\textup{T}\mathbf{B}_{i}<0)-\mathbf{b}_i$ is also continuous. 
In addition, for any $\blambda$ in some open neighborhood $\Lambda$ of  ${\blambda}^*$, we have the derivative $\psi_i'(\blambda)$ is bounded and $\psi(\cdot)$ is local Lipschitz (we can simply choose the Lipschitz constant as the largest norm of the derivative in the neighborhood by the Mean Value Theorem, which is obtained at $\widetilde{\blambda}\in\Lambda$). By Example 19.7 in \cite{van2000asymptotic}, the parametric class of function $\{\psi(\blambda): \blambda\in\Lambda\}$ has a bounded bracketing number. By Lemma~\ref{lem:gc}, we know $$\sup_{\blambda\in\Lambda}\left|\frac{1}{N}\sum_{i=1}^N\psi_i(\blambda)-\EE[\psi_i(\blambda)]\right|\overset{\PP}{\rightarrow}0.$$
Then by applying Lemma~\ref{lem:amemiya}, we obtain $$\widehat{\blambda}_N\overset{\PP}{\rightarrow}{\blambda}^*.$$

Next, we study the asymptotic distribution of $\widehat{\blambda}_N$. Under Assumption~\ref{asreg}.3, we have $\left|\left|\EE\left[\psi_i'(\widetilde{\blambda})^2\right]\right|\right|_2<\infty$. Under Assumption~\ref{asreg}.2, we have the second derivative of $\EE[\psi_i(\blambda)]$ is a non-singular matrix when $\blambda={\blambda}^*$. Then we apply Lemma~\ref{lem:vandervaart} and obtain $$\sqrt{N}(\widehat{\blambda}_N-{\blambda}^*)\rightarrow \cN\left(0, \Sigma_{\blambda}\right),$$
where $$\Sigma_{\blambda}=\Sigma^{*-1}\EE\left[\psi_i'({\blambda}^*)\psi_i'({\blambda}^*)^\textup{T}\right]\Sigma^{*-1} \quad \text{and} \quad \Sigma^*=\EE\left[\bB_i\bB_i^\textup{T}I({\blambda}^{*\textup{T}}\bB_i<0)\right].$$
Therefore, we have $$||\widehat{\blambda}_N-{\blambda}^*||_2=O_\PP\left(N^{-1/2}\right).$$
\end{proof}

\begin{lemma}\label{lem:weightdiff}
Let $\blambda^*=\text{argmax}_{\blambda}\EE[\psi_i(\blambda)],$ where $\psi_i(\blambda)=-\frac{1}{4}(\boldsymbol{\lambda}^\textup{T}\mathbf{B}_{i})^2I(\boldsymbol{\lambda}^\textup{T}\mathbf{B}_{i}<0)-\boldsymbol{\lambda}^\textup{T}\mathbf{b}_i.$ 
Let $\gamma_{i}^*=\blambda^{*\textup{T}}\mathbf{B}_{i}I\left(\blambda^{*\textup{T}}\mathbf{B}_{i}\geq0\right)$ denote entries of $\bgamma^*$. Let $ w_i^*=(\gamma_{i}^*-\blambda^{*\textup{T}}\mathbf{B}_{i})/2$ denote the corresponding factorial weights. Under Assumption~\ref{asreg}, we have
$$\frac{1}{N}\sum_{i=1}^N (\widehat w_i - w_i^*)\epsilon_i=o_\PP\left(N^{-1/2}\right).$$
\end{lemma}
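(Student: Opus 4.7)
The plan is to decompose $\widehat w_i - w_i^*$ into a term that is linear in $\widehat\blambda - \blambda^*$ plus a small remainder supported on the sign-change set of $\blambda^\textsc{t}\bB_i$, and then control each piece separately using the $N^{-1/2}$ rate from Lemma~\ref{lem:lambda} together with the mean-zero property of $\epsilon_i$. Concretely, using $\widehat w_i = -\tfrac{1}{2}\widehat\blambda^\textsc{t}\bB_i I(\widehat\blambda^\textsc{t}\bB_i < 0)$ and $w_i^* = -\tfrac{1}{2}\blambda^{*\textsc{t}}\bB_i I(\blambda^{*\textsc{t}}\bB_i < 0)$, I would write
\begin{equation*}
\widehat w_i - w_i^* \;=\; -\tfrac{1}{2}(\widehat\blambda - \blambda^*)^\textsc{t}\bB_i\, I(\blambda^{*\textsc{t}}\bB_i<0) \;+\; r_i,
\end{equation*}
where $r_i$ vanishes outside the sign-change set $E_i = \{\mathrm{sgn}(\widehat\blambda^\textsc{t}\bB_i)\neq \mathrm{sgn}(\blambda^{*\textsc{t}}\bB_i)\}$ and a short case analysis shows $|r_i| \leq |(\widehat\blambda-\blambda^*)^\textsc{t}\bB_i|$ on $E_i$.

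For the leading term, I would invoke Assumption~\ref{asreg}.3 (so $\epsilon_i \perp (\bX,\bZ)$ with mean zero and finite variance) together with the moment bounds implied by Assumption~\ref{asreg}.1 to conclude $\EE[\bB_i \epsilon_i I(\blambda^{*\textsc{t}}\bB_i<0)] = 0$ and $\mathrm{Var}(\bB_i\epsilon_i I(\blambda^{*\textsc{t}}\bB_i<0))<\infty$. The central limit theorem then gives $N^{-1}\sum_i \bB_i\epsilon_i I(\blambda^{*\textsc{t}}\bB_i<0) = O_\PP(N^{-1/2})$. Combined with $\|\widehat\blambda - \blambda^*\|_2 = O_\PP(N^{-1/2})$ from Lemma~\ref{lem:lambda} and Cauchy--Schwarz, the linear part contributes $O_\PP(N^{-1}) = o_\PP(N^{-1/2})$.

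For the remainder, I would bound
\begin{equation*}
\left|\tfrac{1}{N}\sum_{i=1}^N r_i\epsilon_i\right| \;\leq\; \|\widehat\blambda-\blambda^*\|_2\cdot \tfrac{1}{N}\sum_{i=1}^N \|\bB_i\|_2|\epsilon_i|\, I(i\in E_i),
\end{equation*}
and show the empirical average on the right is $o_\PP(1)$. Observe that on $E_i$ one has $|\blambda^{*\textsc{t}}\bB_i| \leq \|\widehat\blambda-\blambda^*\|_2\cdot\|\bB_i\|_2$; since $\|\widehat\blambda-\blambda^*\|_2 \overset{\PP}{\to} 0$, the set $E_i$ has $\PP$-probability tending to zero (for any nondegenerate distribution of $\blambda^{*\textsc{t}}\bB_i$ near zero, which is implicit in Assumption~\ref{asreg}.2), so by dominated convergence the conditional expectation $\EE[\|\bB_i\|_2|\epsilon_i|\,I(E_i)\mid \widehat\blambda]\to 0$. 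Alternatively, and perhaps more cleanly, one can bypass the case analysis entirely by noting that $f_\blambda(\bB_i,\epsilon_i) := -\tfrac{1}{2}\blambda^\textsc{t}\bB_i I(\blambda^\textsc{t}\bB_i<0)\epsilon_i$ forms a VC-type (hence Donsker) class in $\blambda$ with square-integrable envelope, and that $\EE[f_\blambda(\bB_i,\epsilon_i)] = 0$ for every $\blambda$; stochastic equicontinuity at $\blambda^*$ combined with $\widehat\blambda\overset{\PP}{\to}\blambda^*$ then gives $\sqrt{N}(\PP_N - \PP)(f_{\widehat\blambda} - f_{\blambda^*}) = o_\PP(1)$, which is exactly the claim.

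The main obstacle is handling the discontinuity introduced by the indicator $I(\blambda^\textsc{t}\bB_i<0)$: a naive linearization is inadmissible because $\blambda\mapsto f_\blambda$ is not differentiable at points where $\blambda^\textsc{t}\bB_i = 0$. The decomposition above sequesters this non-smoothness into $r_i$, but rigorously controlling $r_i$ either requires a mild regularity condition on the distribution of $\blambda^{*\textsc{t}}\bB_i$ near zero (to bound $\PP(E_i)$) or the empirical-process machinery to bypass pointwise analysis; I would lean toward the Donsker route since it integrates well with the $Z$-estimation framework already used in the variance argument and avoids adding extra assumptions to Assumption~\ref{asreg}.
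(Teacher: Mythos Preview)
Your proposal is correct (the Donsker route in particular goes through cleanly), but it is more elaborate than the paper's argument. The paper's key simplification is to recognize that the weight map $\phi(x):=-\tfrac{1}{2}xI(x<0)=\max(-x/2,0)$ is globally $\tfrac{1}{2}$-Lipschitz, so that
\[
|\widehat w_i - w_i^*| \;=\; \big|\phi(\widehat\blambda^\textsc{t}\bB_i) - \phi(\blambda^{*\textsc{t}}\bB_i)\big| \;\le\; \tfrac{1}{2}\,\|\widehat\blambda - \blambda^*\|_2\,\|\bB_i\|_2
\]
without any case analysis on the sign of $\blambda^\textsc{t}\bB_i$. The paper then bounds $N^{-1}\sum_i(\widehat w_i - w_i^*)\epsilon_i$ by $\tfrac{1}{2}\|\widehat\blambda-\blambda^*\|_2$ times $N^{-1}\sum_i\|\bB_i\|_2\epsilon_i$, applies a Lyapunov CLT to the latter mean-zero sum to obtain $O_\PP(N^{-1/2})$, and multiplies by $\|\widehat\blambda-\blambda^*\|_2=O_\PP(N^{-1/2})$ from Lemma~\ref{lem:lambda} to conclude $O_\PP(N^{-1})$. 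Your linear-plus-remainder decomposition and the stochastic-equicontinuity machinery are thus heavier than needed: the Lipschitz bound already absorbs the indicator discontinuity uniformly, so isolating the sign-change set $E_i$ is unnecessary. What your route buys is an explicit first-order expansion and a self-contained treatment that does not depend on the (somewhat informal) passage from the termwise Lipschitz bound to the signed sum $\sum_i\|\bB_i\|_2\epsilon_i$; the paper's argument is shorter and uses only elementary tools.
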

\begin{proof}[Proof of Lemma~\ref{lem:weightdiff}]
The function $\phi(x)=-xI(x<0)/2$ is continuous, differentiable, and Lipschitz. Specifically, for any $x,y$, we have $|\phi(x)-\phi(y)|\leq|x-y|/2$. In addition, the weights satisfy
$w_i^*=\left[\blambda^{*\textup{T}}\mathbf{B}_{i}I\left(\blambda^{*\textup{T}}\mathbf{B}_{i}\geq0\right)-\blambda^{*\textup{T}}\mathbf{B}_{i}\right]/2=\phi\left(\blambda^{*\textup{T}}\mathbf{B}_{i}\right)$ and similarly $\widehat w_i=\phi\left(\widehat{\blambda}^\textup{T}_N\mathbf{B}_{i}\right)$. Therefore, by the Lipschitz property of $\phi(\cdot)$ and the Cauchy-Schwarz inequality, we have 
$$|\widehat w_i - {w}_i^*|\leq |(\widehat{\blambda}_N-{\blambda}^*)^\textup{T}\mathbf{B}_{i}|/2\leq||\widehat{\blambda}_N-{\blambda}^*||_2||\mathbf{B}_{i}||_2/2.$$
Hence, we have
$$\frac{1}{N}\sum_{i=1}^N (\widehat w_i - {w}_i^*)\epsilon_i\leq\frac{||\widehat{\blambda}_N-{\blambda}^*||_2}{2N}\sum_{i=1}^N ||\mathbf{B}_{i}||_2\epsilon_i.$$

By Assumption~\ref{asreg}.3, we have
\begin{align*}
	&\EE\left[||\mathbf{B}_{i}||_2\epsilon_i\right]
	=\EE\left[\EE\left(||\mathbf{B}_{i}||_2\epsilon_i \ \middle|\  \bX,\bZ\right)\right]
	=\EE\left[||\mathbf{B}_{i}||_2\EE(\epsilon_i)\right]=0,\\
	&\text{Var}\left(||\mathbf{B}_{i}||_2\epsilon_i \right)
	=\EE\left[\EE\left(||\mathbf{B}_{i}||_2^2\epsilon_i^2 \  \middle|\   \bX,\bZ\right)\right]
	=\EE\left[||\mathbf{B}_{i}||_2^2\EE(\epsilon_i^2)\right] 
	=\sigma^2_i\EE\left[||\mathbf{B}_{i}||_2^2\right]<\infty,\\
	&\EE\left[\left|||\mathbf{B}_{i}||_2\epsilon_i\right|^3\right]
	=\EE\left[\EE\left(||\mathbf{B}_{i}||_2^3 |\epsilon_i|^3 \  \middle|\   \bX,\bZ\right)\right]
	=\EE\left[||\mathbf{B}_{i}||_2^3\EE(|\epsilon_i|^3)\right].
\end{align*}
Since $(\bX_i, \bZ_i)$, $i=1,\dots,N$ are identically distributed, the Lyapunov's condition $$\lim_{N\to\infty}\sum_{i=1}^N\EE\left[\left|||\mathbf{B}_{i}||_2\epsilon_i\right|^3\right]/\left(\sum_{i=1}^N \text{Var}\left(||\mathbf{B}_{i}||_2\epsilon_i \right) \right)^{3/2}=0$$ is satisfied by Assumption~\ref{asreg}.5.

Since $(\bX_i, \bZ_i, \epsilon_i)$, $i=1,\dots,N$, are independent, by Lyapunov central limit theorem and Slutsky's Theorem, we have $$\frac{1}{\sqrt{N}}\sum_{i=1}^N ||\mathbf{B}_{i}||_2\epsilon_i\rightarrow \cN\left(0,\EE\left[||\mathbf{B}_{i}||_2^2\right]\bar \sigma^2\right).$$
Therefore, we know $$\frac{1}{N}\sum_{i=1}^N ||\mathbf{B}_{i}||_2\epsilon_i=O_\PP\left(N^{-1/2}\right).$$

In addition, by Lemma~\ref{lem:lambda}, we have $||\widehat{\blambda}_N-{\blambda}^*||_2=O_\PP\left(N^{-1/2}\right)$ under Assumption~\ref{asreg}.

Putting everything together, we have $$\frac{1}{N}\sum_{i=1}^N (\widehat w_i - {w}_i^*)\epsilon_i\leq\frac{||\widehat{\blambda}_N-{\blambda}^*||_2}{2N}\sum_{i=1}^N ||\mathbf{B}_{i}||_2\epsilon_i=O_\PP\left(N^{-1}\right)=o_\PP\left(N^{-1/2}\right).$$
\end{proof}

\begin{lemma}\label{lem:error}
Let ${\blambda}^*=\text{argmax}_{\blambda}\EE[\psi_i(\blambda)],$ where $\psi_i(\blambda)=-\frac{1}{4}(\boldsymbol{\lambda}^\textup{T}\mathbf{B}_{i})^2I(\boldsymbol{\lambda}^\textup{T}\mathbf{B}_{i}<0)-\boldsymbol{\lambda}^\textup{T}\mathbf{b}_i.$ 
Let $\gamma_{i}^*=\blambda^{*\textup{T}}\mathbf{B}_{i}I\left(\blambda^{*\textup{T}}\mathbf{B}_{i}\geq0\right)$ denote entries of $\bgamma^*$. Let $w_i^*=(\gamma_{i}^*-{\blambda}^{*\textup{T}}\mathbf{B}_{i})/2$ denote the corresponding factorial weights. Under Assumptions~\ref{asreg}.3, \ref{asreg}.4 and \ref{asreg}.5, we have
\small
$$\sqrt{N}\left\{\frac{1}{N}\sum_{i=1}^N  w_i^*\epsilon_i+\frac{1}{2^{K-1}}\sum_{\mathbf{z}\in\mathcal{Z}}g_{{\cK}\mathbf{z}}\left(\frac{1}{N}\sum_{i=1}^N\mathbb{E}[Y_i(\mathbf{z})\mid\mathbf{X}_i]\right)-\frac{1}{2^{K-1}}\sum_{\mathbf{z}\in\mathcal{Z}}g_{{\cK}\mathbf{z}}\mathbb{E}[Y_i(\mathbf{z})]\right\} \rightarrow \cN\left(0,\sigma^2_{\epsilon }+\sigma^2_{\cK}\right),$$
\normalsize
where $$\sigma^2_{\epsilon }=\frac{\bar\sigma^2}{4}{\blambda}^{*\textup{T}}\EE\left[\bB_i\bB_i^\textup{T} I({\blambda}^{*\textup{T}}\bB_i<0)\right]{\blambda}^* \quad \text{ and } \quad \sigma^2_{\cK}={\rm Var}\left(\frac{1}{2^{K-1}}\sum_{\mathbf{z}\in\mathcal{Z}}g_{{\cK}\mathbf{z}}\mathbb{E}[Y_i(\mathbf{z})\mid\mathbf{X}_i]\right).$$
\end{lemma}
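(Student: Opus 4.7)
The plan is to recognize the displayed quantity as $N^{-1/2}\sum_{i=1}^N U_i$ for suitable mean-zero independent summands and apply the Lyapunov central limit theorem. Define $T_i := \frac{1}{2^{K-1}}\sum_{\bz\in\cZ} g_{\cK\bz}\EE[Y_i(\bz)\mid\bX_i]$ and set
$$U_i := w_i^* \epsilon_i + \bigl(T_i - \EE[T_i]\bigr),$$
so that the expression inside the braces of the lemma equals $N^{-1}\sum_{i=1}^N U_i$. Because $w_i^* = (\gamma_i^* - {\blambda}^{*\textsc{t}}\bB_i)/2 = -{\blambda}^{*\textsc{t}}\bB_i I({\blambda}^{*\textsc{t}}\bB_i<0)/2$ is a deterministic function of $(\bX_i,\bZ_i)$ and $T_i$ is a deterministic function of $\bX_i$, the independence of $(\bX_i,\bZ_i,\epsilon_i)$ across $i$ yields independence of the $U_i$'s (and the $T_i$'s are identically distributed).

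The next step is to verify $\EE[U_i]=0$ and compute the variance. By Assumption~\ref{asreg}.3 ($\epsilon_i \perp (\bX,\bZ)$ with $\EE[\epsilon_i]=0$), iterated expectations give $\EE[w_i^*\epsilon_i]=0$ and, crucially, also $\EE[w_i^*\epsilon_i(T_i-\EE T_i)]=0$; the second summand in $U_i$ is centered by construction. Factorizing the variance yields
$$\text{Var}(w_i^*\epsilon_i) = \sigma_i^2\,\EE[(w_i^*)^2] = \tfrac{\sigma_i^2}{4}\,{\blambda}^{*\textsc{t}}\EE[\bB_i\bB_i^\textsc{t} I({\blambda}^{*\textsc{t}}\bB_i<0)]{\blambda}^* = \sigma^2_{\epsilon i},$$
while $\text{Var}(T_i)=\sigma^2_\cK$ by definition. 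Hence $\text{Var}(U_i)=\sigma^2_{\epsilon i}+\sigma^2_\cK$, and Assumption~\ref{asreg}.4 gives $N^{-1}\sum_i\text{Var}(U_i)\to\sigma^2_\epsilon+\sigma^2_\cK$, with $\text{Var}(\sum_i U_i) = N\sigma^2_\cK + \sum_i\sigma^2_{\epsilon i}$.

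Finally, I would invoke the Lyapunov CLT for independent, non-identically distributed summands. Observing that $(\gamma_i^* - {\blambda}^{*\textsc{t}}\bB_i)\epsilon_i/2 = w_i^*\epsilon_i$, the numerator in the second display of Assumption~\ref{asreg}.5 is precisely $\sum_i\EE[|U_i|^3]$ and the denominator is $(\text{Var}(\sum_i U_i))^{3/2}$; thus Assumption~\ref{asreg}.5 is exactly the Lyapunov ratio condition for $\{U_i\}$. Lyapunov's theorem then gives $\sum_i U_i/\sqrt{\text{Var}(\sum_i U_i)}\to\cN(0,1)$, and Slutsky's theorem together with $N^{-1}\text{Var}(\sum_i U_i)\to\sigma^2_\epsilon+\sigma^2_\cK$ produces the stated convergence.

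The main obstacle is not probabilistic but essentially bookkeeping: one must carefully match the rather bulky third-moment expression in Assumption~\ref{asreg}.5 to the standard Lyapunov ratio for the constructed $U_i$, and verify that the cross-covariance between $w_i^*\epsilon_i$ and $T_i-\EE T_i$ vanishes so that the joint variance splits cleanly into $\sigma^2_{\epsilon i}+\sigma^2_\cK$. No supplementary work is needed to control the discrepancy between $\widehat{\blambda}_N$ and ${\blambda}^*$, since the present lemma concerns only the oracle sum built from ${\blambda}^*$; that discrepancy is handled separately in Lemma~\ref{lem:weightdiff}, which together with Lemma~\ref{lem:error} delivers Theorem~\ref{thm:rootn}.
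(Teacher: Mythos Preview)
Your proposal is correct and follows essentially the same approach as the paper: defining the summand as $w_i^*\epsilon_i$ plus the centered version of $\tfrac{1}{2^{K-1}}\sum_{\bz}g_{\cK\bz}\EE[Y_i(\bz)\mid\bX_i]$, using Assumption~\ref{asreg}.3 to show the cross term has zero covariance, identifying Assumption~\ref{asreg}.5 as the Lyapunov ratio, and finishing with Slutsky. The only cosmetic difference is that the paper names the two pieces $U_i$ and $V_i$ separately and computes $\text{Cov}(U_i,V_i)=0$ explicitly, whereas you bundle them into a single $U_i$ from the start.
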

\begin{proof}[Proof of Lemma~\ref{lem:error}]
First,
\begin{align*}
	&\frac{1}{N}\sum_{i=1}^N  w_i^*\epsilon_i+\frac{1}{2^{K-1}}\sum_{\mathbf{z}\in\mathcal{Z}}g_{{\cK}\mathbf{z}}\left(\frac{1}{N}\sum_{i=1}^N\mathbb{E}[Y_i(\mathbf{z})\mid\mathbf{X}_i]\right)-\frac{1}{2^{K-1}}\sum_{\mathbf{z}\in\mathcal{Z}}g_{{\cK}\mathbf{z}}\mathbb{E}[Y_i(\mathbf{z})]\\
	=&\frac{1}{N}\sum_{i=1}^N (U_i+V_i) -\frac{1}{2^{K-1}}\sum_{\mathbf{z}\in\mathcal{Z}}g_{{\cK}\mathbf{z}}\mathbb{E}[Y_i(\mathbf{z})],
\end{align*}
where $U_i=w_i^*\epsilon_i$ and $V_i=\frac{1}{2^{K-1}}\sum_{\mathbf{z}\in\mathcal{Z}}g_{{\cK}\mathbf{z}}\mathbb{E}[Y_i(\mathbf{z})\mid\mathbf{X}_i]$.

By Assumption~\ref{asreg}.3,
\begin{align*}
	\EE\left(U_i \right)
	=\EE\left[\EE\left(w_i^*\epsilon_i\   \middle|\  \bX,\bZ\right)\right]
	=\EE\left[w_i^*\EE(\epsilon_i)\right]=0,
\end{align*}
and  
\begin{align*}
	\text{Var}\left(U_i \right)
	=\EE\left[\EE\left(w_i^{*2}\epsilon_i^2 \  \middle|\   \bX,\bZ\right)\right]
	=\EE\left[w_i^{*2}\EE(\epsilon_i^2)\right] 
	=\sigma^2_i\EE\left[w_i^{*2}\right] 
	=\frac{\sigma^2_i}{4}{\blambda}^{*\textup{T}}\EE\left[\bB_i\bB_i^\textup{T} I({\blambda}^{*\textup{T}}\bB_i<0)\right]{\blambda}^*
	=\sigma^2_{\epsilon i}.
\end{align*}

In addition,  by Assumption~\ref{asreg}.3, we can verify that
$$\mathbb{E}[V_i]=\frac{1}{2^{K-1}}\sum_{\mathbf{z}\in\mathcal{Z}}g_{{\cK}\mathbf{z}}\mathbb{E}[Y_i(\mathbf{z})] \ \text{and} \  {\rm Var}(V_i)={\rm Var}\left(\frac{1}{2^{K-1}}\sum_{\mathbf{z}\in\mathcal{Z}}g_{{\cK}\mathbf{z}}\mathbb{E}[Y_i(\mathbf{z})\mid\mathbf{X}_i]\right)={\sigma}^2_{\cK}<\infty.$$

Furthermore, we have
\begin{align*}
	\text{Cov}\left(U_i, V_i \right)
	&=\frac{1}{2^{K-1}}\EE\left[\EE\left(w_i^*\epsilon_i \sum_{\mathbf{z}\in\mathcal{Z}}g_{{\cK}\mathbf{z}}\mathbb{E}[Y_i(\mathbf{z})\mid\mathbf{X}_i]\  \middle|\   \bX,\bZ\right)\right]\\
	&=\frac{1}{2^{K-1}}\EE\left[w_i^* \sum_{\mathbf{z}\in\mathcal{Z}}g_{{\cK}\mathbf{z}}\mathbb{E}[Y_i(\mathbf{z})\mid\mathbf{X}_i]\EE(\epsilon_i)\right] \\
	&=0.
\end{align*}

Since $(\bX_i, \bZ_i, \epsilon_i)$, $i=1,\dots,N$, are independent, under Assumption~\ref{asreg}.5, Lyapunov central limit theorem and Slutsky's Theorem suggest 
$$\sqrt{N}\left\{\frac{1}{N}\sum_{i=1}^N (U_i+V_i)-\frac{1}{2^{K-1}}\sum_{\mathbf{z}\in\mathcal{Z}}g_{{\cK}\mathbf{z}}\mathbb{E}[Y_i(\mathbf{z})] \right\}\rightarrow \cN\left(0,\bar\sigma^2_{\epsilon }+\sigma^2_{\cK}\right).$$

\end{proof}

Next, we prove Theorem~1.
\begin{proof}[Proof of Theorem~1]
Without loss of generality, we consider the estimates for the factorial effect $\tau_{\cK}$ with contrast vector $\mathbf{g}_{\cK}$. That is, we focus on $$\widehat\tau_{\cK}-\tau_{\cK}=\left(\widehat\tau^+_{\cK}-\tau^+_{\cK}\right)-\left(\widehat\tau^-_{\cK}-\tau^-_{\cK}\right).$$
We first consider the estimation error for the positive part $\tau^+_{\cK}$. Specifically, we have
\begin{align*}
	\widehat\tau^+_{\cK}-\tau^+_{\cK}&=
	\underbrace{\left\{\frac{1}{N}\sum_{i=1}^N \widehat w_iA_{i{\cK}}^+Y_i^\textup{obs}-\frac{1}{2^{K-1}}\sum_{\mathbf{z}\in\mathcal{Z}}g_{{\cK}\mathbf{z}}^+\left(\frac{1}{N}\sum_{i=1}^N\mathbb{E}[Y_i(\mathbf{z})\mid\mathbf{X}_i]\right)\right\}}_{\text{part (\star)}}\\
	&\quad\quad+\frac{1}{2^{K-1}}\sum_{\mathbf{z}\in\mathcal{Z}}g_{{\cK}\mathbf{z}}^+\left(\frac{1}{N}\sum_{i=1}^N\mathbb{E}[Y_i(\mathbf{z})\mid\mathbf{X}_i]\right)-\frac{1}{2^{K-1}}\sum_{\mathbf{z}\in\mathcal{Z}}g_{{\cK}\mathbf{z}}^+\mathbb{E}[Y_i(\mathbf{z})].
\end{align*}
Here, part (\star) can be simplified due to our balance constraints, which gives
\begin{align*}
	&\frac{1}{N}\sum_{i=1}^N \widehat w_i A_{i{\cK}}^+Y_i^\textup{obs}-\frac{1}{2^{K-1}}\sum_{\mathbf{z}\in\mathcal{Z}}g_{{\cK}\mathbf{z}}^+\left(\frac{1}{N}\sum_{i=1}^N\mathbb{E}[Y_i(\mathbf{z})\mid\mathbf{X}_i]\right)\\
	=&\frac{1}{N}\sum_{i=1}^N \widehat w_i A_{i{\cK}}^+(\mathbb{E}[Y_i(\mathbf{Z}_i)\mid\mathbf{X}_i]+\epsilon_i)-\frac{1}{2^{K-1}}\frac{1}{N}\sum_{i=1}^N\sum_{\mathbf{z}\in\mathcal{Z}}g_{{\cK}\mathbf{z}}^+\mathbb{E}[Y_i(\mathbf{z})\mid\mathbf{X}_i]\\
	=&\frac{1}{N}\underbrace{\left\{\sum_{i=1}^N \widehat w_i A_{i{\cK}}^+\mathbb{E}[Y_i(\mathbf{Z}_i)\mid\mathbf{X}_i]-\frac{1}{2^{K-1}}\sum_{i=1}^N\sum_{\mathbf{z}\in\mathcal{Z}}g_{{\cK}\mathbf{z}}^+\mathbb{E}[Y_i(\mathbf{z})\mid\mathbf{X}_i]\right\}}_{=0\text{ by construction}}+\frac{1}{N}\sum_{i=1}^N \widehat w_iA_{i{\cK}}^+\epsilon_i\\
	=&\frac{1}{N}\sum_{i=1}^N \widehat w_iA_{i{\cK}}^+\epsilon_i.
\end{align*}
Therefore, we have 
$$\widehat\tau^+_{\cK}-\tau^+_{\cK}=\frac{1}{N}\sum_{i=1}^N \widehat w_iA_{i{\cK}}^+\epsilon_i+\frac{1}{2^{K-1}}\sum_{\mathbf{z}\in\mathcal{Z}}g_{{\cK}\mathbf{z}}^+\left(\frac{1}{N}\sum_{i=1}^N\mathbb{E}[Y_i(\mathbf{z})\mid\mathbf{X}_i]\right)-\frac{1}{2^{K-1}}\sum_{\mathbf{z}\in\mathcal{Z}}g_{{\cK}\mathbf{z}}^+\mathbb{E}[Y_i(\mathbf{z})].$$
Similarly, for the estimation error for the negative part $\tau^-_{\cK}$. have 
$$\widehat\tau^-_{\cK}-\tau^-_{\cK}=\frac{1}{N}\sum_{i=1}^N \widehat w_iA_{i{\cK}}^-\epsilon_i+\frac{1}{2^{K-1}}\sum_{\mathbf{z}\in\mathcal{Z}}g_{{\cK}\mathbf{z}}^-\left(\frac{1}{N}\sum_{i=1}^N\mathbb{E}[Y_i(\mathbf{z})\mid\mathbf{X}_i]\right)-\frac{1}{2^{K-1}}\sum_{\mathbf{z}\in\mathcal{Z}}g_{{\cK}\mathbf{z}}^-\mathbb{E}[Y_i(\mathbf{z})].$$
Putting these two parts back together, we obtain
$$\widehat\tau_{\cK}-\tau_{\cK}=\underbrace{\frac{1}{N}\sum_{i=1}^N \widehat w_i\epsilon_i}_{\text{part I}}+\underbrace{\frac{1}{2^{K-1}}\sum_{\mathbf{z}\in\mathcal{Z}}g_{{\cK}\mathbf{z}}\left(\frac{1}{N}\sum_{i=1}^N\mathbb{E}[Y_i(\mathbf{z})\mid\mathbf{X}_i]\right)-\frac{1}{2^{K-1}}\sum_{\mathbf{z}\in\mathcal{Z}}g_{{\cK}\mathbf{z}}\mathbb{E}[Y_i(\mathbf{z})]}_{\text{part II}}.$$
With $m(w_i)=w_i^2$, the optimal factorial weights $\widehat{\bw}$ can be written as
$$\widehat w_i=(\widehat\gamma_{iN}-\widehat{\blambda}^\textup{T}_N\mathbf{B}_{i})/2,$$  
where $$\widehat\gamma_{iN}=\widehat{\blambda}^\textup{T}_N\mathbf{B}_{i}I\left(\widehat{\blambda}^\textup{T}_N\mathbf{B}_{i}\geq0\right)$$
and $\widehat{\blambda}_N$ is the $M$-estimator such that 
$$\widehat{\blambda}_N=\text{argmax}_{\blambda} \sum_{i=1}^N \psi_i(\blambda), \text{ with }\psi_i(\blambda)=-\frac{1}{4}(\boldsymbol{\lambda}^\textup{T}\mathbf{B}_{i})^2I(\boldsymbol{\lambda}^\textup{T}\mathbf{B}_{i}<0)-\boldsymbol{\lambda}^\textup{T}\mathbf{b}_i.$$
In addition, with $\blambda^*=\text{argmax}_{\blambda}\EE[\psi_i(\blambda)],$ we define the corresponding $\bgamma^*$ as $$\gamma_{i}^*=\blambda^{*\textup{T}}\mathbf{B}_{i}I\left(\blambda^{*\textup{T}}\mathbf{B}_{i}\geq0\right),$$ and the corresponding factorial weights $\bw^*$ as $$w_i^*=(\gamma_{i}^*-\blambda^{*\textup{T}}\mathbf{B}_{i})/2.$$  

Then we can further decompose part I into two parts
\begin{align*}
	\frac{1}{N}\sum_{i=1}^N \widehat w_i\epsilon_i=\underbrace{\frac{1}{N}\sum_{i=1}^N (\widehat w_i - w_i^*)\epsilon_i}_{\text{part IA}}+\underbrace{\frac{1}{N}\sum_{i=1}^N  w_i^*\epsilon_i}_{\text{part IB}}.
\end{align*}
By Lemma~\ref{lem:weightdiff}, part IA is $o_\PP\left(N^{-1/2}\right)$. 

Next, we consider part IB and part II by looking at their sum directly. By Lemma~\ref{lem:error}, we have 
\small
$$\sqrt{N}\left\{\frac{1}{N}\sum_{i=1}^N  w_i^*\epsilon_i+\frac{1}{2^{K-1}}\sum_{\mathbf{z}\in\mathcal{Z}}g_{{\cK}\mathbf{z}}\left(\frac{1}{N}\sum_{i=1}^N\mathbb{E}[Y_i(\mathbf{z})\mid\mathbf{X}_i]\right)-\frac{1}{2^{K-1}}\sum_{\mathbf{z}\in\mathcal{Z}}g_{{\cK}\mathbf{z}}\mathbb{E}[Y_i(\mathbf{z})]\right\} \rightarrow \cN\left(0,\sigma^2_{\epsilon}+\sigma^2_{\cK}\right),$$
\normalsize
where $$\sigma^2_{\epsilon }=\frac{\bar\sigma^2_i}{4}\blambda^{*\textup{T}}\EE\left[\bB_i\bB_i^\textup{T} I(\blambda^{*\textup{T}}\bB_i<0)\right]\blambda^* \quad \text{ and } \quad \sigma^2_{{\cK}}=\frac{1}{2^K}{\rm Var}\left(\sum_{\mathbf{z}\in\mathcal{Z}}g_{{\cK}\mathbf{z}}\mathbb{E}[Y_i(\mathbf{z})\mid\mathbf{X}_i]\right).$$

Putting everything together, we can conclude that $$\sqrt{N}(\widehat\tau_{\cK}-\tau_{\cK})\rightarrow \cN\left(0,\sigma^2_{\epsilon}+\sigma^2_{\cK}\right),$$

\end{proof}
In the case of homoskedasticity, where the conditional variance of $Y_i$ given $\bX_i$, $\sigma_i^2$, equals $\sigma^2$, we can omit Assumptions~\ref{asreg}.4 (automatically satisfied) and \ref{asreg}.5 and still achieve asymptotic normality.

\subsection{Proof of Theorem~2}\label{app:thm2}
\begin{proof}[Proof of Theorem~2]
To estimate the asymptotic variance of $\widehat{\tau}_{\cK}$, we consider a combined parameter $\btheta=(\blambda^\textup{T},t)^\textup{T}$. Suppose problem~(5) has $P$ balance constraints, then $\btheta$ is a vector of length $P+1$. In addition, we define $\eta_i(\btheta):\mathbb{R}^{P+1} \rightarrow \mathbb{R}^{P+1}$ as $$\eta_i(\btheta)=\begin{pmatrix}
	&\eta_{i,1}(\btheta)&\\
	&\vdots&\\
	&\eta_{i,P+1}(\btheta)&
\end{pmatrix}=
\begin{pmatrix}
	&\psi_i'(\blambda)&\\
	&w_i(A_{i{\cK}}^+-A_{i{\cK}}^-)Y_i^\textup{obs}-t&
\end{pmatrix},
$$
where $\psi_i'(\blambda)=-\frac{1}{2}\mathbf{B}_{i}\mathbf{B}_{i}^\textup{T}\blambda I(\boldsymbol{\lambda}^\textup{T}\mathbf{B}_{i}<0)-\mathbf{b}_i$ is the first derivative of $\psi_i(\blambda)=-\frac{1}{4}(\boldsymbol{\lambda}^\textup{T}\mathbf{B}_{i})^2I(\boldsymbol{\lambda}^\textup{T}\mathbf{B}_{i}<0)-\boldsymbol{\lambda}^\textup{T}\mathbf{b}_i,$
and the weights $w_i=(\gamma_i-\lambda^\textup{T}\bB_i)/2=-\blambda^\textup{T}\bB_i I(\blambda^\textup{T}\bB_i<0)/2$ is also a function of $\blambda$. 

Recall that $$\widehat{\blambda}={\rm argmax}_{\blambda} \frac{1}{N}\sum_{i=1}^N\psi_i(\blambda) \quad{\rm and } \quad \widehat{\tau}_{\cK}=\frac{1}{N}\sum_{i=1}^Nw_i A_{i{\cK}}^+Y_i^\textup{obs}-\frac{1}{N}\sum_{i=1}^Nw_i A_{i{\cK}}^-Y_i^\textup{obs}.$$
Then we have $\widehat{\btheta}=(\widehat{\blambda}^\textup{T},\widehat\tau_{\cK})^\textup{T}$ is the $Z$-estimator that satisfies the following estimating equation $$\frac{1}{N}\sum_{i=1}^N \eta_i(\widehat{\btheta})=0.$$
In addition, we have ${\btheta}^*=({\blambda}^{*\textup{T}},\tau_{\cK})^\textup{T}$ satisfies $$ \EE\left[\eta_i(\btheta^*)\right]=0,$$
where $${\blambda}^*={\rm argmax}_{\blambda} \EE\left[\psi_i(\blambda)\right], \quad w_i^*=-\blambda^{*\textup{T}}\bB_i I(\blambda^{*\textup{T}}\bB_i<0)/2,$$
and  $${\tau}_{\cK}=\EE\left[w_i^* (A_{i{\cK}}^+-A_{i{\cK}}^-)Y_i^\textup{obs}\right] $$ holds by the construction of balance constraints.
Next, for $p=1,\dots,P+1$, applying the Mean Value Theorem to the $Z$-estimating equation for $\eta_{i,p}(\widehat{\btheta})$, we have 
$$0=\frac{1}{N}\sum_{i=1}^N \eta_{i,p}(\widehat{\btheta})=\frac{1}{N}\sum_{i=1}^N \eta_{i,p}({\btheta}^*)+\frac{1}{N}\sum_{i=1}^N \eta_{i,p}'(\widetilde{\btheta}_p)\left(\widehat{\btheta}-\btheta^*\right),$$where $\widetilde{\btheta}_p$ is between $\widehat{\btheta}$ and $\btheta^*$, and $ \eta_{i,p}'(\widetilde{\btheta}_p)$ is the first derivative of $\eta_{i,p}(\btheta)$ evaluated at $\btheta=\widetilde{\btheta}_p$.

Let $\bH_p=\frac{1}{N}\sum_{i=1}^N \eta_{i,p}'(\widetilde{\btheta}_p)$ and $\bH=
(\bH_1^\textup{T}, \dots, \bH_{P+1}^\textup{T})
^\textup{T}$. Let $\bH_p^*=\EE\left[\eta_{i,p}'({\btheta}^*)\right]$ and $\bH^*=
(\bH_1^{*\textup{T}}, \dots, \bH_{P+1}^{*\textup{T}})
^\textup{T}$. We have
$$\bH\left(\widehat{\btheta}-\btheta^*\right)=-\frac{1}{N}\sum_{i=1}^N \eta_i({\btheta}^*).$$
Equivalently, we have $$\widehat{\btheta}-\btheta^*=-\bH^{-1}\frac{1}{N}\sum_{i=1}^N \eta_i({\btheta}^*).$$
By Lemma~\ref{lem:lambda} and Theorem~1, we have $$||\widetilde{\btheta}_p-\btheta^*||_2\leq ||\widehat{\btheta}-\btheta^*||_2=O_\PP\left(N^{{-1/2}}\right), \textrm{ for all } p=1,\dots,P+1.$$
For $q=1,\dots,P+1$, let $\bH_{p,q}$ denote the $q$th entry of $\bH_{p}$ and $\eta_{i,p,q}'$ denote $q$th entry of $\eta_{i,p}'$. Then, for $p=1,\dots,P+1$, by the Mean Value Theorem again for each entry of $\bH_{p}$, we have 
$$\bH_{p,q }=\frac{1}{N}\sum_{i=1}^N \eta_{i,p,q}'(\widetilde{\btheta}_p)=\frac{1}{N}\sum_{i=1}^N \eta_{i,p,q}'({\btheta}^*)+\frac{1}{N}\sum_{i=1}^N \eta_{i,p,q}''(\mathring{\btheta}_{pq})\left(\widetilde{\btheta}_p-\btheta^*\right)=\frac{1}{N}\sum_{i=1}^N \eta_{i,p,q}'({\btheta}^*)+O_\PP\left(N^{-1/2}\right),$$where $\mathring{\btheta}_{pq}$ is between $\widetilde{\btheta}$ and $\btheta^*$, and $ \eta_{i,p,q}''(\mathring{\btheta}_{pq})$ is the derivative of $\eta_{i,p,q}'(\btheta)$ evaluated at $\btheta=\mathring{\btheta}_{pq}$.

By the Law of Large Numbers and Continuous Mapping Theorem, we have $$\bH^{-1}\overset{\PP}{\rightarrow}(\bH^*)^{-1}.$$

By the Central Limit Theorem, we have $$\frac{1}{\sqrt{N}}\sum_{i=1}^N \eta_i({\btheta}^*)\rightarrow \cN\left(0,{\rm Var}(\eta_i({\btheta}^*))\right)=\cN\left(0,\EE\left[\eta_i({\btheta}^*)\eta_i({\btheta}^{*})^\textup{T}\right]\right).$$

By Slutsky's Theorem, we have 
$$\sqrt{N}\left(\widehat{\btheta}-\btheta^*\right)\rightarrow \cN\left(0,\bH^{*-1}\EE\left[\eta_i({\btheta}^*)\eta_i({\btheta}^{*})^\textup{T}\right]\bH^{*-1}\right).$$

Finally, we are going to show that $\left[\frac{1}{N}\sum_{i=1}^N\eta_i'(\widehat{\btheta})\right]^{-1}\left[\frac{1}{N}\sum_{i=1}^N\eta_i(\widehat{\btheta})\eta_i(\widehat{\btheta})^\textup{T}\right]\left[\frac{1}{N}\sum_{i=1}^N\eta_i'(\widehat{\btheta})\right]^{-1}$ is a consistent estimator for the asymptotic variance.

By the Mean Value Theorem again, we have for $p,q=1,\dots,P+1$,
$$\frac{1}{N}\sum_{i=1}^N \eta_{i,p,q}'(\widehat{\btheta})=\frac{1}{N}\sum_{i=1}^N \eta_{i,p,q}'({\btheta}^*)+\frac{1}{N}\sum_{i=1}^N \eta_{i,p,q}''(\breve{\btheta}_{pq})\left(\widehat{\btheta}-\btheta^*\right)=\frac{1}{N}\sum_{i=1}^N \eta_{i,p,q}'({\btheta}^*)+O_\PP\left(N^{-1/2}\right),$$where $\breve{\btheta}_{pq}$ is between $\widehat{\btheta}$ and $\btheta^*$. By the Law of Large Numbers and Continuous Mapping Theorem, we have $$\left[\frac{1}{N}\sum_{i=1}^N \eta_i'(\widehat{\btheta})\right]^{-1}\overset{\PP}{\rightarrow}(\bH^*)^{-1}.$$
In addition, we can verify that each entry of the matrix $\eta_i(\btheta)\eta_i(\btheta)^\textup{T}$ is locally Lipschitz around $\btheta^*$. Let $\mathbf{L}$ denote the matrix of corresponding Lipschitz constants. We have
$$\frac{1}{N}\sum_{i=1}^N\eta_i(\widehat{\btheta})\eta_i(\widehat{\btheta})^\textup{T}-\frac{1}{N}\sum_{i=1}^N\eta_i({\btheta}^*)\eta_i({\btheta}^*)^\textup{T}\leq \mathbf{L}||\widehat{\btheta}-\btheta^*||_2=O_\PP\left(N^{-1/2}\right).$$
By the Law of Large Numbers, we have $$\frac{1}{N}\sum_{i=1}^N\eta_i(\widehat{\btheta})\eta_i(\widehat{\btheta})^\textup{T}\overset{\PP}{\rightarrow}\EE\left[\eta_i({\btheta}^*)\eta_i({\btheta}^{*})^\textup{T}\right].$$
Putting everything together and applying Continuous Mapping Theorem, we have 
$$\left[\frac{1}{N}\sum_{i=1}^N\eta_i'(\widehat{\btheta})\right]^{-1}\left[\frac{1}{N}\sum_{i=1}^N\eta_i(\widehat{\btheta})\eta_i(\widehat{\btheta})^\textup{T}\right]\left[\frac{1}{N}\sum_{i=1}^N\eta_i'(\widehat{\btheta})\right]^{-1}\overset{\PP}{\rightarrow}\bH^{*-1}\EE\left[\eta_i({\btheta}^*)\eta_i({\btheta}^{*})^\textup{T}\right]\bH^{*-1}.$$
Since we are only interested in the asymptotic variance of $\sqrt{N}(\widehat{\tau}_{\cK}-\tau_{\cK})$, we only need to estimate the last estimate of the $\bH^{*-1}\EE\left[\eta_i({\btheta}^*)\eta_i({\btheta}^{*})^\textup{T}\right]\bH^{*-1}$, which can be consistently estimated by the last element of $\left[\frac{1}{N}\sum_{i=1}^N\eta_i'(\widehat{\btheta})\right]^{-1}\left[\frac{1}{N}\sum_{i=1}^N\eta_i(\widehat{\btheta})\eta_i(\widehat{\btheta})^\textup{T}\right]\left[\frac{1}{N}\sum_{i=1}^N\eta_i'(\widehat{\btheta})\right]^{-1}$. Specifically, we only need to consider the last row of $\left[\frac{1}{N}\sum_{i=1}^N\eta_i'(\widehat{\btheta})\right]^{-1}$, which is $$L^\textup{T}=\left[\left(\frac{1}{N}\sum_{i=1}^N-\frac{1}{2}\bB_i^\textup{T}(A_{i{\cK}}^+-A_{i{\cK}}^-)Y_i^\textup{obs}I\left(\widehat{\blambda}^\textup{T}\bB_i<0\right)\right)\left(\frac{1}{N}\sum_{i=1}^N-\frac{1}{2}\bB_i\bB_i^\textup{T} I\left(\widehat{\blambda}^\textup{T}\bB_i<0\right)\right)^{-1}, -1\right].$$ Therefore, we have $$L^\textup{T}\left[\frac{1}{N}\sum_{i=1}^N\eta_i(\widehat{\btheta})\eta_i(\widehat{\btheta})^\textup{T}\right]L=\frac{1}{N}\sum_{i=1}^N\left(\eta_i(\widehat{\theta})^\textup{T}L\right)^2$$ as the consistent estimator of the asymptotic variance of $\sqrt{N}(\widehat{\tau}_{\cK}-\tau_{\cK})$.
\end{proof}

\subsection{Proof of Proposition~1}
\begin{proof}[Proof of  Proposition~1]
Let  $(\mathbf{G}_{\textup{u}-}^\textup{T})^{\dagger}=\mathbf{G}_{\textup{u}-}(\mathbf{G}_{\textup{u}-}^\textup{T}\mathbf{G}_{\textup{u}-})^{-1}$. 
We have
\begin{align*}
	\boldsymbol{\tau}_+&=\frac{1}{2^{K-1}}\begin{bmatrix}
		\mathbf{G}_{\textup{o}+} \\
		\mathbf{G}_{\textup{u}+}
	\end{bmatrix}^\textup{T} 
	\begin{bmatrix}
		(\mathbb{E}[Y(\mathbf{z})])_{\mathbf{z}\in\mathcal{Z}_\textup{o}}\\
		(\mathbb{E}[Y(\mathbf{z})])_{\mathbf{z}\in\mathcal{Z}_\textup{u}}
	\end{bmatrix}\\
	&=\frac{1}{2^{K-1}}	\mathbf{G}_{\textup{o}+}^\textup{T}	(\mathbb{E}[Y(\mathbf{z})])_{\mathbf{z}\in\mathcal{Z}_\textup{o}}+	\mathbf{G}_{\textup{u}+}^\textup{T}	\left\{-(\mathbf{G}_{\textup{u}-}^\textup{T})^{\dagger}\mathbf{G}_{\textup{o}-}^\textup{T}(\mathbb{E}[Y(\mathbf{z})])_{\mathbf{z}\in\mathcal{Z}_\textup{o}}\right\}\\
	&=\frac{1}{2^{K-1}}	\left\{\mathbf{G}_{\textup{o}+}^\textup{T}	-\mathbf{G}_{\textup{u}+}^\textup{T}	(\mathbf{G}_{\textup{u}-}^\textup{T})^{\dagger}\mathbf{G}_{\textup{o}-}^\textup{T}\right\}(\mathbb{E}[Y(\mathbf{z})])_{\mathbf{z}\in\mathcal{Z}_\textup{o}}\\
\end{align*}
\end{proof}
	\bibliographystyle{apalike}
	
	\bibliography{bib}

\begin{thebibliography}{}

\bibitem[Amemiya, 1985]{amemiya1985advanced}
Amemiya, T. (1985).
\newblock {\em Advanced econometrics}.
\newblock Harvard university press.

\bibitem[Arif and Shah, 2007]{arif2007association}
Arif, A.~A. and Shah, S.~M. (2007).
\newblock Association between personal exposure to volatile organic compounds
  and asthma among us adult population.
\newblock {\em International Archives of Occupational and Environmental
  Health}, 80:711--719.

\bibitem[Athey et~al., 2018]{athey2018approximate}
Athey, S., Imbens, G.~W., and Wager, S. (2018).
\newblock Approximate residual balancing: debiased inference of average
  treatment effects in high dimensions.
\newblock {\em Journal of the Royal Statistical Society: Series B},
  80(4):597--623.

\bibitem[Batterman et~al., 2014]{batterman2014personal}
Batterman, S., Su, F.-C., Li, S., Mukherjee, B., and Jia, C. (2014).
\newblock Personal exposure to mixtures of volatile organic compounds: modeling
  and further analysis of the riopa data.
\newblock {\em Research Report (Health Effects Institute)}, Jun(181):3--63.

\bibitem[Ben-Michael et~al., 2024]{ben2023multilevel}
Ben-Michael, E., Feller, A., and Hartman, E. (2024).
\newblock Multilevel calibration weighting for survey data.
\newblock {\em Political Analysis}, 32:65--83.

\bibitem[Berenson et~al., 1998]{berenson1998association}
Berenson, G.~S., Srinivasan, S.~R., Bao, W., Newman, W.~P., Tracy, R.~E., and
  Wattigney, W.~A. (1998).
\newblock Association between multiple cardiovascular risk factors and
  atherosclerosis in children and young adults.
\newblock {\em New England Journal of Medicine}, 338(23):1650--1656.

\bibitem[Berry et~al., 2015]{berry2015platform}
Berry, S.~M., Connor, J.~T., and Lewis, R.~J. (2015).
\newblock The platform trial: an efficient strategy for evaluating multiple
  treatments.
\newblock {\em Journal of the American Medical Association},
  313(16):1619--1620.

\bibitem[Box et~al., 1978]{box1978statistics}
Box, G.~E., Hunter, W.~H., and Hunter, S. (1978).
\newblock {\em Statistics for experimenters}.
\newblock New York: John Wiley and Sons.

\bibitem[Box and Meyer, 1986]{box1986analysis}
Box, G.~E. and Meyer, R.~D. (1986).
\newblock An analysis for unreplicated fractional factorials.
\newblock {\em Technometrics}, 28(1):11--18.

\bibitem[Branson et~al., 2016]{branson2016improving}
Branson, Z., Dasgupta, T., and Rubin, D.~B. (2016).
\newblock Improving covariate balance in {$2^K$} factorial designs via
  rerandomization with an application to a {New York City Department of
  Education High School Study}.
\newblock {\em The Annals of Applied Statistics}, 10(4):1958–1976.

\bibitem[Bruns-Smith et~al., 2025]{bruns2025augmented}
Bruns-Smith, D., Dukes, O., Feller, A., and Ogburn, E.~L. (2025).
\newblock Augmented balancing weights as linear regression.
\newblock {\em Journal of the Royal Statistical Society Series B: Statistical
  Methodology}, page qkaf019.

\bibitem[Bruns-Smith and Feller, 2022]{bruns2022outcome}
Bruns-Smith, D. and Feller, A. (2022).
\newblock Outcome assumptions and duality theory for balancing weights.
\newblock In {\em International Conference on Artificial Intelligence and
  Statistics}, pages 11037--11055. PMLR.

\bibitem[Byar et~al., 1993]{byar1993incomplete}
Byar, D.~P., Herzberg, A.~M., and Tan, W.-Y. (1993).
\newblock Incomplete factorial designs for randomized clinical trials.
\newblock {\em Statistics in Medicine}, 12(17):1629--1641.

\bibitem[Chan et~al., 2016]{chan2016globally}
Chan, K. C.~G., Yam, S. C.~P., and Zhang, Z. (2016).
\newblock Globally efficient non-parametric inference of average treatment
  effects by empirical balancing calibration weighting.
\newblock {\em Journal of the Royal Statistical Society: Series B},
  78(3):673--700.

\bibitem[Chattopadhyay and Zubizarreta, 2022]{chattopadhyay2022implied}
Chattopadhyay, A. and Zubizarreta, J.~R. (2022).
\newblock On the implied weights of linear regression for causal inference.
\newblock {\em arXiv preprint arXiv:2104.06581v3}.

\bibitem[Cohn et~al., 2023]{cohn2023balancing}
Cohn, E.~R., Ben-Michael, E., Feller, A., and Zubizarreta, J.~R. (2023).
\newblock Balancing weights for causal inference.
\newblock In {\em Handbook of Matching and Weighting Adjustments for Causal
  Inference}, pages 293--312. Chapman and Hall/CRC.

\bibitem[Dasgupta et~al., 2015]{dasgupta2015causal}
Dasgupta, T., Pillai, N.~S., and Rubin, D.~B. (2015).
\newblock Causal inference from {$2^K$} factorial designs by using potential
  outcomes.
\newblock {\em Journal of the Royal Statistical Society: Series B},
  77(4):727--753.

\bibitem[Dong, 2015]{dong2015using}
Dong, N. (2015).
\newblock Using propensity score methods to approximate factorial experimental
  designs to analyze the relationship between two variables and an outcome.
\newblock {\em American Journal of Evaluation}, 36(1):42--66.

\bibitem[Egami and Imai, 2019]{egami2018causal}
Egami, N. and Imai, K. (2019).
\newblock Causal interaction in factorial experiments: Application to conjoint
  analysis.
\newblock {\em Journal of the American Statistical Association},
  114(536):529--540:.

\bibitem[Fan et~al., 2023]{fan2023optimal}
Fan, J., Imai, K., Liu, H., Ning, Y., and Yang, X. (2023).
\newblock Optimal covariate balancing conditions in propensity score
  estimation.
\newblock {\em Journal of Business \& Economic Statistics}, 41(1):97--110.

\bibitem[Fong and Imai, 2014]{fong2014covariate}
Fong, C. and Imai, K. (2014).
\newblock Covariate balancing propensity score for general treatment regimes.
\newblock {\em Princeton Manuscript}, pages 1--31.

\bibitem[Hahn, 1998]{hahn1998role}
Hahn, J. (1998).
\newblock On the role of the propensity score in efficient semiparametric
  estimation of average treatment effects.
\newblock {\em Econometrica}, 66(2):315--331.

\bibitem[Hainmueller, 2012]{hainmueller2012entropy}
Hainmueller, J. (2012).
\newblock Entropy balancing for causal effects: A multivariate reweighting
  method to produce balanced samples in observational studies.
\newblock {\em Political Analysis}, 20(1):25--46.

\bibitem[Han and Rubin, 2021]{han2021contrast}
Han, S. and Rubin, D.~B. (2021).
\newblock Contrast-specific propensity scores.
\newblock {\em Biostatistics \& Epidemiology}, 5(1):1--8.

\bibitem[Hirano and Imbens, 2001]{hirano2001estimation}
Hirano, K. and Imbens, G.~W. (2001).
\newblock Estimation of causal effects using propensity score weighting: An
  application to data on right heart catheterization.
\newblock {\em Health Services and Outcomes Research Methodology},
  2(3-4):259--278.

\bibitem[Hirano et~al., 2003]{hirano2003efficient}
Hirano, K., Imbens, G.~W., and Ridder, G. (2003).
\newblock Efficient estimation of average treatment effects using the estimated
  propensity score.
\newblock {\em Econometrica}, 71(4):1161--1189.

\bibitem[Imai and Van~Dyk, 2004]{imai2004causal}
Imai, K. and Van~Dyk, D.~A. (2004).
\newblock Causal inference with general treatment regimes: Generalizing the
  propensity score.
\newblock {\em Journal of the American Statistical Association},
  99(467):854--866.

\bibitem[Imbens, 2000]{imbens2000role}
Imbens, G.~W. (2000).
\newblock The role of the propensity score in estimating dose-response
  functions.
\newblock {\em Biometrika}, 87(3):706--710.

\bibitem[Johnson, 1999]{johnson1999review}
Johnson, B.~L. (1999).
\newblock A review of the effects of hazardous waste on reproductive health.
\newblock {\em American Journal of Obstetrics and Gynecology}, 181(1):S12--S16.

\bibitem[Kang and Schafer, 2007]{kang2007demystifying}
Kang, J.~D. and Schafer, J.~L. (2007).
\newblock Demystifying double robustness: A comparison of alternative
  strategies for estimating a population mean from incomplete data.
\newblock {\em Statistical Science}, 22(4):523--539.

\bibitem[Kuhn et~al., 2019]{kuhn2019simulation}
Kuhn, J., Sheldrick, R.~C., Broder-Fingert, S., Chu, A., Fortuna, L., Jordan,
  M., Rubin, D., and Feinberg, E. (2019).
\newblock Simulation and minimization: technical advances for factorial
  experiments designed to optimize clinical interventions.
\newblock {\em BMC Medical Research Methodology}, 19:1--9.

\bibitem[Li and Li, 2019]{li2019propensity}
Li, F. and Li, F. (2019).
\newblock Propensity score weighting for causal inference with multiple
  treatments.
\newblock {\em The Annals of Applied Statistics}, 13(4):2389--2415.

\bibitem[Li et~al., 2018]{li2018balancing}
Li, F., Morgan, K.~L., and Zaslavsky, A.~M. (2018).
\newblock Balancing covariates via propensity score weighting.
\newblock {\em Journal of the American Statistical Association},
  113(521):390--400.

\bibitem[Li et~al., 2020]{li2020rerandomization}
Li, X., Ding, P., and Rubin, D.~B. (2020).
\newblock Rerandomization in {$2^K$} factorial experiments.
\newblock {\em The Annals of Statistics}, 48(1):43--63.

\bibitem[Lopez and Gutman, 2017]{lopez2017estimation}
Lopez, M.~J. and Gutman, R. (2017).
\newblock Estimation of causal effects with multiple treatments: a review and
  new ideas.
\newblock {\em Statistical Science}, 32(3):432--454.

\bibitem[Lu, 2016]{lu2016covariate}
Lu, J. (2016).
\newblock Covariate adjustment in randomization-based causal inference for
  {$2^K$} factorial designs.
\newblock {\em Statistics \& Probability Letters}, 119:11--20.

\bibitem[McCaffrey et~al., 2013]{mccaffrey2013tutorial}
McCaffrey, D.~F., Griffin, B.~A., Almirall, D., Slaughter, M.~E., Ramchand, R.,
  and Burgette, L.~F. (2013).
\newblock A tutorial on propensity score estimation for multiple treatments
  using generalized boosted models.
\newblock {\em Statistics in Medicine}, 32(19):3388--3414.

\bibitem[Montero-Montoya et~al., 2018]{montero2018volatile}
Montero-Montoya, R., L{\'o}pez-Vargas, R., and Arellano-Aguilar, O. (2018).
\newblock Volatile organic compounds in air: sources, distribution, exposure
  and associated illnesses in children.
\newblock {\em Annals of Global Health}, 84(2):225.

\bibitem[Mukerjee et~al., 2018]{mukerjee2018using}
Mukerjee, R., Dasgupta, T., and Rubin, D.~B. (2018).
\newblock Using standard tools from finite population sampling to improve
  causal inference for complex experiments.
\newblock {\em Journal of the American Statistical Association},
  113(522):868--881.

\bibitem[Park et~al., 2014]{park2014environmental}
Park, S.~K., Tao, Y., Meeker, J.~D., Harlow, S.~D., and Mukherjee, B. (2014).
\newblock Environmental risk score as a new tool to examine multi-pollutants in
  epidemiologic research: an example from the nhanes study using serum lipid
  levels.
\newblock {\em PloS one}, 9(6):e98632.

\bibitem[Pashley and Bind, 2023]{pashley2022causal}
Pashley, N.~E. and Bind, M.-A.~C. (2023).
\newblock Causal inference for multiple treatments using fractional factorial
  designs.
\newblock {\em Canadian Journal of Statistics}, 51(2):444--468.

\bibitem[Patel et~al., 2013]{patel2013systematic}
Patel, C.~J., Rehkopf, D.~H., Leppert, J.~T., Bortz, W.~M., Cullen, M.~R.,
  Chertow, G.~M., and Ioannidis, J.~P. (2013).
\newblock Systematic evaluation of environmental and behavioural factors
  associated with all-cause mortality in the united states national health and
  nutrition examination survey.
\newblock {\em International Journal of Epidemiology}, 42(6):1795--1810.

\bibitem[Resa and Zubizarreta, 2020]{de2020direct}
Resa, M. d. l.~A. and Zubizarreta, J.~R. (2020).
\newblock Direct and stable weight adjustment in non-experimental studies with
  multivalued treatments: analysis of the effect of an earthquake on
  post-traumatic stress.
\newblock {\em Journal of the Royal Statistical Society: Series A},
  183(4):1387--1410.

\bibitem[Rillig et~al., 2019]{rillig2019role}
Rillig, M.~C., Ryo, M., Lehmann, A., Aguilar-Trigueros, C.~A., Buchert, S.,
  Wulf, A., Iwasaki, A., Roy, J., and Yang, G. (2019).
\newblock The role of multiple global change factors in driving soil functions
  and microbial biodiversity.
\newblock {\em Science}, 366(6467):886--890.

\bibitem[Robins et~al., 2000]{robins2000marginal}
Robins, J.~M., Hernan, M.~A., and Brumback, B. (2000).
\newblock Marginal structural models and causal inference in epidemiology.
\newblock {\em Epidemiology}, 11(5):550--560.

\bibitem[Robins et~al., 1994]{robins1994estimation}
Robins, J.~M., Rotnitzky, A., and Zhao, L. (1994).
\newblock Estimation of regression coefficients when some regressors are not
  always observed.
\newblock {\em Journal of the American Statistical Association},
  89(427):846--866.

\bibitem[Rosenbaum, 2002]{rosenbaum2002covariance}
Rosenbaum, P.~R. (2002).
\newblock Covariance adjustment in randomized experiments and observational
  studies.
\newblock {\em Statistical Science}, 17(3):286--327.

\bibitem[Rosenbaum and Rubin, 1983]{rosenbaum1983central}
Rosenbaum, P.~R. and Rubin, D.~B. (1983).
\newblock The central role of the propensity score in observational studies for
  causal effects.
\newblock {\em Biometrika}, 70(1):41--55.

\bibitem[Rubin, 2008]{rubin2008objective}
Rubin, D.~B. (2008).
\newblock For objective causal inference, design trumps analysis.
\newblock {\em The Annals of Applied Statistics}, 2(3):808--840.

\bibitem[Shu et~al., 2023]{shu2023robust}
Shu, D., Han, P., Hennessy, S., and Miano, T.~A. (2023).
\newblock Robust causal inference of drug-drug interactions.
\newblock {\em Statistics in Medicine}, 42(7):970--992.

\bibitem[Soriano et~al., 2023]{soriano2023interpretable}
Soriano, D., Ben-Michael, E., Bickel, P.~J., Feller, A., and Pimentel, S.~D.
  (2023).
\newblock Interpretable sensitivity analysis for balancing weights.
\newblock {\em Journal of the Royal Statistical Society Series A: Statistics in
  Society}, 186(4):707--721.

\bibitem[Van~der Vaart, 2000]{van2000asymptotic}
Van~der Vaart, A.~W. (2000).
\newblock {\em Asymptotic statistics}.
\newblock Cambridge University Press.

\bibitem[Wang and Zubizarreta, 2020]{wang2020minimal}
Wang, Y. and Zubizarreta, J.~R. (2020).
\newblock Minimal dispersion approximately balancing weights: asymptotic
  properties and practical considerations.
\newblock {\em Biometrika}, 107(1):93--105.

\bibitem[Weichenthal et~al., 2011]{weichenthal2011traffic}
Weichenthal, S., Kulka, R., Dubeau, A., Martin, C., Wang, D., and Dales, R.
  (2011).
\newblock Traffic-related air pollution and acute changes in heart rate
  variability and respiratory function in urban cyclists.
\newblock {\em Environmental Health Perspectives}, 119(10):1373--1378.

\bibitem[Wong and Chan, 2018]{wong2018kernel}
Wong, R.~K. and Chan, K. C.~G. (2018).
\newblock Kernel-based covariate functional balancing for observational
  studies.
\newblock {\em Biometrika}, 105(1):199--213.

\bibitem[Woodruff et~al., 2011]{woodruff2011environmental}
Woodruff, T.~J., Zota, A.~R., and Schwartz, J.~M. (2011).
\newblock Environmental chemicals in pregnant women in the united states:
  Nhanes 2003--2004.
\newblock {\em Environmental Health Perspectives}, 119(6):878--885.

\bibitem[Wu and Hamada, 2021]{wu2011experiments}
Wu, C.~J. and Hamada, M.~S. (2021).
\newblock {\em Experiments: planning, analysis, and optimization}.
\newblock New York: John Wiley \& Sons.

\bibitem[Yu and Wang, 2024]{yu2020treatment}
Yu, R. and Wang, S. (2024).
\newblock Treatment effects estimation by uniform transformer.
\newblock {\em International Conference on Learning Representations}.

\bibitem[Zhao and Ding, 2022]{zhao2022regression}
Zhao, A. and Ding, P. (2022).
\newblock Regression-based causal inference with factorial experiments:
  estimands, model specifications and design-based properties.
\newblock {\em Biometrika}, 109(3):799--815.

\bibitem[Zhao and Ding, 2023]{zhao2023covariate}
Zhao, A. and Ding, P. (2023).
\newblock Covariate adjustment in multiarmed, possibly factorial experiments.
\newblock {\em Journal of the Royal Statistical Society: Series B},
  85(1):1--23.

\bibitem[Zhao et~al., 2018]{zhao2018randomization}
Zhao, A., Ding, P., Mukerjee, R., and Dasgupta, T. (2018).
\newblock Randomization-based causal inference from split-plot designs.
\newblock {\em The Annals of Statistics}, 46(5):1876--1903.

\bibitem[Zhao, 2019]{zhao2019covariate}
Zhao, Q. (2019).
\newblock Covariate balancing propensity score by tailored loss functions.
\newblock {\em The Annals of Statistics}, 47(2):965--993.

\bibitem[Zhao et~al., 2019]{zhao2019sensitivity}
Zhao, Q., Small, D.~S., and Bhattacharya, B.~B. (2019).
\newblock Sensitivity analysis for inverse probability weighting estimators via
  the percentile bootstrap.
\newblock {\em Journal of the Royal Statistical Society Series B: Statistical
  Methodology}, 81(4):735--761.

\bibitem[Zubizarreta, 2015]{zubizarreta2015stable}
Zubizarreta, J.~R. (2015).
\newblock Stable weights that balance covariates for estimation with incomplete
  outcome data.
\newblock {\em Journal of the American Statistical Association},
  110(511):910--922.

\end{thebibliography}
\end{document}